\documentclass[english,12pt,reqno]{smfart}

\usepackage[OT2,OT1]{fontenc}
\usepackage{linearb}
\usepackage[T1]{fontenc}
\usepackage[latin9]{inputenc}

\usepackage[table]{xcolor}
\usepackage{hyperref}
\usepackage{amsfonts}
\usepackage{amsmath}
\usepackage{amssymb}
\usepackage{amscd}
\usepackage{latexsym}
\usepackage{graphicx}
\usepackage{esint}
\usepackage{rotating}
\usepackage[all]{xy}
\usepackage{tikz}
\usetikzlibrary{calc,decorations.pathreplacing,decorations.markings}
\hypersetup{colorlinks=true,linkcolor=magenta,anchorcolor=green,citecolor=cyan,filecolor=black,menucolor=black,urlcolor=black}

\DeclareMathAlphabet{\mathbbold}{U}{bbold}{m}{n}

\setcounter{secnumdepth}{7}
\setcounter{tocdepth}{2}


\setlength{\topmargin}{.0in}
\setlength{\oddsidemargin}{.25in}
\setlength{\evensidemargin}{.25in}
\setlength{\textheight}{8in}
\setlength{\textwidth}{6.1in}
\setlength{\footnotesep}{\baselinestretch\baselineskip}
\newlength{\abstractwidth}
\setlength{\abstractwidth}{\textwidth}
\addtolength{\abstractwidth}{-6pc}


\newcount\hh
\newcount\mm
\mm=\time
\hh=\time
\divide\hh by 60
\divide\mm by 60
\multiply\mm by 60
\mm=-\mm
\advance\mm by \time
\def\thistime{\number\hh:\ifnum\mm<10{}0\fi\number\mm}

\makeatletter
\newcommand\footnoteref[1]{\protected@xdef\@thefnmark{\ref{#1}}\@footnotemark}
\makeatother

\setcounter{MaxMatrixCols}{10}

\numberwithin{equation}{section}



\def\nn{\nonumber}
\def\su{\circleddash}
\def\cEs{\mathcal E_\circleddash}

\def\Li_#1(#2){\textrm{Li}_{#1}\left(#2\right)}
\def\cLi_#1(#2){\mathcal{L}_{#1}\left(#2\right)}
\def\bLi_#1(#2){\mathbf{L}_{#1}\left(#2\right)}

\def\cI{\mathcal I}
\def\cY{\mathcal Y}
\def\cIs{\mathcal{I}_\circleddash}

\def\ZZ{{\mathbb Z}}
\def\IC{{\mathbb C}}

\def\RR{{\mathbb R}}
\def\IN{{\mathbb N}}
\def\IP{{\mathbb P}}

\def\cM{\mathcal{M}}

\def\cI{\mathcal{I}}
\def\cE{\mathcal{E}}

\def\cV{\mathcal{V}}
\def\cR{\mathcal{R}}

\def\Imm{\Im\textrm{m}}

\def\ay{\mathbf{i}}

\def\um{\underline{m}}



\title[]{\bf Local mirror symmetry and the sunset Feynman integral}

\author{Spencer Bloch}
\address{5765 S. Blackstone Ave., Chicago, IL 60637, USA}
\email{spencer\_bloch@yahoo.com}

\author{Matt Kerr}
\address{Department of Mathematics, Campus Box 1146\\
Washington University in St. Louis\\
St. Louis, MO, 63130, USA}
\email{matkerr@math.wustl.edu}

\author{Pierre Vanhove}
 \address{Department of Applied Mathematics and Theoretical Physics\\
 Wilberforce Road, Cambridge CB3 0WA, UK\hfill\break
Institut de physique th\'eorique, Universit\'e Paris Saclay,
 CEA, CNRS, F-91191 Gif-sur-Yvette}
\email{pierre.vanhove@cea.fr}

\thanks{IPHT-t15/135, DAMTP-2016-1}
\date{\today}

\begin{document}

 \begin{abstract}

We study the sunset Feynman integral defined as the
scalar two-point self-energy at two-loop order in a two dimensional space-time.

\medskip

We firstly compute the Feynman
integral, for arbitrary internal masses, 
in terms of the regulator of a class in the motivic cohomology of 
a 1-parameter family of open elliptic
curves. Using an Hodge theoretic (B-model) approach, we show that the integral is given by a sum of elliptic
dilogarithms evaluated at the divisors determined by the punctures. 

\medskip

Secondly we associate to the sunset elliptic curve a local non-compact Calabi-Yau 3-fold,
obtained as a limit of elliptically fibered compact Calabi-Yau 3-folds.  By considering the limiting mixed Hodge structure of the Batyrev dual A-model, we arrive at an expression for the sunset Feynman integral in terms of the local Gromov-Witten
prepotential of the del Pezzo surface of degree 6.  This expression is obtained by proving a strong form
of local mirror symmetry which identifies this prepotential with the second regulator period of the motivic cohomology class.


\end{abstract}

\maketitle
\newpage
{\hypersetup{linkcolor=black}
\tableofcontents}
\newpage

\addtocontents{toc}{\protect\enlargethispage{2\baselineskip}}
\part{Introduction}
\label{sec:introduction}

\section{Overview and discussion}

\tikzset{->-/.style={decoration={ markings,mark=at position #1 with {\arrow{>}}},postaction={decorate}}}

\begin{center}
\tikzpicture[scale=1.7]
\scope[xshift=-5cm,yshift=-0.4cm]
\draw [draw=black,->-=.6]        (0,0) circle (1 and .7);
\draw [draw=black,->-=.4]        (0,0) circle (1 and .7);
\draw[->-=.5] (-1.5,0) to (-1,0);
\draw[->-=.5] (-1,0) to (1,0);
\draw[->-=.5] (1,0) to (1.5,0);
\draw(0,.5) node{$m_1$};
\draw(0,-.2) node{$m_2$};
\draw(0,-.9) node{$m_3$};
%
%
\draw(1.6,0) node{$K$};
\draw(-1.6,0) node{$K$};
\endscope
\endtikzpicture 
\end{center}

This work concerns the Feynman integral in two dimensional space-time associated to
the sunset graph  in the above figure, given by 

\begin{equation}\label{e:sunsetint}
  \cIs(s):= s\int_{x\geq0\atop y\geq0} \,{dx dy\over s(\xi_1^2x+\xi_2^2y+\xi_3^2)(xy+x+y)-xy} \, .
\end{equation}
Here $\xi_i=m_i/\mu$ ($i=1,2,3$) are positive non vanishing real
numbers, given by the ratios of the internal masses by the arbitrary
infrared scale $\mu$, and $s$ is the inverse of the norm of the external momentum $K^2=\mu^2/s$.
(See~\cite{Bogner:2010kv,Bloch:2013tra,PVstringmath} for
a derivation of \eqref{e:sunsetint} from the usual Feynman representation.)

This integral is a multivalued function of $s$ on $\IC\backslash
[(\xi_1+\xi_2+\xi_3)^{-2},+\infty[$.  In general, the multivalueness of the Feynman integral plays
an important role in physics, as this is imposed by  unitarity of
quantum field theory~\cite{Elvang:2013cua}. A large class of Feynman
integrals can be easily determined from their differential
equations~\cite{Argeri:2007up,Remiddi:2013joa,Henn:2013pwa,Lee:2014ioa,Henn:2014qga,Tancredi:2015pta}, and more generally are associated to
motivic period  integrals~\cite{Brown:2015fyf,Brown:2015qmm}.   

The geometry of the graph hypersurface is a family of elliptic
curves 
$$
\mathcal E_\su:=\{ xyz -
s(\xi_1^2x+\xi_2^2y+\xi_3^2z)(xy+xz+yz)| (x,y,z)\in\IP^2\}\,.
$$
The structure of the motive associated to \eqref{e:sunsetint}, discussed in sections~\ref{sec:motives}
and~\ref{sec:sunsetBmodel}, differs from  the one given in the single
masses case in~\cite{Bloch:2013tra}, because we now have a family of
open elliptic curves, no longer modular, and the motive has a  Kummer  extension quotient.

We show that the sunset Feynman integral is given by (see section~\ref{sec:three-mass})\footnote{It would be interesting to relate this expression to the one using  
multiple polylogarithm presented in~\cite{Adams:2014vja,Adams:2015gva,Adams:2015ydq,Adams:2016vdo}.}
\begin{equation}\label{e:I1}
 \cIs(s)\equiv {\ay \varpi_r   \over\pi} \left(\hat E_2\left(x(P_1)\over
      x(P_2)\right)+\hat E_2\left(x(P_2)\over x(P_3)\right)+\hat
    E_2\left(x(P_3)\over x(P_1)\right)\right)\,
\mod  \textrm{periods}\, ,
\end{equation}
where $\hat E_2(x)$ is the elliptic dilogarithm
\begin{equation}
  \hat E_2(x)= \sum_{n\geq0} \left(\Li_2(q^nx)-\Li_2(-q^nx)\right)
    -\sum_{n\geq1} \left(\Li_2(q^n/x)-\Li_2(-q^n/x)\right)  \, .
\end{equation}
In \eqref{e:I1}, $\hat{E}_2$ is evaluated at the ratios of the images of the points $P_1:=[1:0:0]$,
$P_2:=[0:1:0]$ and $P_3:=[0:0:1]$  in $\mathbb C^\times/q^\ZZ$, where
$\log(q)/(2\pi\ay)$ is the complex
structure given by the period
ratio of the elliptic curve; 
and  $\varpi_r$ is the elliptic curve  period which is  real
on the line $s>(\xi_1+\xi_2+\xi_3)^{-2}$.   

The elliptic dilogarithm $\hat E_2(x)$ is not invariant under $x\to
xq$ (see equation~\eqref{e:Eh2shift}), reflecting the multivalued nature of the
Feynman integral. This was already the case for elliptic polylogarithm expansions of 
the Feynman integrals for the two-loop sunset \cite{Bloch:2013tra} and three-loop banana \cite{BKV}
with equal masses. The result in~\eqref{e:I1} generalizes the expression for the all equal
masses case $\xi_1=\xi_2=\xi_3=1$ given in terms of elliptic dilogarithm in~\cite{Bloch:2013tra}.

The motivic approach in section~\ref{sec:motives} shows how the theory
of motives can yield information about Feynman integrals. In general,
the motive associated to a Feynman integral will depend on a family of
hypersurfaces $X_{m,q} \subset \mathbb P^n$ depending on masses $m$ and
external momenta $q$. The motive at $(m,q)$ is associated to the
cohomology group $H^n(\mathbb P^n-X_{m,q},\Delta)$ where $\Delta$ is the simplex defined by the vanishing of the product of the homogeneous coordinates. A general motivic analysis would begin by a study of $X_{m,q}\cap \Delta$. In simple cases like the sunset, this intersection is manageable and we are able to prove a duality
$$H^n(\mathbb P^n-X_{m,q},\Delta) \cong H^n(\mathbb P^n-\Delta,X_{m,q})(n)^\vee.
$$
The motive on the right is related to the Milnor symbol
$\{x_1,\dotsc,x_n\}$ on $X_{m,q}\cap \mathbb G_m^n$, where the $x_i$ are the
Laurent coordinates on $\mathbb P^n-\Delta = \mathbb G_m^n$.  In the sunset case, this approach identifies the amplitude with an elliptic dilogarithm. A similar attack may be possible for more general graphs, though the above duality will no longer be perfect. The challenge will be to understand the role played by the structure at infinity $X_{m,q}\cap \Delta$.

In part \ref{part:Two}, we revisit the approach of \cite{CKYZ} to local mirror symmetry,
by semi-stably degenerating a family of elliptically-fibered Calabi-Yau 3-folds $\mathrm{X}_{z_0,\underline{z}}$ (defined by \eqref{MKeqnI0})
to a singular compactification $\mathrm{X}_{0,\underline{z}}$ of the local Hori-Vafa 3-fold
\[ Y_{\underline{z}} := \{ 1-s(\xi_1^2 x + \xi_2^2 y +\xi_3^2)(1+x^{-1}+y^{-1}) + uv = 0\}\subset (\mathbb{C}^*)^2 \times \mathbb{C}^2 \, , \]
and using the work of Iritani \cite{Iritani} to compare the asymptotic Hodge theory of this B-model to that of the mirror (elliptically fibered) A-model Calabi-Yau $\mathrm{X}^{\circ}$.

The bulk of section~\ref{sec:Bmodel} is concerned with the proof of the isomorphism
\[ H^3_{lim}(\mathrm{X}_{z_0,\underline{z}} )\supseteq \text{ker}(T_0-I)\cong H_3(Y_{\underline{z}})(-3) \]
of mixed Hodge structures (Theorem \ref{MKthm1}), and the explicit
construction of bases for $H^3(\mathrm{X}_{z_0,\underline{z}})$
resp. $H_3(Y_{\underline{z}})$. This allows us to invoke (in
section~\ref{MKsecIG}) results of\footnote{ \label{fnt:DK}
The numbers of section,
conjecture, theorem and equations refer to the published version
of~\cite{DoranKerr}.} \cite[\S 5]{DoranKerr} to compute, in the $z_0 \to 0$ limit, the invariant periods of $\mathrm{X}$ in terms of ``regulator periods'' $R_0^{(i)},R_1$ associated to a family of algebraic $K_2$-classes on the sunset elliptic curve family $E_{\underline{z}}$.

In \S\ref{sec:Amodel}, we compute Iritani's quantum $\mathbb{Z}$-variation of Hodge structure on the even cohomology of the Batyrev mirror $\mathrm{X}^{\circ}$ of $\mathrm{X}$, writing the periods in terms of its Gromov-Witten invariants (section~\ref{MKsecIIB}) and the monodromy transformations in terms of its intersection theory (section~\ref{MKsecIIC}).  (The monodromies $T_i$ are computed in greater detail than we need, as they will be used to provide geometric realizations of certain monodromy cones in the forthcoming work \cite{KPR}.)  Like $\mathrm{X}$, $\mathrm{X}^{\circ}$ is elliptically fibered, over a toric Fano surface $\mathbb{P}_{\Delta^{\circ}}$, which (for the sunset case) is just the del Pezzo of degree 6.  Under the mirror map $\underline{z} \mapsto \underline{q}(\underline{z})=e^{2\pi\ay \underline{\tau}(\underline{z})}$ (computed in section~\ref{MKsecIID}), we have the isomorphism of A- and B-model $\mathbb{Z}$-variation of Hodge structure 
\[ H^3(\mathrm{X}_{z_0,\underline{z}}) \cong H^{even}(\mathrm{X}^{\circ}_{q_0,\underline{q}}) \,,\]
and taking (the invariant part of) limiting mixed Hodge structure on both sides yields the relation
\[ 2\pi \ay R_1 = R_0^{(1)} R_0^{(2)} + R_0^{(2)} R_0^{(3)} +
R_0^{(1)} R_0^{(3)} - \sum_{\ell_1+\ell_2+\ell_3=\ell>0\atop
  (\ell_1,\ell_2,\ell_3)\in\IN^3\backslash (0,0,0)} \ell\,
N_{\ell_1,\ell_2,\ell_3} \prod_{i=1}^3 Q_i^{\ell_i} \]
between regulator periods and local Gromov-Witten numbers of
$\mathbb{P}_{\Delta^{\circ}}$ (Corollary \ref{MKcor2}).  
The expression is done with respect to the local K\"ahler moduli 
$Q_i=e^{R_0^{(i)}}=\xi_i^2\,\hat Q$ for
$i=1,2,3$ with $\hat Q=\exp(\hat R_0)$ and 
where $\hat R_0$ is the logarithmic Mahler measure
\begin{equation}
 \hat R_0= \ay\pi- \int_{|x|=|y|=1}\hspace{-.6cm}
  \log(s^{-1}-(\xi_1^2x+\xi_2^2y+\xi_3^2)(x^{-1}+y^{-1}+1)) \,{d\log x d\log y\over(2\pi\ay)^2}\,.
\end{equation}

 That is, we prove that $R_1$ is the local Gromov-Witten prepotential of $\mathbb{P}_{\Delta^{\circ}}$, which is Conjecture 5.1\footnoteref{fnt:DK} of \cite{DoranKerr}; this puts the observations on asymptotics of the local Gromov-Witten invariants there (Corollary 5.3 of~\cite{DoranKerr}) on a firm foundation at last.

All of what has just been described is carried out, in sections~\ref{sec:Amodel}-\ref{sec:Bmodel}, in a greater level of generality so that the results described apply to other toric families of elliptic curves in addition to the sunset family.

The connection of all this to the Feynman integral \eqref{e:sunsetint} is given in section~\ref{sec:multi}:  writing $\omega_{\underline{z}}$ for a family of holomorphic 1-forms on $E_{\underline{z}}$, and $R|_{E_{\underline{z}}}$ for the family of 1-currents associated to the family of algebraic $K_2$-classes, we have the equality
\[ \cIs(s) = -s\,\int_{E_{\underline{z}}} R|_{E_{\underline{z}}} \wedge \omega_{\underline{z}} . \]
Proposition \ref{MKpropIII1}  shows this leads to the inhomogeneous Picard-Fuchs equation for
$\cIs$  derived explicitly in section~\ref{sec:elliptic-dilogarithm}.

Remarkably we show that the sunset Feynman integral is given by the Legendre
transform of the regulator period $\hat R_1=R_1$  (see~\eqref{eq:Legendre} and~\eqref{e:LegendreE})
\begin{equation}\label{e:Leg}
 \cIs(s)\simeq -s\,2\pi\ay\pi_0\left(
                                                             \frac{\partial\hat
                                                               R_1}{\partial\hat
                                                               R_0}\hat{R}_0
                                                              -
                                                              \hat{R}_1
                                                              \right)       \,,
\end{equation} 
which implies the expansion
of the Feynman integral in terms of Gromov-Witten numbers (see
sections~\ref{sec:localGW} and~\ref{sec:grom-witt-invar-1})
\begin{equation}\label{e:curiousInt}
  \cIs(s)=-s^2{\partial \hat R_0\over\partial s}\,\left( 3 \hat R_0^3  +\sum_{\ell_1+\ell_2+\ell_3=\ell>0\atop (\ell_1,\ell_2,\ell_3)\in\IN^3\backslash(0,0,0)}\ell (1- \ell
  \log\hat Q) 
  N_{\ell_1,\ell_2,\ell_3}\,\prod_{i=1}^3 \hat Q_i^{\ell_i}\right)\,.
\end{equation}
The local Gromov-Witten numbers $N_{\ell_1,\ell_2,\ell_3}$ can be
expressed in terms of the virtual integer number of degree $\ell$
rational curves  by
$$
  N_{\ell_{1},\ell_{2},\ell_{3}}=\sum_{d|\ell_{1},\ell_{2},\ell_{3}}\frac{1}{d^{3}}n_{\frac{\ell_{1}}{d},\frac{\ell_{2}}{d},\frac{\ell_{3}}{d}}\,.
$$
These numbers are tabulated in sections~\ref{sec:localGW} and~\ref{sec:grom-witt-invar-1}.
In the particular case
of the all equal masses case $\xi_1=\xi_2=\xi_3=1$, the mirror map gives
(see section~\ref{sec:grom-witt-invar-1})
\begin{equation}\label{e:Qtoq}
\hat  Q=-q\, \prod_{ n\geq1}  (1-q^n)^{n \delta(n)}; \qquad \delta(n):=(-1)^{n-1} \,
\left(-3\over n \right)\,,
\end{equation}
where $\left(-3\over n\right)=0,1,-1$ for $n\equiv 0,1,2\mod 3$.
The   modularity of the family of sunset elliptic curves allows us to 
relates the sum of elliptic dilogarithms in $q$ of~\cite{Bloch:2013tra} to
the Gromov-Witten expansion in $\hat Q$, and implies the Legendre transform
relation~\eqref{e:Leg}.
Stienstra has already noticed in~\cite{Stienstra,Stienstra:2005wz} the
similarity between the mirror symmetry transformation
in~\eqref{e:Qtoq}  and the ones between A-models of local Calabi-Yau
and dimer models~\cite{Okounkov:2003sp}  for the topological vertex
description of the B-model~\cite{Aganagic:2003db,Aganagic:2003qj}.
Theorem~3.5 of~\cite{Kenyon:2003uj}  shows that the partition
function of the dimer model \emph{is} the Mahler measure of the
Laurent polynomial defining the local Calabi-Yau model.
 In~\cite{Stienstra:2005wz} Stienstra constructed a dimer model associated
to the all equal masses sunset elliptic curve $\xi_1=\xi_2=\xi_3=1$. 
In the case of unequal masses there is no modularity, and it is 
surprising that  an analytic continuation
of a sum of elliptic dilogarithms displays such relation to the local
Gromov-Witten prepotential.

Special type of Feynman integrals for topological strings have been used to compute the local  
Gromov-Witten prepotential~\cite{Hori:2003ic}.
But  our analysis leads to a different kind of result, firstly
because  the sunset Feynman
integral is the Legendre transform~\eqref{e:Leg} of the local
Gromov-Witten prepotential, secondly because this Feynman graph
is not obviously associated to world-sheet graphs of  a topological
string. 
Our results extend to the three-loop banana graph and the
four-loop banana graph, leading to 4-fold and 5-folds
Calabi-Yau  respectively (cf. section~5 of~\cite{DoranKerr}).
The strong similarity of our  analysis with the dimer
models  suggests that one could expect more connection between
Gromov-Witten prepotential and 
(massive) quantum field theory Feynman integrals. 
We expect that this approach to Feynman
integrals can shed some new light on the relation to string theory
along the lines of the results of \cite{ABBF}.

\section{Plan of the paper}

The plan of the paper is the following. In part~\ref{part:One}, we 
analyse the sunset Feynman integral~\eqref{e:sunsetint}. In
 section~\ref{sec:open-elliptic-curve} we describe 
the geometry of the sunset family of elliptic curve and in
section~\ref{sec:PFderivation} derive
the Picard-Fuchs equation following Griffiths's
approach in~\cite{Griffiths} for deriving the Picard-Fuchs equation
from the cohomology of smooth projective hyperspace defined by
rational form in $\IP^2$.  In section~\ref{sec:elliptic-dilogarithm}
we derive the expression~\eqref{e:I1} of the sunset integral in terms
of elliptic dilogarithm. In section~\ref{sec:single-mass-case} we show
how to reproduce the all equal masses result of~\cite{Bloch:2013tra} and
 section~\ref{sec:three-mass} contains numerical verification of the
 three different masses case.  We give a proof of these results using a
 motivic approach in section~\ref{sec:motives}.

Part~\ref{part:Two} of the paper deals with the mirror symmetry
construction. In section~\ref{sec:Bmodel}
we describe the degeneration from a compact Calabi-Yau 3-fold $\mathrm{X}$ to the local Hori-Vafa
model $\mathrm{Y}$, and show in theorem \ref{MKthm1} that  the third
homology of $\mathrm{Y}$ matches the invariant part of the limiting mixed
Hodge structure of $H^{3}(\mathrm{X})$.
In section~\ref{sec:Amodel} we describe the variation of Hodge structure arising
on the A-model obtained by considering the Batyrev mirror of
$\mathrm{X}$.   By comparing the limiting mixed Hodge structures  of the A-model and B-model, we
prove in theorem \ref{MKthm2}  a strong form of local mirror symmetry -- equality of variations of
$\mathbb Q$-mixed Hodge structure. The particular case of
the sunset integral is discussed in section~\ref{sec:multi}.

In the appendix~\ref{sec:theta-functions} we recall the main
properties of Jacobi theta functions, and in the
appendix~\ref{sec:coeff} we give the detailed coefficients entering
the derivation of the Picard-Fuchs equation in section~\ref{sec:PFderivation}.

\medskip
\centerline{\bf Acknowledgements}
\medskip

The research of PV has received funding the ANR
grant reference QST 12 BS05 003 01, and the CNRS grants PICS number
6430. 
PV is partially supported by   a fellowship funded by the French
Government at Churchill College, Cambridge, the European Research
Council under the European Community's Seventh Framework Programme
(FP7/2007-2013) ERC grant agreement no. [247252], and the STFC grant ST/L000385/1.
MK was partially supported by NSF Grant DMS-1361147, and by the Fund for Mathematics.


%

\theoremstyle{plain}
\newtheorem{thm}{\protect\theoremname}[section]
  \theoremstyle{remark}
  \newtheorem{rem}[thm]{\protect\remarkname}
 \newtheorem{remark}[thm]{\protect\remarkname}
  \theoremstyle{plain}
  \newtheorem{cor}[thm]{\protect\corollaryname}
  \theoremstyle{plain}
  \newtheorem{prop}[thm]{\protect\propositionname}
\newtheorem{lem}[thm]{\protect\lemmaname}

 \providecommand{\corollaryname}{Corollary}
  \providecommand{\propositionname}{Proposition}
  \providecommand{\remarkname}{Remark}
\providecommand{\theoremname}{Theorem}
\providecommand{\lemmaname}{Lemma}

\def\Li_#1(#2){\textrm{Li}_{#1}\left(#2\right)}

\def\phis{\phi_\su}
\def\Phis{\Phi_\su}
\def\cYs{\cY_\su}

\part{The elliptic dilogarithm}\label{part:One}
\section{The sunset Feynman integral}
\label{sec:3masses-integral}

The sunset Feynman integral is 
\begin{equation}
  \label{e:IsunsetSym}
  \cIs(s)=  -s\,\int_{\Delta} \Omega_\su(s)
 \,.
\end{equation}
where  the domain of integration is
\begin{equation}\label{e:Ddef}
  \Delta=\{(x,y,z)\in \mathbb P^2| x,y,z\geq0\}  \,,
\end{equation}
and
\begin{equation}
  \label{e:OmegaDef}
  \Omega_\su(s):=  {x dy\wedge dz+y dz\wedge  dx+z dx\wedge dy\over
    xyz \left(1-s\phis\right)  }\,.
\end{equation}
where we have set
\begin{equation}
  \label{e:defphi}
\phis=(\xi_1^2x+\xi_2^2y+\xi_3^2z)(x^{-1}+y^{-1}+z^{-1})\,.
\end{equation}
Where $\xi_i=m_i/\mu$ for $i=1,2,3$ are non-vanishing positive real
numbers given by the ratio of the  internal masses parameters $m_i$ 
and an infrared scale $\mu$.   In this work we assume that  none of the masses vanish. 
As function of $1/s:=K^2/\mu^2$ the integral is a multivalued function on
the complex plane minus a line
$\IC\backslash[(\xi_1+\xi_2+\xi_3)^2,+\infty[$.

In this first part of the paper we show that this integral is an
elliptic dilogarithm. We give to derivations on by a direct
computation and second one based a  motivic analysis.

\subsection{The sunset open elliptic curve}
\label{sec:open-elliptic-curve}

For generic values of the parameters the polar part of
$\Omega_\su(s)$ defines an open with marked points elliptic curve 
\begin{equation}\label{e:Esunset}
  \cEs:=\left\{xyz-s(\xi_1^2x+\xi_2^2y+\xi_3^2z)(xy+xz+yz)=0| (x,y,z)\in \mathbb P^2  \right\}\,.
\end{equation}
The discriminant  is 
 \begin{equation}
   \label{e:Disc}
   \Delta= 16\,s^{-6} M_6^2 \prod_{i=1}^4 (1-s\mu_i^2) \,,
 \end{equation}
and the $J$-invariant is 
 \begin{equation}
\label{e:Jinvariant} J=
-{(\prod_{i=1}^4 (1-s\mu_i^2)+16 s^3 \prod_{i=1}^3 \xi_i^2)^3\over s^6\, 
  M_6^2 \prod_{i=1}^4 (1-s\mu_i^2) }\,,
 \end{equation}
with 
\begin{align}
  \mu_1&:=-\xi_1+\xi_2+\xi_3,\qquad \mu_2:=\xi_1-\xi_2+\xi_3,\nn\\
   \mu_3&:=\xi_1+\xi_2-\xi_3,\qquad \mu_4:=\xi_1+\xi_2+\xi_3\,,
	\label{e:mui}
\end{align}
and
\begin{equation}\label{e:MassDef}
 M_2:=\xi_1^2+\xi_2^2+\xi_3^2,\,
       M_4:=\xi_1^2\xi_2^2+\xi_1^2\xi_3^2+\xi_2^2\xi_3^2,\, M_6:=\xi_1^2\xi_2^2\xi_3^2\,.   
\end{equation}
For generic values of the masses $\xi_1\neq \xi_2\neq \xi_3$ 
there are six singular fibers:   at $s=0$ of type $I_6$, at $s=\infty$
of type $I_2$
and for $1\leq i\leq 4$ at $s^{-1}=\mu_i$  of type~$I_1$. 

We recall that  for the all equal masses case $\xi_1=\xi_2=\xi_3=1$ there are
only four singular fibers of type $I_2$ for $s=\infty$,  $I_3$ for $s=1$, $I_9$
for $s=1/9$ and $I_6$ for $s=0$~\cite{Bloch:2013tra}.

If we introduce the Hauptmodul $u$
\begin{equation}
  \label{e:udef}
  u:={(1-sM_2)^2-4 s^2 M_4\over \sqrt{16s^3M_6}}\,,
\end{equation}
the $J$-invariant takes the form
\begin{equation}
  \label{eq:9}
  J:= 256\, {(3-u^2)^3  \over 4-u^2}\,.
\end{equation}

We introduce
$q=\exp(2\pi\ay\tau)$ with $\tau=\varpi_c/\varpi_r$ the
ratio of the complex period 
 $\varpi_c$ and period $\varpi_r$ is the  real period on the real
axis $s>(\xi_1+\xi_2+\xi_3)^{-2}$.   We assume that $\varpi_c$ has a
positive imaginary part so that $|q|<1$ and $\tau$ is in the upper half-plane.

From the usual parametrization of the $J$-invariant in terms of
theta-functions (see Appendix~\ref{sec:theta-functions}) we deduce
that the Hauptmodul $u$ is given by the three roots
\begin{equation}
  \label{e:uhaupt}
 u_{a,b}\in \left\{u_{3,4}={\theta_3^4+\theta_4^4\over \theta_3^2\theta_4^2},
 u_{2,3}=- {\theta_3^4+\theta_2^4\over \theta_3^2\theta_2^2}, u_{2,4}=i {\theta_2^4-\theta_4^4\over \theta_2^2\theta_4^2} \right\}\,.
\end{equation}
The action of $SL(2,\ZZ)$ leaves invariant the $J$-invariant but
 rotates the three roots.  The subgroup $\Gamma$  of $SL(2,\ZZ)$
 generated by $\tau\to\tau+2$ and $\tau\to \tau/(1-2\tau)$ (see~\cite{chand})
 \begin{equation}
 \Gamma=\left\{\begin{pmatrix}a & b \cr c & d\end{pmatrix} \in
   SL(2,\ZZ) | \begin{pmatrix}a     & b \cr c & d\end{pmatrix}
   = \begin{pmatrix} 1 & 0 \cr 0 & 1\end{pmatrix}  \mod 2\right\}\,,
   \end{equation}
leaves invariant the square of each 
 individual roots $u_{a,b}^2$  for given $a,b$. 

\medskip

For each pair $(a,b)$ labelling the Hauptmodul in~\eqref{e:uhaupt} the  real period $\varpi_r$ is then given  in terms of the
theta constants (see Appendix~\ref{sec:theta-functions} for
definitions and conventions)
\begin{equation}\label{e:wrq}
  \varpi_r= \pi {\theta_a \, \theta_b \over (s^{-1} M_6) ^{1\over4} }\,.
\end{equation}
%

\subsubsection{The points}
\label{sec:points}

The intersection of the elliptic curve and the domain
of integration $\Delta$ are the three points
\begin{equation}\label{e:Ppoints}
    \partial \Delta\cap \mathcal E_\su= \{ P_1=[1,0,0],\ P_2=[0,1,0],\  P_3=[0,0,1]\}\,.
\end{equation}

We will consider as well the other three points 
\begin{equation}
  \label{e:Qpoints}
  Q_1=[0,-\xi_3^2,\xi_2^2],\qquad  Q_2=[-\xi_3^2,0,\xi_1^2],\qquad  Q_3=[-\xi_2^2,\xi_1^2,0]\,,
\end{equation}
arising from the intersection of the sunset elliptic curve and the
lines defining the domain of integration $\Delta$.

In order to map these points to $\mathcal E_\su\simeq\IC^\times/q^\ZZ$
where $C^\times$ is the multiplicative group of non-zero complex numbers, we use the following  Weierstrass model for the sunset elliptic
curve
\begin{equation}\label{e:E3}
\zeta^2\eta=\sigma \, ( s^{-1}M_6\eta^2+u\,\sqrt{s^{-1} \, M_6}\,\sigma\eta+\sigma^2 )\,.
\end{equation}

For any choice of $(a,b,c)=(3,4,2), (2,3,4), (2,4,3)$, a  point on the elliptic curve  with coordinates
$P=[\sigma,\zeta,\eta]$ and $\eta\neq0$ is parametrized by\footnote{We would like to thank Don
  Zagier for explaining how to perform this reduction, and for
  providing the key identities.}  

\begin{align}
  \label{eq:parametrization}
  {\sigma\over\eta}&=\sqrt{s^{-1}M_6}\, (\Lambda_a(x))^2\cr
{\zeta\over\eta}&=(s^{-1}M_6)^{3\over4}\,\Lambda_a(x)
       M_{a,b,c}(x)\,,
\end{align}
where $x\in\IC^\times/q^\ZZ$ and $\Lambda_a(x)$ and $M_{a,b,c}(x)$ are expressed
in terms of the Jacobi theta functions defined in Appendix~\ref{sec:theta-functions}
\begin{align}
  \label{eq:LM}
  \Lambda_a(z)&:={\theta_1(x)\over \theta_a(x)}\cr
M_{a,b,c}(z)&:={\theta_c^2\over\theta_a\theta_b}\,{\theta_a(x)\theta_b(z)\over(\theta_c(x))^2}\,,
\end{align}
that satisfy  the relation
\begin{equation}
(  M_{a,b,c}(x))^2  = (\Lambda_c(x))^4+ u_{a,b}\,(\Lambda_c(x))^2+1
\end{equation}
which is  consequence of the Jacobi relations in~\eqref{e:Jacobi} and 
in~\eqref{eq:thetaRelations}.

\medskip

The differences of $P_{ij}:=P_i-P_j$ are mapped to
\begin{align}
  P_{2,1}&=\left[\xi_1^2\xi_2^2,
  -{\xi_1^2\xi_2^2\over2}(t-\xi_1^2-\xi_2^2+\xi_3^2),1\right]  \\
 P_{3,2}&=\left[\xi_2^2\xi_3^2,
  -{\xi_2^2\xi_3^2\over2}(t+\xi_1^2-\xi_2^2-\xi_3^2),1\right]  \\
 P_{1,3}&=\left[\xi_1^2\xi_3^2,
  -{\xi_1^2\xi_3^2\over2}(t-\xi_1^2+\xi_2^2-\xi_3^2),1\right]  \,,
\end{align}
that implies that for $(i,j,k)$ a permutation of $(1,2,3)$ and $c=2,3,4$
\begin{equation}
  \left(\theta_1(x(P_{ij}))\over \theta_c(x(P_{ij}))\right)^2  =
                                                                   {\xi_k  \sqrt{
                                                                   s^{-1}}\over
                                                                  \xi_i
    \xi_j }
\end{equation}

 The differences $Q_{ij}:=Q_i-P_j$  are mapped to
\begin{align}
  Q_{3,2}&=\left[\xi_1^2t,
  {\xi_1^2t\over2}(s^{-1}+\xi_1^2-\xi_2^2-\xi_3^2),1\right]  \\Q_{1,3}&=\left[\xi_2^2t,
  {\xi_2^2t\over2}(s^{-1}-\xi_1^2+\xi_2^2-\xi_3^2),1\right]  \\Q_{2,1}&=\left[\xi_3^2t,
{\xi_3^2t\over2}(s^{-1}-\xi_1^2-\xi_2^2+\xi_3^2),1\right]  \,.
\end{align}
We  then deduce that for $(i,j,k)$ a permutation of $(1,2,3)$  and $c=2,3,4$
\begin{equation}
 \left(\theta_1(x(Q_{ij}))\over \theta_c(x(Q_{ij}))\right)^2  =
 {\xi_i \xi_j \over   \sqrt{
                                                                   s^{-1}}
                                                                 \xi_k  }\,.
\end{equation}

\medskip

Using that $\theta_1(-x)=\theta_2(x)$ and
$\theta_3(-x)=\theta_4(x)$, we find that  $x(Q_{ij})=-x(P_{ij})$
for $i=1,2,3$. Implying that for $i=1,2,3$ we have  $x(P_i)/x(Q_i)=-1 \in \IC^\times/q^\ZZ$, which
shows that the divisors $Q_i-P_i$ are of torsion two.   This will play an
important role when evaluating the elliptic dilogarithm in section~\ref{sec:elliptic-dilogarithm}.

\subsection{Derivation of the Picard-Fuchs equation}\label{sec:PFderivation}

For completeness we give a short and
explicit derivation of the differential equation satisfied by the
sunset integral
\begin{equation}
  L_\su\left( -{1\over s} \cIs(s)\right)= S_\su(s)  
\end{equation}
where $L_\su$ is the Picard-Fuchs operator (with $\delta_s:=sd/ds$)
\begin{equation}\label{e:LP3mass}
    L_\su= \delta_s^2 + q_1(s)\, \delta_s+ q_0(s) 
\end{equation}
and $S_\su(s)$ is the inhomogeneous term composed by 
 the sum of the Yukawa coupling $\cYs(s)$
and logarithmic contributions in the masses
\begin{equation}
  S_\su(s)= \cYs(s) +\sum_{i=1}^3 c_i(s) \log(\xi_i^2)\,.  
\end{equation}
The logarithms  terms arises from the Kummer
quotient extension of the motive described in
section~\ref{sec:motives} and in proposition~\ref{MKpropIII1}.

This differential equation has already
been derived in~\cite{Remiddi:2013joa,MullerStach:2011ru}.  We follow Griffiths'
approach in~\cite{Griffiths} for deriving the Picard-Fuchs equation
from the cohomology of smooth projective hyperspace defined by
rational form in $\IP^2$.

\medskip

The action of  the Picard-Fuchs operator on $\Omega_\su(s)$ is
\begin{equation}
    L_\su\,\Omega_\su(s)=\left( {2(xyz)^2\over \Phis^3}-
      {(3-q_1(s)) xyz\over \Phis^2}+{1-q_1(s)+q_0(s)\over
        \Phis}\right)\,\Omega
\end{equation}
with  $\Omega=x dy\wedge dz+y dz\wedge dx+z dx\wedge dy$ and where we
have set $\Phis=\linebreak xyz (1-s\phis)$.

For $C_x, C_y, C_z$ homogeneous polynomials  of degree 4 in
$(x,y,z)$ 
the one-form 
\begin{equation}
  \beta_1=  {y C_z-z C_y\over \Phis^2} \, dx+ {z C_x-x C_z\over \Phis^2}\,
  dy+ {x C_y-y C_x\over\Phis^2} dz
\end{equation}
satisfies\footnote{In general if deg$(C_i)=3k-2$  with $i=x,y,z$ the one-form
\begin{equation}
  \beta=  {y C_z-z C_y\over \Phis^k} \, dx+ {z C_x-x C_z\over \Phis^k}\,
  dy+ {x C_y-y C_x\over\Phis^k} dz
\end{equation}
satisfies
\begin{equation}
  d\beta= -k {  (C_x \partial_x+C_y \partial_y+C_z \partial_z )\Phis
    \over \Phis^{k+1}} \, \Omega+ { \partial_x C_x+\partial_y C_y + \partial_z C_z\over\Phis^k}\,\Omega\,.
\end{equation}
}
\begin{equation}
  d\beta_1= -2 {  (C_x \partial_x+C_y \partial_y+C_z \partial_z )\Phis
    \over \Phis^3} \, \Omega+ { \partial_x C_x+\partial_y C_y + \partial_z C_z\over\Phis ^2}\,\Omega\,.
\end{equation}
By choosing the polynomials $C_x$, $C_y$ and $C_z$ such that
\begin{equation}
  (xyz)^2=-(C_x \partial_x+C_y \partial_y+C_z \partial_z )\Phis  
\end{equation}
then
\begin{equation}
  {2(xyz)^2\over \Phis^3}\,\Omega  =- { \partial_x C_x+\partial_y
    C_y+\partial_z C_z\over  \Phis^2}\,\Omega+d\beta_1\,.
\end{equation}
The expressions of the polynomials $C_x,C_y,C_z$ are given in Appendix~\ref{sec:coeff}.
We choose the coefficient $q_1(s)$ so that
\begin{equation}
  \left(\partial_x C_x+\partial_y    C_y+\partial_z C_z\right)+(3-q_1(s)) xyz= (\tilde
  C_x \partial_x+\tilde C_y \partial_y+\tilde C_z\partial_z)\, \Phis
\end{equation}
where $\tilde C_x, \tilde C_y, \tilde C_z$ are at homogeneous polynomial
of degree one in $(x,y,z)$, 
which detailed expressions are given in Appendix~\ref{sec:coeff}.
We find that $q_1(s)$ is given by
\begin{equation}\label{eq:q1}
 q_1(s)=2+\sum_{i=1}^4 {1\over \mu_i^2
               s-1}-{2sM_2-6\over s^2\prod_{i=1}^4 \mu_i-2sM_2+3 }\,.
\end{equation}
The one-form 
\begin{equation}
  \beta_2=  {y \tilde C_z-z \tilde C_y\over \Phis} \, dx+ {z \tilde
    C_x-x \tilde C_z\over \Phis}\,
  dy+ {x \tilde C_y-y \tilde C_x\over\Phis} dz\,,
\end{equation}
satisfies
\begin{align}
	d\beta_2 &=- { \partial_x C_x+\partial_y
    C_y+\partial_z C_z+(3-q_1(s)) xyz\over \Phis^2}\,\Omega \\
\notag	&\quad + {\partial_x
    \tilde C_x+\partial_y
  \tilde  C_y+\partial_z \tilde C_z\over \Phis}\,\Omega\,.
\end{align}
Finally choosing  $q_0(s)$ such that
\begin{equation}
  q_0(s)= -1+q_1(s)+  \partial_x
    \tilde C_x+\partial_y
  \tilde  C_y+\partial_z \tilde C_z 
\end{equation}
leads to
\begin{equation}
      L_\su\,\Omega_\su(s) =d(\beta_1+\beta_2)\,.
\end{equation}

The expression for $q_0(s)$ is (see~\eqref{e:MassDef} for the definitions of $M_2$,
$M_4$ and $M_6$)
\begin{align}\label{eq:q0}
 q_0(s)&=-{n_0\over (s^2\prod_{i=1}^4 \mu_i-2sM_2+3)\prod_{i=1}^4(\mu_i^2s-1)}\cr
n_0&=-\mu_1^3 \mu_2^3   \mu_3^3 \mu_4^3 s^6 \\
	&\quad +s^5 \mu_1\mu_2\mu_3\mu_4\left(-3 M_2^3+12M_2 M_4+12M_6\right)\cr
&\quad +s^4 \left(-18 M_2^4+108 M_2^2
   M_4-120 M_2 M_6-144 M_4^2\right)\cr
&\quad +s^3 \left(26 M_2^3-96 M_2 M_4+324
   M_6\right)\cr
&\quad +s^2 \left(24 M_4-15 M_2^2\right)
+3 M_2 s\nonumber\,.
\end{align}

Acting with the Picard-Fuchs operator on the sunset integral gives
\begin{equation}
S_\su(s)=   \int_{\Delta} L_\su\,\Omega_\su
 =\int_{\Delta} d\beta \,,
\end{equation}
with $\beta=\beta_1+\beta_2=\beta_xdx+\beta_y dy+\beta_z dz$.

For evaluating this integral we consider the blow-up  $\tilde
\Delta$ of  the  domain of integration $\Delta=\{[x:y:z]\in\IP^2 |
x,y,z\geq0\}$, by putting a sphere of radius $\epsilon>0$
around each of the points $[1:0:0]$, $[0:1:0]$ and $[0:0:1]$.

Integration by part gives the boundary contributions 
\begin{align}
	S_\su(s)&=\lim_{\epsilon\to0}\int_{\partial \tilde \Delta|_{x=0}} \, \left( \beta_y dy+
      \beta_z dz\right)\\
     &\quad + \lim_{\epsilon\to0}\int_{\partial \tilde \Delta|_{y=0}} \, \left( \beta_x dx+
      \beta_z dz\right) \cr
	&\quad +\lim_{\epsilon\to0}\int_{\partial \tilde \Delta|_{z=0}} \, \left( \beta_x dx+
      \beta_y dy\right)\,.\nonumber
\end{align}
Where $\partial \tilde \Delta|_{x=0}$ denote the boundary of the
blown-up domain $\tilde\Delta$ restricted to the plane $x=0$.
Setting $\zeta=y/z$ in the first integral, setting $\zeta=z/x$ in the
second integral and $\zeta=x/y$ in the last integral we obtain
\begin{equation}
    S_\su(s)=\lim_{\epsilon\to0}\int_{\epsilon}^{1\over\epsilon} \, \left(
      z\beta_y+  x\beta_z + y\beta_x \right)d\zeta\,.
\end{equation}
With $z\beta_y=(a_1+b_1\zeta)/(\zeta\,(\xi_3^2+\xi_2^2\zeta))$  and
$y\beta_x=-(b_1+b_2\zeta)/(\zeta(\xi_2^2+\xi_1^2\zeta))$
and $x\beta_z=(- \xi_1^2
b_2/2+b_3\zeta-\xi_3^2a_1/2\zeta^2)/(\zeta(\xi_1^2+\xi_3^2\zeta)^2)$
where $a_1,  b_1, b_2, b_3$ are polynomials in $s$ reading
\begin{align}
	a_1 &=4
   (\xi_1^2-\xi_2^2)\xi_3^2s \Big(3 
-3s  \left(3 \xi_1^2+3 \xi_2^2-7
   \xi_3^2\right)\cr
	&\quad + s^2
   \left(9 \xi_1^4-10 \xi_1^2 \xi_2^2-14 \xi_1^2 \xi_3^2+9
   \xi_2^4-14 \xi_2^2 \xi_3^2+5 \xi_3^4\right)\cr
	&\quad -3 s^3 \mu_1\mu_2\mu_3\mu_4
 \left(\xi_1^2+\xi_2^2-\xi_3^2\right)\Big)\,,\nonumber
\end{align}
and $b_1$ is obtained from $a_1$ by exchanging $\xi_2$ and $\xi_3$,
the coefficient $b_2$ is
obtained from $a_1$ by exchanging $\xi_1$ and $\xi_3$, and finally 
\begin{align}
	b_3 &=6 \xi_1^2 \xi_3^2\Big( 9-s\left(13 \xi_1^2+10 \xi_2^2+13
   \xi_3^2\right)\cr
	&\quad + s^2 \left(\xi_1^4+27
   \xi_1^2 \xi_2^2+6 \xi_1^2 \xi_3^2-8 \xi_2^4+27 \xi_2^2
   \xi_3^2+\xi_3^4\right)\cr
	&\quad + s^3 (\xi_1^6+4
   \xi_1^4 \xi_2^2-\xi_1^4 \xi_3^2-15 \xi_1^2 \xi_2^4-24
   \xi_1^2 \xi_2^2 \xi_3^2\cr
	&\qquad\quad -\xi_1^2 \xi_3^4+10 \xi_2^6-15
   \xi_2^4 \xi_3^2
+4 \xi_2^2 \xi_3^4+\xi_3^6) \cr
	&\quad + s^4
  \mu_1\mu_2\mu_3\mu_4\left(2
   \xi_1^4-\xi_1^2 \xi_2^2-4 \xi_1^2 \xi_3^2-\xi_2^4-\xi_2^2
   \xi_3^2+2 \xi_3^4\right)\Big)\,.\nonumber
\end{align}

The integral has a finite limit when $\epsilon\to0$ given by  
\begin{equation}
  \label{e:Y3mass}
  S_\su(s)= \cYs(s) -{2s\,\sum_{i=1}^3 \log(\xi_i^2) \, c_i(s)\over \prod_{i=1}^4 (s\mu_i^2-1)  (s^2
  \prod_{i=1}^4\mu_i+2sM_2-3)} 
\end{equation}
where  the Yukawa coupling is given by\footnote{
By construction the Wronskian of the Picard-Fuchs operator is
$W_\su(s)=s^{-1}\, \cYs$.}
\begin{equation}\label{e:Yu}
    \cYs(s)=2
    {s^2\prod_{i=1}^4\mu_i-2sM_2+3\over \prod_{i=1}^4 (\mu_i^2s-1)}  \,.
\end{equation}
The coefficients satisfy  $c_1(s)+c_2(u)+c_3(s)=0$ and are
given by 
\begin{align}
	\label{e:c1}  c_1(s)&= 
-2\xi_1^2 +\xi_2^2+ \xi_3^2\\
&\quad +s \left(6\xi_1^4-7\xi_1^2\xi_2^2-3\xi_2^4-7\xi_1^2\xi_3^2+14\xi_2^2\xi_3^2-3\xi_3^4\right)\cr
&\quad + s^2 \left(-6\xi_1^6+11\xi_1^4\xi_2^2-8\xi_1^2xi_2^4+3\xi_2^6+11\xi_1^4\xi_3^2-3\xi_2^4\xi_3^2-8\xi_1^2\xi_3^4-3\xi_2^2\xi_3^4+3\xi_3^6\right)\cr
	&\qquad\quad -s^3\mu_1\mu_2\mu_3\mu_4(2\xi_1^4-\xi_1^2\xi_2^2-\xi_2^4-\xi_1^2\xi_3^2+2\xi_2^2\xi_3^2-\xi_3^4)
\nonumber
\end{align}
and $c_2(s)$ is obtained from $c_1(s)$ by exchanging $\xi_1$ and
$\xi_2$
\begin{align}
	\label{e:c2} c_2(s)&=\xi_1^2 -2\xi_2^2+ \xi_3^2\cr
&\quad +s \left(6\xi_2^4-7\xi_1^2\xi_2^2-3\xi_1^4-7\xi_2^2\xi_3^2+14\xi_1^2\xi_3^2-3\xi_3^4\right)\\
&\quad +s^2 \left(-6\xi_2^6+11\xi_2^4\xi_1^2-8\xi_2^2xi_1^4+3\xi_1^6+11\xi_2^4\xi_3^2-3\xi_1^4\xi_3^2-8\xi_2^2\xi_3^4-3\xi_1^2\xi_3^4+3\xi_3^6\right) \cr
	&\qquad\quad -s^3\mu_1\mu_2\mu_3\mu_4(2\xi_2^4-\xi_1^2\xi_2^2-\xi_1^4-\xi_2^2\xi_3^2+2\xi_1^2\xi_3^2-\xi_3^4)\,.\nonumber
\end{align}

\begin{rem}
In the all equal masses case $\xi_1=\xi_2=\xi_3=1$ we immediately have
that $y\beta_z=0$ and $y\beta_x=0$ and 
\begin{equation}
  x\beta_y= {36\over (s-1)(9s-1)(1+\zeta)^2}
\end{equation}
leading to $S_\su(s)=6/( (9s-1)(s-1))$ which is the Yukawa
coupling $\cYs(s)$.  The Picard-Fuchs operator reads (with $\delta_s:=sd/ds$)
\begin{equation}
\label{e:PF1mass}
 L_\su= \delta_s^2+ {2s(9s-5)\over (s-1)(9s-1)} \delta_s+ {3s(3s-1)\over (s-1)(9s-1)}\,.
\end{equation}
The sunset integral satisfies the differential equation
\begin{equation}
  L_\su\left( -{1\over s}\cIs(s)  \right)= {6\over (9s-1)(s-1)}\,,
\end{equation}
which is equivalent to
\begin{equation}
  s(9s-1)(s-1) {d^2\cIs(s)\over ds^2} +(9s^2-1){d\cIs(s)\over ds}+ {1-3s\over s} \cIs(s)= -6\,.  
\end{equation}
This differential equation has been presented in the following matrix
form in~\cite[eq.~(4.13)]{Henn:2014qga}
\begin{equation}
  {d\over ds}\, \vec f(s)= {A_0\over s}  + {A_1\over s-1}+ {A_9\over 9s-1} \,.
\end{equation}
The poles are located at the singular fibers of the
sunset elliptic curves family. The residues  are the monodromy
matrices
$A_0$, $A_1$ and $A_9$ which are independent of $s$. 
These squared matrices have size three which equal the (generic) rank
of the  all equal
masses sunset  motive~\cite{Bloch:2013tra}.  This first order equation
arises from the flat Gau\ss-Manin connection for the coherent analytic
sheaf for which the section $\tilde\sigma$ leads to  sunset Feynman integral
according~\eqref{amp} as proven in lemma~6.21~of~\cite{Bloch:2013tra}
for the all equal masses case.
\end{rem}

\subsection{The elliptic dilogarithm}
\label{sec:elliptic-dilogarithm}

For $s\in ](\xi_1+\xi_2+\xi_3)^{-2},+\infty[$ we provide an expression of the
sunset integral $\cIs(s)$ in~\eqref{e:IsunsetSym} 
in term of the elliptic dilogarithms.  A derivation via motives will
given in section~\ref{sec:motives}.

We start by considering the ratio of the coordinates on the sunset
cubic curve as functions  on $\IC^\times/q^\ZZ$ 
\begin{align}
  {X\over Z}(x)& ={\theta_1(x/x(Q_1))\theta_1(x/x(P_3))\over
  \theta_1(x/x(P_1))\theta_1(x/x(Q_3))}\cr
 {Y\over Z}(x)& ={\theta_1(x/x(Q_2)) \theta_1(x/x(P_3))\over
  \theta_1(x/x(P_2))\theta_1(x/x(Q_3))}\,.
\end{align}
where $x(P)$ is the representation of the point $P$ in $\mathcal
E_\su\simeq \mathbb C^\times/q^\ZZ$ using the map of
section~\ref{sec:points}, and  $\theta_1(x)$  is the Jacobi theta function
\begin{equation}
\theta_1(x)=q^{1\over8} {x^{1/2}-x^{-1/2}\over \ay}\,\prod_{n\geq1} (1-q^n)(1-q^nx)(1-q^n/x)\,.
\end{equation}
We evaluate the integral 
\begin{equation}
  F(x)=- \int^x_{x_0} \log \left({X\over Z}(y)\right) \, d\log y
  \end{equation}
where $x_0$ is an arbitrary origin that will cancel in the final
answer. We find
\begin{align}
	F(x)&=F(x_0)+E_2(x/x(P_1))+E_2(x/x(Q_3))\\
	\notag &\quad - E_2(x/x(P_3))-E_2(x/x(Q_1))
\end{align}
where $E_2(x)$ is the elliptic dilogarithm
\begin{equation}
  E_2(x)
=\sum_{n\geq0} \Li_2(q^nx)-\sum_{n\geq1} \Li_2(q^n/x)+\frac14\,(\log (x))^2-\ay\pi\log (x)
\end{equation}
Using the 2-torsion relations $x(Q_i)=-x(P_i)$ for $i=1,2,3$ we can
rewrite $F(x)$ as 
\begin{equation}\label{e:FtoE}
  F(x)=\hat E_2(x/x(P_1))- \hat
  E_2(x/x(P_3))+{\ay\pi\over2}\,\log\left(x(P_1)\over x(P_3)\right)+F(x_0)\,,
\end{equation}
where
\begin{align}
	\hat E_2(x)&= \sum_{n\geq0} \left(\Li_2(q^nx)-\Li_2(-q^nx)\right)\\
	\notag &\quad -\sum_{n\geq1} \left(\Li_2(q^n/x)-\Li_2(-q^n/x)\right)  \,.
\end{align}

With this we can evaluate on the zero or poles of $Y/Z$
\begin{equation}
    \mathcal L_2\left\{{X\over Z},{Y\over Z}\right\}=F(x(P_3))+F(x(Q_2))-F(x(P_2))-F(x(Q_3))\,.
\end{equation}
The origin of the integral $F(x_0)$  has cancelled in the expression.
Using the expression for $F(x)$ in~\eqref{e:FtoE} one gets
\begin{align}
	\mathcal L_2\left\{{X\over Z},{Y\over Z}\right\}
	&=\hat E_2\left(-x(P_2)\over x(P_1)\right) -\hat E_2\left(x(P_2)\over
  x(P_1)\right)+\hat
E_2\left(x(P_2)\over x(P_3)\right)\\
&\quad -\hat
E_2\left(-x(P_2)\over x(P_3)\right) +\hat E_2\left(x(P_3)\over
  x(P_1)\right) -\hat E_2\left(-x(P_3)\over x(P_1)\right)\cr
&\quad +\hat E_2(-1)-\hat E_2(1)\,.\nonumber
\end{align}
Noticing the following properties of the function $\hat E_2(x)$
\begin{align}
  \hat E_2(-x)&=-\hat   E_2(x)\\
\nn\hat E_2(1/x)&=-\hat E_2(x)+\Li_2(x)-\Li_2(-x)+\Li_2({1\over
                    x})-\Li_2(-{1\over x})
\end{align}
together with  the dilogarithm functional equation
\begin{equation}
  \Li_2(x)+\Li_2(1/x)=-{\pi^2\over6}-\frac12\log(-x)^2  
\end{equation}
we can reduce the expression for $\mathcal L_2\{X/Z,Y/Z\}$ to 
\begin{align}\label{e:L2}
	\mathcal L_2\left\{{X\over Z},{Y\over Z}\right\}&=2\hat E_2\left(x(P_1)\over
  x(P_2)\right)+2\hat E_2\left(x(P_2)\over x(P_3)\right)+2\hat
E_2\left(x(P_3)\over x(P_1)\right) \\
	\nonumber &\quad +{\pi^2\over4}-\ay\pi\log\left(x(P_1)\over x(P_2)\right)\,.
\end{align}
The elliptic dilogarithm  $\hat E_2(x)$ is not invariant under
$q$-translation and transforms according
\begin{align} \label{e:Eh2shift}
  \hat E_2(qx)&= \hat E_2(x)-{\pi^2\over2}+\ay\pi\log(x)\\  
  \hat E_2(x/q)&= \hat E_2(x)+{\pi^2\over2}-\ay\pi\log(x/q)\,.
\end{align}
This is because the Feynman integral we are studying is a multivalued
function.
Shifting the representative $x(P)$ of the point $P$ in
$\IC^\times/q^\ZZ$ changes the expression for $  \mathcal
L_2\left\{{X\over Z},{Y\over Z}\right\}$  modulo $\ay\pi\log q$, $\ay\pi
\log(x(P_1))$,  $\ay\pi
\log(x(P_2))$ or $\ay\pi
\log(x(P_3))$.

In order to fix this ambiguity we symmetrize the computation by
summing other all the other choices to get

\begin{align}
   \mathcal L_2&:=  \mathcal L_2\left\{{X\over Z},{Y\over Z}\right\}+  \mathcal L_2\left\{{X\over
       Y},{Z\over Y}\right\}+  \mathcal L_2\left\{{Y\over X},{Z\over
       X}\right\}\\
&=6\hat E_2\left(x(P_1)\over
  x(P_2)\right)+6\hat E_2\left(x(P_2)\over x(P_3)\right)+6\hat
E_2\left(x(P_3)\over x(P_1)\right) \nonumber
+{3\pi^2\over4}\,.
\end{align}

\subsubsection{The all equal masses case }
\label{sec:single-mass-case}

It was shown in~\cite{Bloch:2013tra}  that the all equal masses case $\xi_1=\xi_2=\xi_3=1$ sunset
integral is given by
\begin{equation}\label{e:sunsetBV}
  \cIs(s_\su(q))={\varpi_r\over\pi} \,\left(\ay \pi^2 \, (1-2\tau) +  \, E_\su(q) \right)
\end{equation}
where  $q=\exp(2\pi\ay \tau)$ with $\tau=\varpi_c/\varpi_r$  the period
ratio and
$s_\su$ is the Hauptmodul
\begin{equation}
\label{e:sHaupt}  s_\su(q)^{-1}= 9+72\, {\eta(q^2)\over\eta(q^3)}\,\left(\eta(q^6)\over\eta(q)\right)^5  
\end{equation}
and $E_\su(q)$ is the elliptic dilogarithm evaluated that the sixth
root of unity $\zeta_6=e^{\ay\pi\over3}$
\begin{align}\label{e:Esunset1mass}
  E_\su(q)&={1\over2\ay} \sum_{n\geq1} \left(\Li_2(q^n
              \zeta_6)+\Li_2(q^n\zeta_6^2)-Li_2(q^n\zeta_6^4)-\Li_2(q^n\zeta_6^5)\right)\\
&\quad +{1\over 4\ay}\left(\Li_2(\zeta_6)+Li_2(\zeta_6^2)-\Li_2(
              \zeta_6^4)-\Li_2(\zeta_6^5)\right)\,.\nonumber
\end{align}
Noticing that
\begin{equation}
 2\ay E_\su(q)= \hat E_2(\zeta_6^2)+\zeta(2)
\end{equation}
and since
when all the masses are equal the image in $\IC^\times/q^\ZZ$ of
the points $x(P_i)=\zeta_6^i$ with $i=1,2,3$, we have
\begin{equation}\label{e:R1mass}
  \mathcal L_2\left\{{X\over Z},{Y\over Z}\right\}
=2\ay E_\su(q) +{11\pi^2\over 3}\,.
\end{equation}
Showing that the all equal masses
 sunset integral is equal to the regulator~\eqref{e:R1mass}
modulo periods of the elliptic curves
\begin{equation}
I_\su\equiv {\varpi_r\over2\pi\ay}  \mathcal L_2\left\{{X\over Z},{Y\over Z}\right\}\mod \textrm{periods}\,.
\end{equation}
%

\subsubsection{Three masses case}
\label{sec:three-mass}

In the three masses case  the sunset integral in~\eqref{e:IsunsetSym} is given by
\begin{equation}\label{e:IsunsetR}
  \cIs(s)\equiv {\ay\varpi_r   \over\pi} \left(\hat E_2\left(x(P_1)\over
      x(P_2)\right)+\hat E_2\left(x(P_2)\over x(P_3)\right)+\hat
    E_2\left(x(P_3)\over x(P_1)\right)\right)
\mod  \textrm{periods}
\end{equation}
An expression in terms of multiple polylogarithms has been presented in \cite{Adams:2014vja,Adams:2015gva,Adams:2015ydq,Adams:2016vdo}. It would be interesting to relate these results.

\medskip 
A proof is given in section~\ref{sec:motives} using
a motivic approach. In this section we present numerical verification
of this expression for the sunset integral.

According~\eqref{e:Eh2shift}  the elliptic dilogarithm $\hat E_2(x)$ is not invariant
under  the  change $x(P_i) \to qx(P_i)$ therefore  the expression in~\eqref{e:IsunsetR} shifts
by $\ay\pi\log q$.  Therefore by changing the representative of
$P_1$, $P_2$ and $P_3$ in $\IC^\times/q^\ZZ$ one can change the
coefficients of the periods of the elliptic curve freely.  

\subsubsection{Numerical checks}
We have made some numerical checks (see table~\ref{tab:3massNum} on page~\pageref{tab:3massNum}) of this relation using {\tt
  PARI/GP}~\cite{PARI}.  For given values of the masses and $s$ we
have searched for an integer linear dependence of the vector
\begin{equation}\label{e:vDef}
  v=\left[\cIs(s)- \ay{\varpi_r\over\pi}\, \mathcal L_2\left\{{X\over
        Z},{Y\over Z}\right\}, \ay\pi\varpi_r,\ay\pi
    \varpi_c\right]  
\end{equation}
using the {\tt lindep} command of {\tt PARI/GP}.
The vector is composed of the sunset integral evaluated using the
Bessel integral representation~\cite{Groote:2005ay,Bailey:2008ib,Groote:2012pa,PVstringmath}
\begin{equation}
  \label{e:Ibessel}
  \cIs(s) = \int_0^\infty 4x I_0(\sqrt {s^{-1}} x)\, \prod_{i=1}^3
  K_0(\xi_ix)\, dx\,,
\end{equation}
the regulator evaluated as
\begin{equation}
     \mathcal L_2\left\{{X\over
        Z},{Y\over Z}\right\}=\hat E_2\left(x_{12}\right)+\hat E_2\left(x_{23}\right)+\hat
    E_2\left(x_{31}\right)
\end{equation}
where $x_{ij}=x(P_i)/x(P_j)$ in $\IC^\times/q^\ZZ$.
Since we can easily follow the change of the expression under a
$q$-translation of the points, we have made some choices such that the relation between the sunset integral and the
regulator is modulo periods of the elliptic curves with simple
rational coefficients, keeping the relation $x_{12}x_{23}x_{31}=1$.
For instance in table~\ref{tab:3massNum} for the case $(\xi_1,\xi_2,\xi_3,s^{-1})=(1,2,3,3)$,
we show how the
$q$-translations $(x_{12},x_{31})\to (q x_{12},x_{31}/q)$ affect the
result modulo periods of the elliptic curve.

\begin{table}[ht]
	\resizebox{\textwidth}{!}
{\begin{tabular}[h]{||c|c|c|c|c|c||}
\hline
    $\xi_1, \xi_2, \xi_3, s^{-1}$&$x_{21}$& $x_{32}$&$q$&{\tt lindep(v)}&prec\\
\hline
    $1,2,8,2$&$-0.00931124 + 0.0160094\ay$&$4.87147 - 5.50124\ay$&$0.136089$& $[4, 3,-8]$&$2\,10^{-36}$\\
    $1,2,8,6$&$-0.00640431 + 0.00671999\ay$&$6.17736 -
                                           8.34052\ay$&$0.0963482$&$[4,                                                                  3,
                                                                  -8]$&$2\,10^{-36}$\\
&&&&&\\
    $1,2,3,2$&$0.0733690 - 0.108597\ay$&$0.797236 -
                                        0.603668\ay$&$-0.131059$&
                                                                 $[4,1,0]$&$6\,10^{-37}$\\
$1,2,3,2$&$-0.00961565 + 0.0142326\ay$&$-6.08304 + 4.60608\ay$&$-0.131059$&$[4, -5, 8]$&$6\,10^{-37}$\\
 $1,2,3,3$&$-0.723282 - 0.690553\ay$&$ -0.145143 -
                                    0.107284\ay$&$-0.180489$& $[4, 3,
                                                            -8]$&$5\,10^{-37}$\\
&&&&&\\
    $1,5,7,3$&$-0.481821 + 0.876270\ay$&$ -0.0416592 + 0.0163910\ay$&$-0.0447678$& $[4,-5,8]$&$10^{-37}$\\
 $1,5,7,7$&$-0.766655 + 0.642059\ay$ &$-7.08429 + 2.58610\ay
                                     $&$-0.132599$& $[4, -5,8]$&$7\,10^{-38}$\\
&&&&&\\
$3,5,7,3$& $0.199999 + 0.979796\ay$&$-6.29720 +
                                    3.35123\ay$&$-0.140185$&$[4, -5,
                                                            8]$&$7\,10^{-38}$\\
$3,5,7,7$& $-0.199528 + 0.979892\ay$&$-5.76891 + 4.08260\ay$&$-0.141495$&$[4, -5, 8]$&$5\,10^{-38}$\\
\hline
\end{tabular}}\vspace{1ex}
  \caption{Results of linear dependence of vector $v$ defined
    in~\eqref{e:vDef} using the {\tt PARI/GP} command {\tt
      lindep(v)}. The last column gives the absolute value for the
    numerical evaluation linear relations. }
  \label{tab:3massNum}
\end{table}



\newcommand{\eq}[2]{\begin{equation}\label{#1}#2 \end{equation}}
\newcommand{\ml}[2]{\begin{multline}\label{#1}#2 \end{multline}}
\newcommand{\ga}[2]{\begin{gather}\label{#1}#2 \end{gather}}
\newcommand{\mypmatrix}[1]{\begin{pmatrix}#1 \end{pmatrix}}

\newcommand{\C}{{\mathbb C}}
\newcommand{\G}{{\mathbb G}}
\newcommand{\Q}{{\mathbb Q}}
\newcommand{\Z}{{\mathbb Z}}
\renewcommand{\P}{{\mathbb P}}
\newcommand{\R}{{\mathbb R}}

\newcommand{\sS}{{\mathcal S}}

\newcommand{\inj}{\hookrightarrow}

\newcommand{\bloch}{{\bf SB }}

\section{Approach via Motives}\label{sec:motives}

The purpose of this section is to prove formula \eqref{e:IsunsetR} for
the sunset Feynman integral in two dimensions with arbitrary masses. This is a beautiful illustration how the theory of motives can yield information about Feynman integrals. With an eye toward future applications, we will permit ourselves to say a bit more than what is strictly necessary for the sunset case.  

We fix masses and external momenta and just write $E$ for the
resulting elliptic curve, which is an element in the family
\eqref{e:Esunset}. We have $E\inj \P^2$ with homogeneous
coordinates $X, Y, Z$, and $E$ meets the coordinate triangle $XYZ=0$
in a set $S$ of $6$ points, $S:= \{P_1, P_2, P_3, Q_1, Q_2,
Q_3\}$. Let $E^0:=E-S$. Following \cite[\S6]{Bloch:2013tra},  let $\rho: P \to \P^2$ be the blowup of the vertices of the coordinate triangle, and let $\frak h \subset P$ be the resulting hexagon. We can lift $E\inj P$ and write $\frak h^0=\frak h-E\cap \frak h$. 

Let $\sigma \subset \P^2(\R)$ be the positive real simplex which is the chain of integration for the Feynman integral. for general values of external momenta, $\sigma$ will not meet $E$, and we can lift to $\widetilde\sigma \subset P-E$. We have $\partial\widetilde\sigma \subset \frak h^0$, so 
\eq{sig}{\widetilde\sigma \in H_2(P-E,\frak h^0;\Q)= H^2(P-E,\frak h^0;\Q)^\vee.
}  
The form $\Omega_\su$, \eqref{e:OmegaDef}, represents a class in
$F^2H^2(P-E,\frak h^0;\C)$, and the Feynman integral 
\eq{amp}{\cIs = \langle \Omega_\su,\widetilde\sigma\rangle.
} 
The idea is to interpret $\cIs$ as a quantity intrinsic to the
Hodge structure $H^2(P-E,\frak h^0;\Q)$ together with the choice of
$\Omega_\su$. That way, whenever we see the Hodge structure (and we
will see it in two other guises below) we can be sure that the sunset Feynman
integral  $\cIs$ is involved. 

To begin, we can invoke \cite{BKV},  lemma 6.1.4 to get
\eq{dual}{H_2(P-E,\frak h^0;\Q)= H^2(P-E,\frak h^0;\Q)^\vee \cong H^2(\G_m^2,E^0;\Q(2)).
}
Here we identify 
\eq{}{\G_m^2 = \P^2-\{XYZ=0\}=P-\frak h.
}
We consider the long-exact sequence of Hodge structures
\begin{align}\label{les}
H^1(\G_m^2,\Q(2)) \xrightarrow{\alpha} H^1(E^0,\Q(2)) &\to H^2(\G_m^2,E^0;\Q(2)) \\
\notag &\to H^2(\G_m^2,\Q(2)) \to 0.
\end{align}
The image of $\alpha$ above is spanned by the logarithmic classes 
$$d\log(X/Z), d\log(Y/Z).$$ 
We can avoid these by replacing $\G_m^2$ by the relative space $(\G_m,\{1\})^2$. One has
\eq{}{H^i((\G_m,\{1\})^n,\Q) = \begin{cases} 0 & i\neq n \\
\Q(-n) & i=n
\end{cases}
}
We now build a diagram \minCDarrowwidth.1cm
\eq{comdiag}{\begin{CD}0 @>>> H^1(E^0,\Q(2)) @>>> H^2((\G_m,\{1\})^2,E^0;\Q(2)) @>>> \Q(0) @>>> 0 \\
@. @VV V @VV a V @| \\
0 @>>> H^1(E^0,\Q(2))/Im(\alpha) @>>> H^2(\G_m^2,E^0;\Q(2)) @>>> \Q(0) @>>> 0 
\end{CD}
}
where the bottom line comes from truncating \eqref{les}. 

We are interested in the extensions of Hodge structures associated to these sequences. Since the sequence on the bottom comes by pushout, it\linebreak will suffice to consider the top line. We consider splittings $s_\Q \in\linebreak H^2((\G_m,\{1\})^2,E^0;\Q(2))$ and $s_F \in F^0H^2((\G_m,\{1\})^2,E^0;\C(2))$ lifting $1\in \Q(0)$. The obstruction to splitting the sequence of Hodge structures \eqref{comdiag} is
\eq{}{s_\Q-s_F \in H^1(E^0,\C(2))/H^1(E^0,\Q(2))
}
We can choose $s_\Q$ so its image in $H^2(\G_m^2,E^0;\Q(2))$ coincides with $\widetilde\sigma$ under the identification \eqref{dual}. Indeed, the boundary $\partial\widetilde\sigma = 1 \in H_1(\frak h^0) \cong \Q(0)$. Also, in \eqref{comdiag} the dual to the map labeled $a$ induces an isomorphism on $F^2$. (This is because $Im(\alpha)^\vee = \Q(-1)^2$ and $F^2\C(-1)=(0)$.) In particular, $\Omega_\su$ lifts canonically to an element 
$$\Omega\in F^2 H^2((\G_m,\{1\})^2,E^0;\C(2))^\vee. $$ 
Note that this element is orthogonal to $F^{-1}H^2((\G_m,\{1\})^2,E^0;\C(2))$ so in particular it kills $s_F\in F^0 \subset F^{-1}$. We conclude
\eq{9c}{\cIs = \langle \Omega_\su,\widetilde\sigma\rangle = \langle \Omega,s_\Q-s_F\rangle. 
}

Our objective now is to reinterpret $I_\su$ in terms of elliptic dilogarithms. Let $\square := \P^1-\{1\}$ and write $\partial\square = \{0,\infty\}$. Poincar\'e duality yields an identification
\eq{}{H^*((\square,\partial\square)^n) \cong H^{2n-*}((\G_m,1)^n)(n)^\vee \cong \begin{cases} \Z(0) & *=n \\ 0 & \text{else}. \end{cases}
}
Let $\Gamma^0 \subset E^0\times \G_m^2$ be the graph of the embedding $E^0 \inj \G_m^2$. Note that $\Gamma^0$ is actually closed in $E^0\times (\P^1)^2$ and we may intersect to get a closed codimension $2$ cycle which we also call $\Gamma^0$ on $X\times (\P^1-\{1\})^2$. This cycle doesn't meet the loci where coordinates $\in \{0,\infty\}$. The Gysin sequence yields \minCDarrowwidth.1cm
\eq{14c}{\begin{small}\begin{CD}0 @>>> H^{3}(E^0\times (\square,\partial\square)^2)(2) @>>> H^{3}(E^0\times (\square,\partial\square)^2-\Gamma^0)(2) @>>> \Z(0) @>>> 0 \\
@. @| \\
@. H^{1}(E^0)(2) \end{CD}
\end{small}}   

\begin{lem}\label{lem1}
 The sequence \eqref{14c} and the top row of \eqref{comdiag} agree as extensions of Hodge structure.
\end{lem}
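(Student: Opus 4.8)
These two rows are extensions of $\Q(0)$ by $H^1(E^0,\Q(2))$ in the category of $\Q$-mixed Hodge structures, so it suffices to produce a morphism of short exact sequences between them restricting to the identity on the sub $H^1(E^0,\Q(2))$ and on the quotient $\Q(0)$; by the five lemma such a morphism is automatically an isomorphism, and it shows the two classes to be equal in $\mathrm{Ext}^1_{\mathrm{MHS}}(\Q(0),H^1(E^0,\Q(2)))$. The plan is first to exhibit both rows as instances of the same localization mechanism. The top row of \eqref{comdiag} is the truncated ``$(\G_m,\{1\})$''-variant of \eqref{les}, namely the piece of the long exact sequence of the pair $((\G_m,\{1\})^2,E^0)$ isolated by the vanishings $H^1((\G_m,\{1\})^2)=0$, $H^2((\G_m,\{1\})^2)=\Q(-2)$ and $H^2(E^0)=0$ ($E^0$ being affine). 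The bottom row \eqref{14c} is the Gysin sequence of the closed codimension-two subvariety $\Gamma^0\subset M:=E^0\times(\square,\partial\square)^2$; since $\Gamma^0$ is the graph of the closed immersion $\iota\colon E^0\inj\G_m^2\subset(\square)^2$ over the open curve $E^{00}:=E^0\cap\{X\neq Z\}\cap\{Y\neq Z\}$, it is isomorphic to $E^{00}$, which is connected, and Künneth --- using $H^\bullet((\square,\partial\square)^2)=\Q(0)$ placed in degree $2$ --- yields $H^3(M)=H^1(E^0)$ and $H^4(M)=H^2(E^0)\otimes\Q(0)=0$, so the Gysin sequence collapses to a short exact one with sub $H^1(E^0)(2)$ and quotient $H^0(\Gamma^0)(-2)\otimes\Q(2)=\Q(0)$.

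Next I would build the comparison morphism out of the graph correspondence and the Poincar\'e--Lefschetz duality $H^\bullet((\square,\partial\square)^n)\cong H^{2n-\bullet}((\G_m,1)^n)(n)^\vee$ recorded just above the lemma (equivalently, relative Verdier duality in the $(\P^1)^2$-direction). Because $\Gamma^0$ is a graph, the projections $p_1,p_2$ of $M$ restrict on $\Gamma^0$ to an isomorphism $p_1\colon\Gamma^0\xrightarrow{\sim}E^{00}$ and to the embedding $p_2\colon\Gamma^0\inj\G_m^2$; this pair of maps is the bridge between the ``$\square$''-model on the $M\setminus\Gamma^0$ side and the ``$\G_m$''-model on the $(\G_m,\{1\})^2$ side. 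One uses $p_2$ together with the duality to convert the $(\square,\partial\square)^2$-cohomology entering the Gysin sequence into $(\G_m,\{1\})^2$-cohomology, while the removal of $\Gamma^0$, read off through $p_1$, supplies the ``relative to $E^0$'' condition; carrying this through the Gysin sequence produces a morphism of short exact sequences, say from \eqref{14c} to the top row of \eqref{comdiag}. One then checks: on the sub it is the identity $\mathrm{id}_{H^1(E^0,\Q(2))}$, because both sub-inclusions are induced from the identity of $H^1(E^0)$ through the respective localization and Künneth structures (a projection-formula computation); and on the quotient it is the identity, because the residue (Thom) class along $\Gamma^0$ generating the left-hand $\Q(0)$ corresponds under $p_2$ to the image of the fundamental class of $(\G_m,\{1\})^2$ generating the right-hand $\Q(0)$, which is a local computation around $\Gamma^0$. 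All the arrows involved --- Künneth, localization/Gysin maps, Poincar\'e--Lefschetz duality, and the maps induced by $p_1,p_2$ --- are morphisms of mixed Hodge structures, hence so is the comparison. Conceptually this is no surprise: both rows are the Hodge realization of the Beilinson regulator of the Milnor symbol $\{X/Z,Y/Z\}\in H^2_{\mathcal M}(E^0,\Q(2))$ --- the top row as the pull-back of the tautological extension on $((\G_m,\{1\})^2,E^0)$ along $(X/Z,Y/Z)\colon E^0\to\G_m^2$, the bottom as the realization of that regulator by the graph cycle $\Gamma^0$ --- and their agreement is the compatibility of these two standard models, in the spirit of \cite{BKV} and \cite[\S6]{Bloch:2013tra}.

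I expect the main obstacle to be upgrading this to a genuine morphism of mixed Hodge structures \emph{with all Tate twists and signs correct}: one must transport the two flavours of relative boundary data --- the $(\square,\partial\square)$-conditions on the Gysin side and the $(\G_m,\{1\})$-conditions on the pair side --- simultaneously through the duality and the push-forward, and this is delicate because $E^0$ is not proper and the fibrations of $M\setminus\Gamma^0$ that one uses have jumping fibres ($\Gamma^0$ lies over $E^{00}$ but is empty over the finitely many remaining points of $E^0$). The affine vanishings $H^{\ge 2}(E^0)=0$ and $H^{\ge 2}(\Gamma^0)=0$ are what permit the long exact sequences to be truncated to short exact ones, but one still has to verify that no extension data is lost in the discarded terms; the cleanest way to keep this bookkeeping honest is to run the whole comparison inside the absolute Hodge cohomology formalism of \cite{BKV,Bloch:2013tra}, with $\Gamma^0$ as the cycle of support.
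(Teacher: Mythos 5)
Your identification of the two rows—sub $H^1(E^0,\mathbb{Q}(2))$, quotient $\mathbb{Q}(0)$, the top row as a truncated pair sequence and the bottom as a Gysin sequence collapsed by the K\"unneth and affine-vanishing observations—is correct, and your overall strategy (exhibit a morphism of extensions which is the identity on sub and quotient, then invoke the five lemma) is exactly the right one. The gap is that the comparison morphism is never actually constructed. The sentence in which you ``use $p_2$ together with the duality to convert'' the $(\square,\partial\square)$-side into the $(\mathbb{G}_m,\{1\})$-side and then ``carry this through the Gysin sequence'' is precisely the content of the lemma, and the verifications on the sub and the quotient are asserted (``a projection-formula computation'', ``a local computation around $\Gamma^0$'') rather than performed. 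Your own closing paragraph concedes the point: making this a genuine morphism of mixed Hodge structures, with the two flavours of boundary condition transported correctly and with $\Gamma^0$ empty over the finitely many points of $E^0$ where $x=1$ or $y=1$, is exactly what remains to be done, and deferring it to ``the absolute Hodge cohomology formalism'' does not supply the missing argument.

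The paper's proof provides the concrete mechanism you are missing. It sets up the comparison \eqref{tobeconst} for an arbitrary smooth $X\to\mathbb{G}_m^n$ and reduces to the universal case $X=\mathbb{G}_m^n$, $f=\mathrm{id}$, with universal graph $\Xi\cong(\mathbb{G}_m-\{1\})^n$. The decisive input is a Leray spectral sequence computation for the projection of $(\mathbb{G}_m,\{1\})^n\times(\square,\partial\square)^n-\Xi$ to $\mathbb{G}_m^n$, where the relevant sheaf has vanishing stalks over points with some coordinate equal to $1$ (this is how the ``jumping'' you worry about gets handled, sheaf-theoretically rather than by hand): one finds $H^{2n-1}$ of that complement is zero and, feeding this into the Gysin sequence, that the restriction map on $H^{2n}$ is zero because the Gysin map $H^0(\Xi)(-n)\to H^{2n}$ is an isomorphism. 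These two vanishings are what permit the natural restriction map $a$ to be lifted to $\tilde a$ landing in $H^{2n-1}(X\times B-\Gamma_f)$, yielding the commutative diagram \eqref{21c}, which after twisting by $\mathbb{Z}(n)$ is \eqref{tobeconst}; compatibility with the sub and the quotient is then read off from the diagram, with no separate residue or projection-formula check needed. Your duality-and-correspondence transport might be made rigorous, but without a substitute for this universal-case vanishing argument (or some equally explicit construction of the middle arrow), the proposal does not yet prove the lemma.
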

\begin{proof}Note that we can generalize the top row of \eqref{comdiag} to an extension
\eq{}{0 \to H^{n-1}(X)(n) \to H^n((\G_m,\{1\})^n, X)(n) \to \Q(0) \to 0
}
for any $f: X \to \G_m^n$. (To avoid technicalities, we assume in the sequel that $X$ is smooth). We will construct a commutative diagram \minCDarrowwidth.1cm
\eq{tobeconst}{\begin{small}\begin{CD}0 @>>> H^{n-1}(X)(n) @>>> H^n((\G_m,\{1\})^n, X)(n) @>>> \Q(0) @>>> 0 \\
@. @VV\cong V @VVV @| \\
0 @>>> H^{2n-1}(X\times (\square,\partial\square)^n)(n) @>>> H^{2n-1}(X\times (\square,\partial\square)^n-\Gamma_f)(n) @>>> \Z(0) @>>> 0\end{CD}
\end{small}}

We consider the universal case $X=\G_m^n,\ f=id$. 
 Let $\Xi \subset \G_m^n\times \square^n$ be the corresponding graph. Note $\Xi \cong (\G_m-\{1\})^n$. We want to understand $H^*((\G_m,\{1\})^n\times (\square,\partial\square)^n-\Xi)$. Consider the projection 
\eq{15b}{p: (\G_m,\{1\})^n\times (\square,\partial\square)^n-\Xi \to \G_m^n.
}
The cohomology on the left is calculated by the sheaf $\sS$ which is the constant sheaf with fibre $\Q$ on $(\G_m-\{1\})^n\times (\square-\partial\square)^n$ extended by $0$ to $\G_m^n\times \square^n$ and then restricted to the complement of $\Xi$. 
For $z=(z_1,\dotsc,z_n)\in \G_m^n$ we have
\eq{}{\sS|_{p^{-1}(z)} = \begin{cases} \Q_{\square^n-\{z\}} & \text{no $z_i = 1$} \\
(0) & \text{some $z_i=1$}
\end{cases}
}
The cohomology along the fibres of $p$ is thus 
\eq{}{H^*(\sS|_{p^{-1}(z)}) = \begin{cases} (0) & *\neq n, 2n-1;\text{ or some $z_i=1$} \\
\Z(0) & *=n;\  \text{no $z_i = 1$} \\
\Z(-n) & *=2n-1;\  \text{no $z_i = 1$}.
\end{cases}
}
Using again that $H^*((\G_m,\{1\})^n)=(0)$ for $*\neq n$, we conclude that in the Leray spectral sequence associated to $p$, \eqref{15b}, one has
\eq{}{E_2^{ab} \Rightarrow H^{a+b}((\G_m,\{1\})^n\times (\square,\partial\square)^n-\Xi)
}
and $E_2^{ab}=(0)$ unless $a=n$ and $b=n, 2n-1$. In particular, 
$$H^{2n-1}((\G_m,\{1\})^n\times (\square,\partial\square)^n-\Xi) = (0).$$
The Gysin sequence yields
\begin{align}0 \to H^0(\Xi)(-n) &\xrightarrow{gysin} H^{2n}((\G_m,\{1\})^n\times (\square,\partial\square)^n) \\ 
	\notag &\xrightarrow{restrict}  H^{2n}((\G_m,\{1\})^n\times (\square,\partial\square)^n-\Xi). 
\end{align}
Both domain and target of the map labeled $gysin$ are $\Q(-n)$. Since this map is injective, it is an isomorphism, so the map labeled $restrict$ is zero. 

We now have a diagram (to shorten we write $B=(\square,\partial\square)^n$)
\minCDarrowwidth.1cm
\eq{}{\begin{small} \begin{CD}0 @>>> H^{2n-1}(X\times B) @>>> H^{2n}(((\G_m,\{1\})^n,X)\times B) @>>> H^{2n}((\G_m,\{1\})^n\times B) \\
@. @VVV @V a VV @V 0 VV \\
 0 @>>> H^{2n-1}(X\times B-\Gamma_f) @>>> H^{2n}(((\G_m,\{1\})^n,X)\times B-\Xi) @>>> H^{2n}((\G_m,\{1\})^n\times B - \Xi).
\end{CD}
\end{small}} 
As a consequence of the above calculations, the map on the lower left is injective and the vertical map on the right is zero. It follows that the vertical map labeled $a$ lifts to $\tilde a$ fitting into a diagram \minCDarrowwidth.1cm 
\eq{21c}{\begin{small} \begin{CD}0 @>>> H^{2n-1}(X\times B) @>>> H^{2n}(((\G_m,\{1\})^n,X)\times B) @>>> H^{2n}((\G_m,\{1\})^n\times B) @>>> 0 \\
@. @| @V \tilde a VV @V\cong VV \\
0 @>>> H^{2n-1}(X\times B) @>>> H^{2n-1}(X\times B-\Gamma_f) @>\text{residue}>> \Q(-n) @>>> 0
\end{CD}
\end{small}}
After twisting by $\Z(n)$ we find \eqref{21c} coincides with \eqref{tobeconst}, proving the lemma. 
\end{proof}

We can now compute $\cIs$ using \eqref{14c}. We have, again by lemma 6.1.4 in  \cite{BKV}  (writing $S:= E-E^0$)
\eq{}{H^{3}(E^0\times (\square,\partial\square)^2)(2)^\vee = H^3((E,S)\times (\G_m,\{1\})^2)(1).
}
We fix coordinates $x,y$ on $\G_m^2$ and a holomorphic $1$-form $de$ on $E$. The role of $\Omega$ in \eqref{9c} will be played by 
\eq{}{\eta := de\wedge dx/x\wedge dy/y \in F^2(H^3((E,S)\times (\G_m,\{1\})^2)(1)).
}
Let $S:=E-E^0$. A homological interpretation of the top row of \eqref{comdiag} is rather tricky. We need to define the group $H_3((E,S)\times (\G_m,\{1\})^2,\Gamma;\Q)$ where $\Gamma \cong E$ is the complete curve. To justify this, let 
\eq{}{\G_m -\{1\}\stackrel{\ell}{\inj} \G_m  \stackrel{k}{\inj}\P^1;\quad E^0 \stackrel{j}{\inj} E
} 
be the open immersions. Let $f: E \inj E\times \P^1\times \P^1$ extend the graph $E^0 \inj E^0\times \G_m^2$. The point is that the natural map over $E^0$ extends to
\eq{}{f^*(j_!\Q_{E^0}\boxtimes k_*\ell_!\Q_{\G_m-\{1\}}\boxtimes k_*\ell_!\Q_{\G_m-\{1\}}) \to \Q_E.
}
This is because the points on $E\times \P^1\times \P^1$ where $\Gamma$ meets $(\{0,\infty\}\times \P^1)\cup (\P^1\times \{0,\infty\})$ are contained in $S\times \P^1\times \P^1$ so the stalks of the sheaf $j_!\Q_{E^0}\boxtimes k_*\ell_!\Q_{\G_m-\{1\}}\boxtimes k_*\ell_!\Q_{\G_m-\{1\}}$ are zero.

We will integrate $\eta$ over a relative homology $3$-chain $C$ on $E\times \P^2$. An argument (left to the reader) similar to the above will show that $C$ represents a class in $H_3((E,S)\times (\G_m,\{1\})^2, \Gamma;\Q)$ and that $\partial C=\Gamma$.  Define (cf. \cite{KerrLewisSMS})
 \eq{eq:4.25}{ C := \{(e, (1-v)+vx(e), y(e)) \ | \ e\in E, \ 0\le v\le 1\}}
Cut $E$ and $C$ along the locus $T_y := \{e\ |\ y(e)\le 0\}$. On the cut chain we can write $dy(e)/y(e) = d(\log y(e))$ and apply Stokes theorem. (More precisely, $T_y$ is an infinitely thin strip with two sides. The value of $\log y$ differs by $2\pi \ay$ at corresponding points on the two sides of $T_y$, so we find
\eq{}{\int_C \eta = 2\pi \ay\int_\gamma \log(x)de}
where $\gamma$ is a $1$-chain with $\partial \gamma = (y)$, the divisor of $y=Y/Z$ on $E$.
Using \eqref{9c} and lemma \ref{lem1}, we deduce
\begin{prop} The sunset Feynman integral 
\eq{eq:4.30}{\cIs = \kappa \int_{\gamma}\log(x)\cdot \eta
}
where $\kappa \eta = \Omega$ under the identification 
\eq{}{F^3 H^3((E,S)\times (\G_m,\{1\})^2,\C) = F^2H^3(E\times (\G_m,\{1\})^2,\C(1)).
}
\end{prop}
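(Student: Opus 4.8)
The plan is to assemble the Hodge‑theoretic identifications established above and then carry out the integration of $\eta$ explicitly.  By \eqref{9c} we have $\cIs = \langle \Omega, s_\Q - s_F\rangle$, and since $\Omega$ is orthogonal to $F^{-1}H^2((\G_m,\{1\})^2,E^0;\C(2))$ it annihilates $s_F$, so only $s_\Q$ contributes.  Lemma~\ref{lem1} identifies the top row of \eqref{comdiag} with the Gysin sequence \eqref{14c} as extensions of Hodge structure; dualizing and invoking lemma 6.1.4 of \cite{BKV} turns $H^3(E^0\times(\square,\partial\square)^2)(2)^\vee$ into $H^3((E,S)\times(\G_m,\{1\})^2)(1)$ and carries the functional $\Omega$ onto $\kappa\,\eta$, with $\eta = de\wedge\tfrac{dx}{x}\wedge\tfrac{dy}{y}$.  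This last equality is taken as the definition of the scalar $\kappa$; note that it also absorbs the $2\pi\ay$ implicit in the Tate twist of the identification $F^3H^3((E,S)\times(\G_m,\{1\})^2,\C) = F^2H^3(E\times(\G_m,\{1\})^2,\C(1))$.

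It remains to produce an explicit relative cycle representing $s_\Q$ and to pair it with $\eta$.  First I would check that $H_3((E,S)\times(\G_m,\{1\})^2,\Gamma;\Q)$, with $\Gamma\cong E$ the graph of $E^0\inj\G_m^2$ completed inside $E\times\P^1\times\P^1$, is the correct homological dual: this is the sheaf‑theoretic point that the natural map $f^*\bigl(j_!\Q_{E^0}\boxtimes k_*\ell_!\Q_{\G_m-\{1\}}\boxtimes k_*\ell_!\Q_{\G_m-\{1\}}\bigr)\to\Q_E$ is well defined, because $\Gamma$ meets $(\{0,\infty\}\times\P^1)\cup(\P^1\times\{0,\infty\})$ only inside $S\times\P^1\times\P^1$, where the stalks of that sheaf vanish.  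Then I would take $C$ as in \eqref{eq:4.25} --- the straight‑line homotopy in the first $\square$‑factor from the constant map $1$ to $x(e)$, over $E$, with $y(e)$ in the second factor --- and verify by a parallel argument that $C$ lies in this group, that $\partial C = \Gamma$, and that its class corresponds, under \eqref{dual}, the dual of Lemma~\ref{lem1}, and Poincar\'e duality for $(\square,\partial\square)^2$, to $\widetilde\sigma$, hence to $s_\Q$.

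The final step is the Stokes computation already indicated: since $\tfrac{dy(e)}{y(e)} = d\log y(e)$ away from $T_y := \{e : y(e)\le 0\}$, one cuts $E$ and $C$ along $T_y$ and applies Stokes; the interior contributions cancel and only the jump of $\log y$ across the two banks of the infinitely thin strip $T_y$ survives, that jump being $2\pi\ay$, which collapses one integration and leaves $\int_C\eta = 2\pi\ay\int_\gamma\log(x)\,de$, where $\gamma$ is a $1$‑chain on $E$ with $\partial\gamma = \mathrm{div}(Y/Z)$.  Combining with $\cIs = \langle\kappa\,\eta,[C]\rangle$ yields $\cIs = \kappa\int_\gamma\log(x)\cdot\eta$, and the poles of $\eta$ along $S$ are handled by regularizing $C$ near $S$ and passing to the limit, which is routine.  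I expect the main obstacle to be the second paragraph: pinning down precisely that the explicit topological chain $C$ represents $s_\Q$ through the stack of duality isomorphisms, i.e.\ the cocycle‑versus‑cycle bookkeeping, including the orientation and sign conventions and the choice of branch of $\log$ that ultimately fix the constant $\kappa$.
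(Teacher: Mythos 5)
Your proposal reproduces the paper's own argument step for step: you invoke \eqref{9c} together with Lemma~\ref{lem1} and lemma 6.1.4 of \cite{BKV} to trade $\Omega$ for $\kappa\eta$, justify the relative homology group $H_3((E,S)\times(\G_m,\{1\})^2,\Gamma;\Q)$ by the same sheaf-theoretic stalk vanishing, take the same homotopy chain $C$ of \eqref{eq:4.25}, and perform the same cut-and-Stokes calculation along $T_y$ to arrive at $\int_C\eta = 2\pi\ay\int_\gamma\log(x)\,de$. The only quibble is a minor one of attribution: the $2\pi\ay$ in that last equality arises from the monodromy of $\log y$ across $T_y$ (the $v$-integration is what produces the $\log x$), rather than being absorbed into $\kappa$ via the Tate twist as you suggest; but this does not change the argument.
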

\begin{remark} Note that the $2$-chain $\widetilde\sigma$ \eqref{sig} defines (after reinterpretation in terms of cohomology as explained above) a splitting of the bottom row of \eqref{comdiag}. This chain does not lift to yield a splitting of the top row. This is because the chain $\sigma$ meets the lines $X=Z$ and $Y=Z$ in $\P^2$ and so does not represent a class in $H_2(\P^2-E-\{X=Z\}-\{Y=Z\})$. The effect of this is to introduce some elementary log terms corresponding to periods of $d(X/Z)/(X/Z)$ in~\eqref{e:L2}.
\end{remark}

Consider one last time the top sequence from \eqref{comdiag}. Writing $M$ for the middle group in this sequence, we see that the weight-graded pieces are
\eq{}{W_iM_\Q = \begin{cases} H^1(E,\Q(2)) & i=-3 \\
\bigoplus_5 \Q(1) & i=-2 \\
\Q(0) & i=0.
\end{cases}
}
$M$ should be viewed as a representation of a sort of generalized
graded Lie algebra with graded pieces the above pure Hodge structures
(or pure motives). The Feynman integral is a period associated to the lower lefthand corner of the representation matrix. In trying to generalize to more complicated Feynman diagrams, two problems arise. Firstly, the pieces one sees combinatorially by shrinking edges on the graph have Hodge structures which are themselves mixed rather than pure. And secondly, it is not possible in general to make the intersection between the polar locus $X$ and the simplex at infinity transverse by simply blowing up faces of the simplex. Presumably, therefore, the analog of the duality used above
\eq{}{H^2(P-E,\frak h^0)^\vee \cong H^2(\G_m^2, E^0)(2)
} 
is not valid in general, which means that the link between the Feynman
integral and polylogarithms is more tenuous. 

\begin{remark}
In section \ref{sec:multi}, formula \eqref{eq:4.30} is rederived in terms of regulator currents, as a byproduct of the $K$-theoretic approach to the inhomogeneous Picard-Fuchs equation.  The relationship between these currents and $C$ (in \eqref{eq:4.25}) is explained in \cite{KerrLewisSMS}.
\end{remark}


\def\aa{\mathtt{a}}
\def\bb{\mathtt{b}}
\def\cc{\mathtt{c}}
\def\ay{\mathbf{i}}
\def\d{\partial}
\def\CC{\mathbb{C}}
\def\cB{\mathcal{B}}
\def\QQ{\mathbb{Q}}
\def\DD{\mathbb{D}}
\def\BB{\mathbb{B}}
\def\br{\mathrm}
\def\PP{\mathbb{P}}
\def\ux{\underline{\xi}}
\def\ua{\underline{a}}
\def\cO{\mathcal{O}}
\def\ul{\underline{\ell}}
\def\vf{\varphi}
\def\cK{\mathcal{K}}
\def\cH{\mathcal{H}}
\def\cL{\mathcal{L}}
\def\cP{\mathcal{P}}
\def\rX{\mathrm{X}}
\def\rY{\mathrm{Y}}
\def\GG{\mathbb{G}}
\def\DDelta{\Delta^{\circ}}
\def\hDelta{\hat{\Delta}}
\def\hDDelta{\hat{\Delta}^{\circ}}
\def\HH{\mathbb{H}}
\def\uv{\underline{v}}
\def\TT{\mathbb{T}}
\def\cT{\mathcal{T}}
\def\cX{\mathcal{X}}
\def\uz{\underline{z}}
\def\rXc{\rX^{\circ}}
\def\uw{\underline{w}}
\def\ukh{\underline{\hat{k}}}
\def\uq{\underline{q}}
\def\VV{\mathbb{V}}
\def\uxi{\underline{\xi}}

\def\MHS{mixed Hodge structure }
\def\LMHS{limiting mixed Hodge structure }
\def\VMHS{variation of mixed Hodge structure }
\def\VHS{variation of Hodge structure }
\def\MPCP{maximal projective crepant partial }
\def\SNCD{strict normal crossing divisors }
\def\NCD{normal crossing divisors }
\def\GW{Gromov-Witten }


\part{The local mirror symmetry}\label{part:Two}

In this part we revisit the approach of \cite{CKYZ} to local mirror symmetry,
by semi-stably degenerating a family of elliptically-fibered Calabi-Yau 3-folds $\mathrm{X}_{z_0,\uz}$ (defined by \eqref{MKeqnI0})
to a singular compactification of the local Hori-Vafa 3-fold
\[ Y_{\uz} := \{ 1-s(\xi_1^2 x + \xi_2^2 y +\xi_3^2)(1+x^{-1}+y^{-1}) + uv = 0\}\subset (\mathbb{C}^*)^2 \times \mathbb{C}^2 \]

\section{B-model}\label{sec:Bmodel}

In this section we describe the
degeneration from a compact Calabi-Yau 3-fold $\rX$ to the local Hori-Vafa
model $\rY$ (which is a noncompact Calabi-Yau 3-fold)~\cite{Hori:2000kt}.
 The main point is that the third
homology of $\rY$ matches the invariant part of the limiting mixed
Hodge structure of $H^{3}(\rX)$ (Theorem~\ref{MKthm1}). Comparing
with the \LMHS of the A-model in the next section will allow us to deduce
a strong form of local mirror symmetry --- equality of variations of
$\QQ$-\MHS --- which implies the conjecture~5.1\footnote{The numbers of section,
conjecture, theorem and equations refer to the published version of~\cite{DoranKerr}.}  from \cite{DoranKerr}, see
Theorem \ref{MKthm2}.

\subsection{Laurent polynomial}

\label{MKsecIA}Choose a reflexive polytope $\Delta\subset\RR^{2}$,
with polar polytope $\DDelta$, and write $r=|\d\Delta\cap\ZZ^{2}|$,
$r^{\circ}=|\d\DDelta\cap\ZZ^{2}|$, and $\nu\,(\leq r,r^{\circ})$
for the common number of edges and vertices of both $\Delta$ and
$\DDelta$. The toric surface associated to $\Delta$ is constructed
from the fan on (the vertices of) $\Delta^{\circ}$, and has canonical
desingularization $\PP_{\Delta}\twoheadrightarrow\check{\PP}_{\Delta}$
arising from the fan on all integer points of $\d\DDelta$. Writing
$\d\Delta\cap\ZZ^{2}=\{\underline{m}^{(j)}\}_{j=1}^{r}$, the general
Laurent polynomial with Newton polytope $\Delta$ is
\[
f_{\underline{a}}(x,y):=a_{0}+\sum_{j=1}^{r}a_{j}x^{m_{1}^{(j)}}y^{m_{2}^{(j)}}
\]
(with $a_{j}\in\CC^{*}$). The compactification of $\{f_{\underline{a}}(x,y)=0\}\subset\GG_{m}^{2}$
in $\PP_{\Delta}$ yields (for general $\{a_{j}\}$) a smooth elliptic
curve $E_{\underline{a}}$.

Jumping up two dimensions, in coordinates $(x_{1},x_{2},x_{3},x_{4})=(x,y,u,v)$
on $\GG_{m}^{4}$, we set \begin{equation}\label{MKeqnI0}F:= \aa + \bb u^2 v^{-1} +\cc u^{-1} v + u^{-1} v^{-1} f_{\underline{a}}(x,y)
\end{equation}(with $\aa,\bb,\cc\in\CC^{*}$). Its Newton polytope
\[
\hDelta:=\Delta(F)=\text{hull}\left\{ (0,0,2,-1),\,(0,0,-1,1),\,\Delta\times(-1,-1)\right\} 
\]
is reflexive since its polar
\[
\hDDelta=\text{hull}\left\{ (0,0,1,0),\,(0,0,0,1),\,6\DDelta\times(-2,-3)\right\} 
\]
is integral. Let $\check{\PP}_{\hDelta}$ be the toric 4-fold associated
to $\hDelta$ (i.e. to the fan on $\hDDelta$), and $\PP_{\hDelta}\twoheadrightarrow\check{\PP}_{\hDelta}$
a maximal projective crepant partial (MPCP) desingularization (arising
from the fan on a maximal triangulation $\text{tr}(\d\hDDelta)$).
For general $\aa,\bb,\cc,\underline{a}$, the compactification 
\[
\mathrm{X}_{F}:=\overline{\{F=0\}}\subset\PP_{\hDelta}
\]
is a smooth Calabi-Yau 3-fold.

To describe the coordinates about the large complex structure limit
in the simplified polynomial moduli space, consider the cone $\cL$
of $\ZZ_{\geq0}$-relations on the $\{\underline{m}^{(j)}\}$. If
it is simplicial and smooth with basis $\{\underline{\ell}^{(i)}\}_{i=1}^{r-2}$,
the coordinates are \begin{equation}\label{MKeqnI1/2}z_0 :=\frac{a_0 \bb^2 \cc^3}{\aa^6}\;\;\text{and}\;\;z_i:=\tfrac{\prod_{j=1}^r a_j ^{\ell_j^{(i)}}}{a_0^{\sum_j \ell_j^{(i)}}} \;(i=1,\ldots,r-2) .
\end{equation}Otherwise, there are more $z_{i}$'s (with relations), though $z_{0}$
remains the same; we will explain how to deal with this complication
for the sunset case at the end of the section.

What follows is a study of the degeneration of $\mathrm{X}_{F}$ as
$z_{0}\to0$.

\subsection{Maximal projective crepant partial desingularization}

\label{MKsecIB}$\PP_{\hDelta}$ is not unique, and for an arbitrary
choice of triangulation $\text{tr}(\d\hDDelta)$ may have isolated
terminal singularities. We shall now describe (and fix) a triangulation
which results in a smooth $\PP_{\hDelta}$.

The integral points on $\hDDelta$ are
\begin{gather*}
(0,0,1,0),\,(0,0,0,1),\,(6\DDelta)_{\ZZ}\times(-2,-3),\,(4\DDelta)_{\ZZ}\times(-1,-2),\\
(3\DDelta)_{\ZZ}\times(-1,-1),\:(2\DDelta)_{\ZZ}\times(0,-1),\,\Delta_{\ZZ}^{\circ}\times(0,0).
\end{gather*}
It has $\nu$ facets of the form
\[
\text{hull}\{6e_{i}^{\circ}\times(-2,-3),\,(0,0,1,0),\,(0,0,0,1)\}=:\mathfrak{f}_{\aa,i}^{\circ}
\]
where $\{e_{i}^{\circ}\}_{i=1}^{\nu}$ are edges of $\DDelta$, and
two of the form
\begin{gather*}
\text{hull}\{6\DDelta\times(-2,-3),\,(0,0,1,0)\}=:\mathfrak{f}_{\bb}^{\circ}\\
\text{hull}\{6\DDelta\times(-2,-3),\,(0,0,0,1)\}=:\mathfrak{f}_{\cc}^{\circ}.
\end{gather*}
The decomposition of these facets into elementary tetrahedra proceeds
in four steps:

\subsubsection*{Step 1:}

For each $\ZZ$-point $\underline{w}\in(\d\DDelta)_{\ZZ}$, draw the
half-space
\[
\HH_{\underline{w}}:=\overrightarrow{0.\underline{w}}\times\CC_{u,v}^{2}
\]
through it. This subdivides the facets.

\subsubsection*{Step 2:}

Up to unimodular transformation, the resulting ``slices'' of $\mathfrak{f}_{\bb}^{\circ}$
resp. $\mathfrak{f}_{\cc}^{\circ}$ are 
\[
\text{hull}\{(0,0,-2,-3),\,(6,0,-2,-3),\,(0,6,-2,-3),\,\small\begin{array}{c}
(0,0,1,0)\\
\text{resp. }(0,0,0,1)
\end{array}\}.
\]
To triangulate the first one (second is similar): first decompose
it into
\[
\mathfrak{f}_{\mathfrak{A}}=\text{hull}\{\begin{array}[t]{c}
\underbrace{(3,0,-1,-1),\,(0,3,-1,-1),\,(0,0,-1,-1),}\,(0,0,0,1)\\
\text{hull}=:\mathfrak{p_{A}}
\end{array}\}
\]
and
\[
\mathfrak{f_{B}}=\text{hull}\{\mathfrak{p_{A}},\begin{array}[t]{c}
\underbrace{(6,0,-2,-3),\,(0,6,-2,-3),\,(0,0,-2,-3)}\\
\text{hull}=:\mathfrak{p_{B}}
\end{array}\};
\]
on $\mathfrak{p_{A}}$ and $\mathfrak{p_{B}}$, draw all the integral
horizontal, vertical, and anti-diagonal ($x+y=k$) lines; then for
$\mathfrak{f_{A}}$, complete the resulting triangles to tetrahedra
with vertex at $(0,0,0,1)$; for $\mathfrak{f_{B}}$, further subdivide
into the 4 tetrahedra
\begin{gather*}
\text{hull}\{(0,0,-1,-1),\,(0,0,-2,-3),\,(3,-,-2,-3),\,(0,3,-2,-3)\}\\
\text{hull}\{(0,3,-1,-1),\,(0,3,-2,-3),\,(3,3,-2,-3),\,(0,6,-2,-3)\}\\
\text{hull}\{(3,0,-1,-1),\,(3,0,-2,-3),\,(6,0,-2,-3),\,(3,3,-2,-3)\}\\
\text{hull}\{(0,0,-1,-1),\,(3,0,-1,-1),\,(0,3,-1,-1),\,(3,3,-2,-3)\}
\end{gather*}
(treating these as with $\mathfrak{f_{A}}$) and the 2 ``skew''
tetrahedra
\begin{gather*}
\text{hull}\{(0,0,-1,-1),\,(0,3,-1,-1),\,(0,3,-2,-3),\,(3,3,-2,-3)\}\\
\text{hull}\{(0,0,-1,-1),\,(3,0,-1,-1),\,(3,0,-2,-3),\,(3,3,-2,-3)\}
\end{gather*}
(which get subdivided into elementary tetrahedra of the form
\[
\text{hull}\{(a,0,-1,-1),\,(a+1,0,-1,-1),\,(3,b,-2,-3),\,(3,b+1,-2,-3)\}).
\]

\subsubsection*{Step 3:}

For the $\mathfrak{f}_{\aa,i}^{\circ}$, it will not matter which
triangulation we choose. Two of the 2-faces of $\mathfrak{f}_{\aa,i}^{\circ}$
already receive a triangulation from Step 2. The other 2 may be star-triangulated
with centers of the form $\underline{v}\times(0,0)$, $\underline{v}\in\DDelta_{\ZZ}$.
Any 3-triangulation completing this will do.

\subsubsection*{Step 4:}

One checks that all of the tetrahedra in this triangulation are regular,
i.e. the determinants of their vertices are $\pm1$. This is not always
possible for a general 4-dimensional reflexive polytope, and shows
that $\PP_{\hDelta}$ \emph{is smooth}.

\subsection{Elliptic fibration}

\label{MKsecIC}Write $\Sigma_{\Delta}$ (resp. $\Sigma_{\hDelta}$)
for the fan on $\d\DDelta_{\ZZ}$ (resp. on the triangulation $\text{tr}(\d\hDDelta)$).
By Step 1 in \S\ref{MKsecIB}, we have a map of fans $\Sigma_{\hDelta}\to\Sigma_{\Delta}$
hence a diagram\[\xymatrix{\check{\PP}_{\hDelta} & \ar@{->>}[l] \PP_{\hDelta} \ar@{->>}[d]^{\mathcal{P}} & (x,y,u,v) \ar@{|->}[d]^{\text{(generically)}} \\ & \PP_{\Delta} & (x,y) }\]with
$\PP$ a morphism. The components of $\DD_{\hDelta}:=\PP_{\hDelta}\backslash\GG_{m}^{4}$
not lying over a component of $\DD_{\Delta}=\PP_{\Delta}\backslash\GG_{m}^{2}$,
are the ones dual to $\ZZ$-points of $\hDDelta$ with first two coordinates
$0$ (except the origin). These are $(0,0)\times$
\[
(1,0),\,(0,1),\,(-2,-3),\,(-1,-2),\,(-1,-1),\,(0,-1),
\]
the fan on which produces the toric surface $\PP_{W}$, which is a
desingularization of $\mathbb{WP}(1,2,3)$. So the ``generic'' fiber
of $\mathcal{P}$ is $\PP_{W}$, the ``correct'' toric surface in
which to compactify the (generalized) Weierstrass elliptic curve $F(x_{0},y_{0},u,v)=0$
(where $x_{0},y_{0}\in\CC^{*}$).

We now describe the induced elliptic fibration
\[
\mathrm{X}_{F}\overset{\rho}{\to}\PP_{\Delta}.
\]
Write \begin{equation}\label{MKeqnI0!}E_{\ua}^* = \{ f_{\ua}(x,y)=0\}\cap\GG_m^2 ,
\end{equation}with compactification $E_{\ua}\subset\PP_{\Delta}$; and \begin{equation}\label{MKeqnI0!!}D_{\ua,z_0}^* = \{ f_{\ua}(x,y)=\tfrac{a_0}{2\cdot 6^3 z_0} \} \cap \GG_m^2 ,
\end{equation}with compactification $D_{\ua,z_{0}}\subset\PP_{\Delta}$. We have
$E_{\ua}\cap D_{\ua,z_{0}}=E_{\ua}\cap\DD_{\Delta}=D_{\ua,z_{0}}\cap\DD_{\Delta}=:\BB_{\Delta,\ua}$
(which consists of $r$ points) for the base locus of the pencil $f_{\ua}(x,y)=\lambda$,
$\lambda\in\PP^{1}(\CC)$. The 1-dimensional fibers of $\rho$ are:
\begin{itemize}
\item over $\GG_{m}^{2}\backslash\{D_{\ua,z_{0}}^{*}\cup E_{\ua}^{*}\}$,
a smooth elliptic curve ``$\cE$'';
\item over $E_{\ua}^{*}$, type $I_{1}$ (nodal rational curve) with node
at $(u,v)=(0,0)$;
\item over $D_{\ua,z_{0}}^{*}$, type $I_{1}$ with node at $(u,v)=\left(\tfrac{-\aa^{3}}{12\bb\cc^{2}},\tfrac{\aa^{2}}{6\bb\cc}\right)$;
and
\item over $\DD_{\Delta}\backslash\{\text{sing}(\DD_{\Delta})\cup\BB_{\Delta}\}$,
type $II^{*}$ ($E_{8}$ configuration).
\end{itemize}
Indeed, the local system on $\GG_{m}^{2}\backslash\{E_{\ua}^{*}\cup D_{\ua,z_{0}}^{*}\}$
is the pullback (by $\lambda=f_{\ua}(x,y)\cdot2\cdot6^{3}\bb^{2}\cc^{3}/\aa^{6}$)
of the fibered wise first homology group $H_{1}$ of the family
\[
\frac{\lambda\aa^{6}}{2\cdot6^{3}\bb^{2}\cc^{3}}+\aa uv+\bb u^{3}+\cc v^{2}=0
\]
in $\PP_{W}$, with singular fibers at $0$ of type $I_{1}$, $1$ of type
$I_{1}$, $\infty$ of type $II^{*}$. On $\CC\backslash\{(-\infty,0]\cup[1,\infty)\}$,
we have a basis $\{\alpha,\beta\}$ of 1-cycles (for the local system)
with monodromies $\tiny\left(\begin{array}{cc}
1 & 1\\
0 & 1
\end{array}\right)$ {[}resp. $\tiny\left(\begin{array}{cc}
1 & 0\\
-1 & 1
\end{array}\right)$, $\tiny\left(\begin{array}{cc}
0 & -1\\
1 & 1
\end{array}\right)${]} about $0$ {[}resp. $1$, $\infty${]}; accordingly, we shall
write $\alpha$ resp. $\beta$ for the vanishing cycles of the pullback
local system at $E_{\ua}^{*}$ resp. $D_{\ua,z_{0}}^{*}$.

Over $\BB_{\Delta,\ua}$ and $\mathbb{S}_{\Delta}:=\text{sing}(\DD_{\Delta})$,
the fibers of $\rho$ have dimension 2. The components are obtained
by taking preimages of solutions to edge- and 2-face-polynomials of
$F$ under the blowups used to produce $\PP_{\hDelta}$ from $\check{\PP}_{\hDelta}$.
These preimages are hypersurfaces in components of $\DD_{\hDelta}$
corresponding to integer points of $\hDDelta$ (not in its interior
or that of its facets) lying in one of the open half-spaces $\HH_{\underline{w}}$
(for $\BB_{\Delta,\ua}$) or in between two of them (for $\mathbb{S}_{\Delta}$).

For $\BB_{\Delta,\ua}$, the sole contribution comes from the points
of the form $\uv\times(0,0)$ where $\uv$ is a vertex of $\DDelta$
(dual to an edge $e_{\uv}$ of $\Delta$). These belong to the interior
of a 2-face $\text{hull}\{6\uv\times(-2,-3),\,(0,0,1,0),\,(0,0,0,1)\}$
which is dual to the edge $e_{\uv}\times(-1,-1)$ of $\hDelta$. When
the corresponding 1-dimensional subspace of $\check{\PP}_{\hDelta}$
is blown up to a 3-dimensional one in $\PP_{\hDelta}$, the equation
is inherited from the edge polynomial of $F$, which cuts out a point
(or points) of $\BB_{\Delta,\ua}$. Since this blowup arises from
the star subdivision of Step 3 from \S\ref{MKsecIB}, we conclude
that the fiber over said point is a copy of $\PP_{W}$. 

For $\mathbb{S}_{\Delta}$, there are contributions from all the points
of the form 
\begin{gather*}
(2\d\DDelta)_{\ZZ}\backslash2\d\DDelta_{\ZZ}\times(0,-1),\qquad (2\d\DDelta)_{\ZZ}\backslash3\d\DDelta_{\ZZ}\times(-1,-1),\\
(2\d\DDelta)_{\ZZ}\backslash4\d\DDelta_{\ZZ}\times(-1,-2),\qquad
(6\DDelta)_{\ZZ}\backslash6\DDelta_{\ZZ}\times(-2,-3).
\end{gather*}	
Unimodular
transformation maps any edge of $\text{tr}(\d\DDelta)$ to $[(0,1),(1,0)]$,
hence for a given point $p\in\mathbb{S}_{\Delta}$ (dual to that edge),
one easily sees that $\rho^{-1}(p)$ consists of $21$ rational surfaces.
\begin{rem}
The fibration $\rho$ has an obvious section, given by the intersection
of $\mathrm{X}_{F}$ with the component of $\DD_{\hDelta}$ indexed
by the point $(0,0,-2,-3)\in\hDDelta_{\ZZ}$. In a generic fiber this
is the usual ``point at $\infty$'' in the Weierstrass elliptic
curve.
\end{rem}

\subsection{Middle homology of $\mathrm{X}_{F}$}

\label{MKsecID}We shall work henceforth under the assumption that
$a_{i>0}$ and $z_{0}$ are sufficiently small. Write $\ell(\theta)$
resp. $\ell^{*}(\theta)$ for the number of integral points in a polytope
$\theta$ resp. its interior. We have
\[
h^{3,0}(\mathrm{X}_{F})=\ell^{*}(\hDelta)=1
\]
since $\hDelta$ is reflexive, with the (unique up to scale) holomorphic
3-form given by
\[
\Omega_{F}=\frac{1}{(2\pi\ay)^{3}}\text{Res}_{\mathrm{X}_{F}}\left(\frac{dx/x\wedge dy/y\wedge du/u\wedge dv/v}{F}\right)\in\Omega^{3}(\mathrm{X}_{F}).
\]
We have also the Batyrev formula~\cite{Batyrev:1994hm} \begin{align*}h^{2,1}(\mathrm{X}_F) & = \ell(\hDelta) - \sum_{\tiny \begin{array}{c}\sigma\text{ facet}\\\text{of }\hDelta\end{array}}\ell^*(\sigma)+\sum_{\tiny \begin{array}{c}\theta\text{ 2-face}\\\text{of }\hDelta\end{array}}\ell^*(\theta)\ell^*(\theta^{\circ}) -5 \\
&= \{ \ell(\Delta)+6\}- {1+2}+0-5 \\
&= \ell(\Delta)-2 \\
&=r-1.
\end{align*}Now we shall use the structure of the elliptic fibration to exhibit
a \emph{basis} of $H_{3}(\mathrm{X}_{F},\QQ)$. (In what follows we
often drop subscripts $\ua$, $z_{0}$, $F$, etc.; moreover, some
steps are only sketched).

The basic observation is that
\[
\cK_{E}:=\ker\{H_{1}(E^{*})\to H_{1}(\GG_{m}^{2})\}=\ker\{H_{1}(E^{*})\to H_{1}(\GG_{m}^{2}\backslash D^{*})\}
\]
and
\[
\cK_{D}:=\ker\{H_{1}(D^{*})\to H_{1}(\GG_{m}^{2})\}=\ker\{H_{1}(D^{*})\to H_{1}(\GG_{m}^{2}\backslash E^{*})\}
\]
are $(r-1)$-dimensional spaces. (For instance, to see the second
equality for $\cK_{E}$, take $z_{0}$ small enough that $D$ lies
inside an $\epsilon$-neighborhood $U$ of $\DD_{\Delta}$, and replace
$E^{*}$ by $E\backslash\{E\cap U\}$.) Moreover, we have two obvious
3-cycles $\mathcal{T}_{\alpha}$ and $\mathcal{T}_{\beta}$ consisting
of parallel translates of $\alpha$ resp. $\beta$ over $\TT:=\{|x|=|y|=1\}$.
We will show that, together with these, certain cycles built from
$\cK_{E}$ and $\cK_{D}$ yield $2r$ independent 3-cycles on $\mathrm{X}_{F}$.

To construct these cycles, let $\{\vf\}=\left\{ \{\vf_{0}^{(i)}\}_{i=1}^{r-2},\vf_{1}\right\} \subset\cK_{E}$
be a basis (with $\{\vf_{0}^{(i)}\}$ all being homologous to one
$\vf_{0}$ on $E$). Choose for each $\varphi$ a 2-chain $\Gamma_{\vf}\subset\GG_{m}^{2}\backslash D^{*}$
with $\d\Gamma_{\vf}=\vf$; and let $\cM_{\alpha}(\vf)$ be a continuous
family of 1-cycles of class $[\alpha]$ over $\Gamma_{\vf}$ collapsing
to a point over $p$. We take $\Phi_{E}\subset H_{3}(\pi^{-1}(\GG_{m}^{2}\backslash D^{*}))$
to be the span of the $\{[\cM_{\alpha}(\vf)]\}$, and similarly $\Phi_{D}=\text{span}\{[\cM_{\beta}(\vf)]\}_{\vf\in\cK_{D}}\subset H_{3}(\rho^{-1}(\GG_{m}^{2}\backslash E^{*}))$. 

We may compute $H_{3}^{D}:=H_{3}(\rho^{-1}(\GG_{m}^{2}\backslash D^{*}))$
via the relative homology sequence 
\begin{align}\label{MKeqnI1}
	\cdots\to H_3(\rho^{-1}E^*)\overset{\psi_E}{\to}H_3^D&\to H_3\left(\rho^{-1}(\GG_m^2\setminus D^*),\pi^{-1}E^*\right)\\
	\notag &\overset{\theta_E}{\to} H_2(\rho^{-1}E^*)\to\cdots ,
\end{align}
in which
\[
\text{im}(\psi_{E})\cong\text{im}\left\{ H_{1}(E^{*})\to H_{1}(\GG_{m}^{2}\backslash D^{*})\right\} \otimes[\cE]\underset{n.c.}{\cong}H_{1}(\GG_{m}^{2}).
\]
The second isomorphism is not canonical.
Writing $\cH_{1}$ for $(R^{1}\rho_{*}\QQ)^{\vee}$, the Leray spectral
sequence yields 
\begin{align}\label{MKeqnI2}
0&\to H_1(\GG_m^2\setminus D^*,E^*)\otimes[\cE]\\
\notag &\to \ker(\theta_E)\to  H_2(\GG_m^2\setminus D^*,E^*;\cH_1 )\overset{\theta_E '}{\to} H_1(E^*,\cH_1/\langle\alpha\rangle )
\end{align}
so that \eqref{MKeqnI1} becomes \begin{equation}\label{MKeqnI3}0\to H_1(\GG_m^2\setminus D^*)\otimes [\cE] \to H_3^D\to \ker(\theta_E ')\to 0 .
\end{equation}
Using the exact sequences
\begin{gather*}
0\to H_{2}(\GG_{m}\backslash D^{*})\to H_{2}(\GG_{m}^{2}\backslash D^{*},E^{*})\to\cK_{E}\to0\\
0\to H_{1}(D^{*})\overset{\text{Tube}}{\text{\ensuremath{\to}}}H_{2}(\GG_{m}^{2}\backslash D^{*})\to H_{2}(\GG_{m}^{2})\to0,
\end{gather*}
and writing $\mathsf{T}:=\QQ\langle[\cT_{\alpha}],[\cT_{\beta}]\rangle\subset H_{3}^{D}$
and $\Psi_{D}:=\text{Tube}(H_{1}(D^{*}))\otimes[\beta]$, one can
then show (with some work) that $\mathsf{T}\oplus\Psi_{D}\oplus\Phi_{E}$
maps isomorphically to $\ker(\theta_{E}')$. Repeat this whole argument
with $D$ and $E$ (and $\alpha$ and $\beta$) swapped to compute
$H_{3}^{E}$.

Next, one explicitly checks that $H_{2}(\rho^{-1}(\GG_{m}^{2}\backslash\{D^{*}\cup E^{*}\}))=:H_{2}^{DE}$
injects into $H_{2}^{D}\oplus H_{2}^{E}$, so that \begin{equation}\label{MKeqnI4}H_3 (\rho^{-1}\GG_m^2 )\cong\frac{H^D_3\oplus H_3^E }{\text{im}\{H_3^{DE}\} } .
\end{equation}
The Leray spectral sequence for $\rho$ yields 
\begin{align}\label{MKeqnI5}
0&\to H_1 (\GG_m^2\setminus \{D^* \cup E^*\} )\otimes [\cE]\\
\notag &\to H_3^{DE} \to H_2(\GG_m^2\setminus \{ D^*\cup E^*\} ;\cH_1 )\to 0 ,
\end{align}
which (comparing with \eqref{MKeqnI3}) breaks the computation of
the quotient \eqref{MKeqnI4} into two pieces: for the ``left-hand''
piece, we have
\[
\frac{H_{1}(\GG_{m}^{2}\backslash D^{*})\oplus H_{1}(\GG_{m}^{2}\backslash E^{*})}{\text{im}\{H_{1}(\GG_{m}^{2}\backslash\{D^{*}\cup E^{*}\})\}}\otimes[\cE]\cong H_{1}(\GG_{m}^{2})\otimes[\cE].
\]
For the ``right-hand'' piece, the quotient of $\ker(\theta_{E}')$
by the right-hand term of \eqref{MKeqnI5}, which is an extension
\[
0\to\Psi_{D}\oplus\Psi_{E}\to H_{2}(\GG_{m}^{2}\backslash\{D^{*}\cup E^{*}\};\cH_{1})\to\mathsf{T}\to0,
\]
is evidently isomorphic to $\Phi_{E}\oplus\Phi_{D}\oplus\mathsf{T}$.

Finally, we consider the cohomology of the normal crossing divisor\linebreak
$\rho^{-1}\DD_{\Delta}=\cup\cR_{i}$; here the $\cR_{i}$ are rational
surfaces (meeting along $\PP^{1}$'s) indexed by $\ZZ$-points of
$\hDDelta$ with $(x,y)\neq(0,0)$ and not in the interior of a facet.
By studying the part of the 2-skeleton of $\text{tr}(\d\hDDelta)$
meeting these points, we compute the spectral sequence converging
to $H^{*}(\rho^{-1}\DD_{\Delta})$, with $E_{1}$ page
\[
\left|\underline{\begin{array}{ccccc}
\oplus H^{4}(\cR_{i})\\
0\\
\oplus H^{2}(\cR_{i}) & \overset{\varepsilon}{\to} & \oplus H^{2}(\cR_{ij})\\
0 &  & 0\\
\oplus H^{0}(\cR_{i}) & \to & \oplus H^{0}(\cR_{ij}) & \to & \oplus H^{0}(\cR_{ijk}).
\end{array}}\right.
\]
The \emph{cohomology} ranks of the bottom row are just the Betti numbers
$1,1,0$ of the $2$-skeleton, so that $H^{1}(\DD)\to H^{1}(\rho^{-1}\DD)\overset{\text{Tube}}{\to}H_{4}(\mathrm{X}\backslash\rho^{-1}\DD_{\Delta})$
are isomorphisms (all of rank 1, with Tube hitting $\TT\otimes[\cE]$).
One also deduces from this that $\ker\{\oplus H_{2}(\cR_{ij})\to\oplus H_{4}(\cR_{i})\}$,
hence $\text{coker}(\varepsilon)$ and $H^{3}(\rho^{-1}\DD_{\Delta})$,
have rank 1; it follows that $H^{3}(\rho^{-1}\DD_{\Delta})\overset{\text{Tube}}{\to}H_{2}(\mathrm{X}\backslash\rho^{-1}\DD_{\Delta})$
is injective, with Tube hitting $\TT\otimes[\text{pt.}]$. Moreover,
we have
\[
\begin{cases}
\text{rank}(\oplus H^{4}(\cR_{i}))  =  30r^{\circ}+\nu\\
\text{rank}(\oplus H^{2}(\cR_{i}))  =  120r^{\circ}+5\nu\\
\text{rank}(\oplus H^{2}(\cR_{ij}))  =  90r^{\circ}+4\nu
\end{cases}
\]
and so conclude that $h^{2}(\pi^{-1}\DD_{\Delta})=120r^{\circ}+5\nu-(90r^{\circ}+4\nu)+1=30r^{\circ}+\nu+1.$
Now by Batyrev~\cite{Batyrev:1994hm} \begin{align*}h^4(\mathrm{X}) &= h^2(\mathrm{X}) = \ell(\hDDelta)-\sum_{\tiny \begin{array}{c}\sigma^{\circ}\text{ facet}\\\text{of }\hDDelta\end{array}} \ell^*(\sigma^{\circ}) + \sum_{\tiny \begin{array}{c}\theta^{\circ}\text{ 2-face}\\\text{of }\hDDelta\end{array}} \ell^*(\theta^{\circ})\ell^*(\theta) - 5 \\
&= (31 r^{\circ} + 7) - (r^{\circ}-\nu +3)-5 \\
&= 30r^{\circ} + \nu -1.
\end{align*}By the exact sequence
\[
0\to H_{4}(\mathrm{X})\to H^{2}(\rho^{-1}\DD_{\Delta})\overset{\text{Tube}}{\to}H_{3}(X\backslash\rho^{-1}\DD_{\Delta})\to H_{3}(\mathrm{X})\to0
\]
we now have $\text{rank}(\text{Tube})=h^{2}(\rho^{-1}\DD_{\Delta})-h^{4}(\mathrm{X})=2$.
Since $H_{1}(\GG_{m}^{2})\otimes[\cE]$ is evidently \emph{in} the
image of Tube, this \emph{is} $\text{im}(\text{Tube})$ and thus \begin{equation}\label{MKeqnI6}H_3(\mathrm{X})=\Phi_D \oplus \Phi_E \oplus \mathsf{T} .
\end{equation}

\subsection{Degeneration as $z_{0}\to0$}

\label{MKsecIE}We shall need to replace $z_{0}$ by $t$, which amounts
to pullback by $t\mapsto t^{5}$. Set $\bb=\cc=t$, $\aa=1$ in \eqref{MKeqnI0}.
Write $\mathrm{X}_{\ua,t}$ (or just $\mathrm{X}_{t}$) for the corresponding
Calabi-Yau 3-fold; fix an $\ua$ and disk $\mathbbold{\Delta}$ about
$0$, such that $\cX_{\ua}\to\mathbbold{\Delta}$ (with fibers $\mathrm{X}_{\ua,t}$)
is smooth away from $\{0\}$. For the singular fibers write
\[
\mathrm{X}_{\ua,0}=\cup_{i\geq0}W_{i},
\]
where
\[
W_{0}=\overline{Y}:=\overline{\left\{ f_{\ua}(x,y)+uv=0\right\} }\subset\PP_{\hDelta}
\]
and $W_{i>0}$ are the components of $\DD_{\hDelta}$ corresponding
to integer points of $\hDDelta$ contained in the interiors of the
2-face $6\DDelta\times(-2,-3)$ and of the facets $\mathfrak{f}_{\bb}^{\circ}$
and $\mathfrak{f}_{\cc}^{\circ}$. Write $I_{\DDelta}$ for the index
(sub)set corresponding \emph{just} to the interior points of $6\DDelta\times(-2,-3)$.

The singular locus of the total space is contained in $\mathrm{X}_{\ua,0}$;
more precisely, it consists of the intersections $W_{i}\cap W_{j}\cap\mathrm{X}_{\ua,t\neq0}\cong\PP^{1}$
with $i,j\in I_{\DDelta}$. Let $\PP_{\hDelta}'\overset{B}{\twoheadrightarrow}\PP_{\hDelta}$
denote the blow-up along the smooth rational surfaces $\{W_{i}\cap\mathrm{X}_{\ua,t\neq0}\}_{i\in I_{\DDelta}}$,
in any order, and $\cX'\twoheadrightarrow\cX$ the proper transform
under $B\times\text{id}_{\mathbbold{\Delta}}$. Note that $\cX'$
is smooth, with fibers over $\mathbbold{\Delta}^{*}$ unchanged, and
$\mathrm{X}_{\ua,0}'=\cup W_{i}'$ having no \emph{additional} irreducible
components. Indeed, the only change is that some irreducible components
of $\mathrm{X}_{0}$ have been blown up along some $\PP^{1}$'s. Write
$\DD_{\hDelta}'$, $\overline{Y}'$, etc. for proper transforms.

Furthermore, $\mathrm{X}_{0}$ and $\mathrm{X}_{0}'$ are smooth normal
crossing divisors in $\PP_{\hDelta}$ and $\PP_{\hDelta}'$ respectively, and
$\mathrm{X}_{0}'$ is a reduced \SNCD in $\cX'$ --- i.e. $\cX'\to\mathbbold{\Delta}$
is a semistable degeneration. Because this is a ``partial'' toric
degeneration (i.e. $\mathrm{X}_{0}\neq\DD_{\hDelta}$), these facts
are not automatic. The \NCD property is checked by computations in
local coordinate systems associated to the individual tetrahedra in
$\text{tr}(\mathfrak{f}_{\bb}^{\circ})$ and $\text{tr}(\mathfrak{f}_{\cc}^{\circ})$;
it holds for the triangulation described in \S\ref{MKsecIB}, but
not for another triangulation we considered. Also visible in these
local coordinates is the fact that the equation of $\cX'$ takes the
form $t=\prod_{i=1}^{M}f_{i}(\underline{X})$ ($M\leq4$) where the
$\vec{\nabla}f_{i}(\underline{X})$ are independent along intersections
of the respective components. (Since these computations are both lengthy
and straightforward, we omit them.) 

We describe the degenerated elliptic fibration $\mathrm{X}_{0}\overset{\rho}{\to}\PP_{\Delta}$,
noting that as $t\to0$ ($z_{0}\to0$), $D_{\ua,z_{0}}\to D_{\ua,0}=\DD_{\Delta}$.
Over $\PP_{\Delta}$, there are 5 components (including $\overline{Y}$),
forming an $I_{5}$ over $\GG_{m}^{2}\backslash E^{*}$ and an $I_{6}$
over $E^{*}$. Over each $\PP^{1}\subset\DD_{\Delta}$, $\mathrm{X}_{0}$
has 11 components; while over each point in $\mathbb{S}_{\Delta}$,
there are 14. $\overline{Y}$ itself is generically a $\PP^{1}$-bundle,
whose fiber breaks into 2 $\PP^{1}$'s (joined at $(u,v)=(0,0)$)
over $E^{*}$ and 8 $\PP^{1}$'s over $\DD_{\Delta}\backslash\{\mathbb{S}_{\Delta}\cup\BB_{\Delta}\}$,
while $\rho^{-1}(\BB_{\Delta})\cap\overline{Y}$ is a configuration
of 5 rational surfaces (in addtion to the 11 which lie over every
point of $\DD_{\Delta}\backslash\mathbb{S}_{\Delta}$). This description
does not change for $\mathrm{X}_{0}'$.

\subsection{Middle cohomology of $\mathrm{X}_{0}$}

\label{MKsecIF}By Clemens-Schmid~\cite{Clemens,Schmid} we have an exact sequence of \MHS
\[
H_{5}(\mathrm{X}_{0}')(-4)\to H^{3}(\mathrm{X}_{0}')\to H_{lim}^{3}(\mathrm{X}_{t})\overset{N}{\to}H_{lim}^{3}(\mathrm{X}_{t})(-1)
\]
since $\cX'\to\mathbbold{\Delta}$ is a semistable degeneration. Using
the combinatorics of $\text{tr}(\hDDelta)$ one shows that $\oplus H^{4}(W_{i}')\twoheadrightarrow\oplus H^{4}(W_{ij}')$,
and clearly $\oplus H^{5}(W_{i}')=H^{5}(\overline{Y}')=H_{1}(\overline{Y}')(-3)$
which we shall show is zero (see below). So $H^{5}(\mathrm{X}_{0}')$,
hence $H_{5}(\mathrm{X}_{0}')$, is zero; writing $H_{inv}^{3}(\mathrm{X}_{t})$
for the \MHS $\ker(T-I)=\ker(N)\subset H_{lim}^{3}(\mathrm{X}_{t})$,
we have 
\[
H^{3}(\mathrm{X}_{0}')\cong H_{inv}^{3}(\mathrm{X}_{t}).
\]

\begin{rem}
The only possible discrepancy between $H^{3}(\mathrm{X}_{0}')$ and
$H^{3}(\mathrm{X}_{0})$ is in $Gr_{2}^{W}$, for which we have the
diagram
	
\[\xymatrix{\oplus H^2(W_i) \ar @{^(->} [d]^{B^*} \ar [r]^{\delta} & \oplus H^2(W_{ij}) \ar @{^(->} [d]^{B^*} \ar [r]^{\delta} & \oplus H^2(W_{ijk}) \ar @{=} [d] \\ \oplus H^2(W_i ') \ar [r]^{\delta} & \oplus H^2(W_{ij} ') \ar [r]^{\delta} & \oplus H^2(W_{ijk} ') . }\]Any
topological cycle $Z$ with $[Z]\in\ker(\delta)\subset\oplus H^{2}(W_{ij})$
can be moved to avoid the blowup points. As a consequence, if $B^{-1}(Z)=\delta\mathfrak{Z}$
then $Z=\delta(B(\mathfrak{Z}))$, showing $H^{3}(\mathrm{X}_{0})\hookrightarrow H^{3}(\mathrm{X}_{0}')$.
\end{rem}
Recalling that $Y_{\ua}'=W_{0}'$, set
\[
Y_{\ua}:=\overline{Y}_{\ua}\cap(\GG_{m}^{2}\times\mathbb{A}^{2})=\overline{Y}_{\ua}'\cap(\GG_{m}^{2}\times\mathbb{A}^{2})\subset\overline{Y}_{\ua}'\backslash\{\overline{Y}_{\ua}'\cap(\oplus_{i\geq1}W_{i}')\},
\]
the solution set of $f_{\ua}(x,y)+uv=0$ with $x,y\in\CC^{*}$ and
$u,v\in\CC$. It is a $\GG_{m}$-bundle over $\GG_{m}^{2}\backslash E_{\ua}$
which degenerates to two affine lines meeting at $(u,v)=\underline{0}$
over $E_{\ua}$. As in \cite[\S 5.1]{DoranKerr} we have exact sequences
of \MHS 
\begin{align}\label{MKeqnI7}
	H_{k-1}(E_{\ua}^*)(1) &\overset{(I_*,0)}{\to} H_{k-1}(\GG_m^2)(1)\oplus H_k(\GG_m^2)\\
	\notag &\to H_k(Y_{\ua})  \to H_{k-2}(E_{\ua}^*)(1)\to H_{k-2}(\GG_m^2)(1) .
\end{align}
Setting $k=1$ gives $H_{1}(Y)=H_{1}(\GG_{m}^{2})\otimes[\text{pt.}]$,
which evidently maps to $0$ in $H_{1}(\overline{Y}')$, so that $H_{1}(\overline{Y}')=\{0\}$.

For $k=3$, \eqref{MKeqnI7} becomes \begin{equation}\label{MKeqnI8}0\to H_2 (\GG_m^2)(1)\overset{\otimes [S^1]}{\to} H_3(Y_{\ua}) \overset{\xi}{\to} \cK_E(1) \to 0 .
\end{equation}The cycles $\{\cM_{\alpha}(\vf)\}_{\vf\in\cK_{E}}$ and $\cT_{\alpha}$
evidently limit (with $t\to0$) to cycles $\{\cM(\vf)\}_{\vf\in\cK_{E}}$
and $\cT$ on $Y_{\ua}$, with the $S^{1}$ on the $\GG_{m}$-fibers
replacing $\alpha$. Clearly $[\cT]=\text{im}(\otimes[S^{1}])$, and
$\text{span}\{[\cM(\vf)]\}$ maps isomorphically to $\cK_{E}(1)$
(cf. the construction of the right-inverse%
\footnote{as morphism of $\QQ$-vector spaces, \emph{not} \MHS%
} ``$M$'' to $\xi$ in\cite[\S 5.1]{DoranKerr} 
So as a $\QQ$-vector space,
$H_{3}(Y_{\ua})\cong\Phi_{E}\oplus\langle\cT\rangle$; as a \MHS\unskip, $H_{3}(Y_{\ua})(-3)$
is an extension of $\cK_{E}(-2)$ (which has type $(2,1)+(1,2)+(1,1)$)
by a $\QQ(0)$ (spanned by $\cT$).

Now consider the composite morphism 
\begin{align*}
	\Theta':\, H_{3}(Y_{\ua})(-3) &\cong H^{3}(\overline{Y}',\overline{Y}'\backslash Y)\to H^{3}(W_{0}',\cup W_{0i}') \\ 
	&=H^{3}(\mathrm{X}_{0}',\cup_{i\geq1}W_{i}')\to H^{3}(\mathrm{X}_{0}')
\end{align*}of \MHS (one defines $\Theta$ similarly). On the level of closed chains,
$\Theta'$ is induced by \begin{align*}
\tilde{\Theta} ' :\, Z^{top}_3(Y) & \rightarrow \ker\left\{ Z_3^{top}(\overline{Y}')_{\#} \to \oplus_i Z_i^{top}(W_{0i} ')\right\} \\
& \hookrightarrow \ker \left\{ \delta:\, \oplus_i Z_{top}^3 (W_i ')_{\#} \to \oplus_{i,j} Z^3_{top} (W_{ij} ') \right\} \\
& \rightarrow \ker \left\{ \DD :\, \oplus_I Z_{top}^{3-|I|} (W_I ')_{\#} \to \oplus_J Z^{4-|J|}_{top} (W_J ') \right\}
\end{align*}where $\#$ denotes intersection conditions and $\DD$ is the total
differential for the complex computing $H^{*}(\mathrm{X}_{0}')$.
The main point here is that the Clemens retraction map $\mathfrak{r}:H^{3}(\mathrm{X}_{0}')\to H^{3}(\mathrm{X}_{t})$
is given (on $\ker(\delta)$, hence $\text{im}(\tilde{\Theta}')$)
by simple preimage under $\mathfrak{r}:\mathrm{X}_{t}\twoheadrightarrow\mathrm{X}_{0}'$.
Since this obviously sends $\cT\mapsto\cT_{\alpha}$ and $\cM(\vf)\mapsto\cM_{\alpha}(\vf)$,
the composite MHS morphism \[\xymatrix{H_3(Y_{\ua})(-3) \ar @/_1pc/ @{-->} [rr]_{\Theta '} \ar [r]^{\Theta} & H^3(\mathrm{X}_0) \ar @{^(->} [r]^{B^*} & H^3(\mathrm{X}_0 ') \ar @/_1pc/ @{-->} [rr]_{\mathfrak{r}^*} \ar [r]^{\cong} & H^3_{inv}(\mathrm{X}_t) \ar @{^(->} [r] & H^3_{lim}(\mathrm{X}_t) }\]sends
$[\cT]\mapsto[\cT_{\alpha}]$ and $[\cM(\vf)]\mapsto[\cM_{\alpha}(\vf)]$.
Since these classes remain independent in $H^{3}(\mathrm{X}_{t})$,%
\footnote{either by the computation of the basis in \S\ref{MKsecID}, or by
the Remark below%
} $\Theta$ and $\Theta'$ are injective. Consequently $H_{lim}^{3}(\mathrm{X}_{t})$
has $I^{0,0}\supseteq I^{0,0}(H_{3}(Y)(-3))$ of rank at least $1$,
$I^{1,1}\supseteq N^{+}(I^{0,0})\oplus I^{1,1}(H_{3}(Y)(-3))$ of
rank at least $1+(r-3)=r-2$, and $I^{2,1}\cong I^{1,2}\supset I^{1,2}(H_{3}(Y)(-3))$
of rank at least $1$. The only possible \LMHS type, given that $H_{lim}^{3}(\mathrm{X}_{t})$
has $Gr_{F}^{i}$ ranks $1$, $r-1$, $r-1$, $1$, is \[\begin{minipage}{0.3\textwidth}  \includegraphics[scale=0.65]{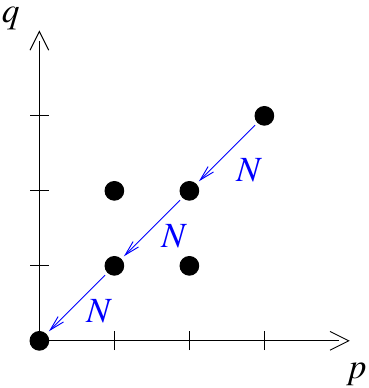} \end{minipage} \hfill \begin{minipage}{0.65\textwidth} \begin{align*} 1=\text{rk}(I^{0,0})=\text{rk}(I^{2,1})=\text{rk}(I^{1,2})=\text{rk}(I^{3,3}) \\ r-2 = \text{rk}(I^{1,1})=\text{rk}(I^{2,2}) \\ \text{(all others zero)} \end{align*} \end{minipage} \]This
implies at once that (the image of) $H_{3}(Y_{\ua})(-3)$ is \emph{all}
of $\ker(N)=H_{inv}^{3}(\mathrm{X}_{t})$, and so $\Theta$, $\Theta'$,
and $B^{*}$ are all isomorphisms:
\begin{thm}\phantomsection\label{MKthm1}
We have isomorphisms of $\QQ$-\VMHS  
\[
H_{3}(Y_{\ua})(-3)\cong H^{3}(\mathrm{X}_{\ua,0})\cong H^{3}(\mathrm{X}_{\ua,0}')\cong H_{inv}^{3}(\mathrm{X}_{\ua,t}).
\]
\end{thm}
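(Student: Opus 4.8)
The plan is to assemble the ingredients accumulated in \S\ref{MKsecIA}--\S\ref{MKsecIF}. Since $\cX'\to\mathbbold{\Delta}$ is a semistable degeneration (\S\ref{MKsecIE}), Clemens--Schmid gives an exact sequence of mixed Hodge structures
\[
H_{5}(\mathrm{X}_{0}')(-4)\to H^{3}(\mathrm{X}_{0}')\to H^{3}_{lim}(\mathrm{X}_{t})\overset{N}{\to}H^{3}_{lim}(\mathrm{X}_{t})(-1).
\]
First I would show $H^{5}(\mathrm{X}_{0}')=0$ from the weight/Mayer--Vietoris spectral sequence of the normal crossing divisor $\mathrm{X}_{0}'=\cup W_{i}'$: the combinatorics of $\text{tr}(\d\hDDelta)$ give $\oplus H^{4}(W_{i}')\twoheadrightarrow\oplus H^{4}(W_{ij}')$, while $\oplus H^{5}(W_{i}')=H^{5}(\overline{Y}')=H_{1}(\overline{Y}')(-3)$, and \eqref{MKeqnI7} with $k=1$ gives $H_{1}(Y_{\ua})=H_{1}(\GG_{m}^{2})\otimes[\mathrm{pt.}]$, which dies in $H_{1}(\overline{Y}')$; hence $H_{1}(\overline{Y}')=0$ and $H^{5}(\mathrm{X}_{0}')=0$. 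The sequence then collapses to $H^{3}(\mathrm{X}_{0}')\cong\ker(N)=H^{3}_{inv}(\mathrm{X}_{t})$. The gap between $H^{3}(\mathrm{X}_{0})$ and $H^{3}(\mathrm{X}_{0}')$ can only appear in $\mathrm{Gr}^{W}_{2}$, and since a topological cycle $Z$ with $[Z]\in\ker(\delta)$ can be moved off the blow-up centers one gets $B^{*}\colon H^{3}(\mathrm{X}_{0})\hookrightarrow H^{3}(\mathrm{X}_{0}')$.

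Next I would build the cycle map. Define $\Theta'$ as the composite of morphisms of mixed Hodge structures
\[
H_{3}(Y_{\ua})(-3)\cong H^{3}(\overline{Y}',\overline{Y}'\backslash Y)\to H^{3}(W_{0}',\cup_{i}W_{0i}')=H^{3}(\mathrm{X}_{0}',\cup_{i\geq1}W_{i}')\to H^{3}(\mathrm{X}_{0}'),
\]
and $\Theta$ analogously into $H^{3}(\mathrm{X}_{0})$; on closed chains these are the maps $\tilde{\Theta}'$, $\tilde{\Theta}$ of \S\ref{MKsecIF}. The crucial input is that on $\ker(\delta)=\mathrm{im}(\tilde{\Theta}')$ the Clemens retraction $\mathfrak{r}^{*}\colon H^{3}(\mathrm{X}_{0}')\to H^{3}(\mathrm{X}_{t})$ is simple preimage under $\mathfrak{r}\colon\mathrm{X}_{t}\twoheadrightarrow\mathrm{X}_{0}'$, which relies on the local normal form $t=\prod_{i=1}^{M}f_{i}(\underline{X})$ with independent $\vec{\nabla}f_{i}$ along the strata. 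Then $\mathfrak{r}^{*}\circ B^{*}\circ\Theta$ carries $[\cT]\mapsto[\cT_{\alpha}]$ and $[\cM(\vf)]\mapsto[\cM_{\alpha}(\vf)]$, and since these $r$ classes are part of the basis of $H_{3}(\mathrm{X}_{t})$ produced in \S\ref{MKsecID}, they are linearly independent; hence $\Theta$ and $\Theta'$ are injective.

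Finally I would close the loop with a rigidity argument on the limiting Hodge type. Injectivity of $\Theta'$, together with the fact that $H_{3}(Y_{\ua})(-3)$ is an extension of $\cK_{E}(-2)$ (of type $(2,1)+(1,2)+(1,1)$) by $\QQ(0)$, forces $H^{3}_{lim}(\mathrm{X}_{t})$ to have $I^{0,0}$ of rank $\geq1$, $I^{1,1}\supseteq N^{+}(I^{0,0})\oplus I^{1,1}(H_{3}(Y_{\ua})(-3))$ of rank $\geq1+(r-3)=r-2$, and $I^{2,1}\cong I^{1,2}$ of rank $\geq1$. Since $H^{3}_{lim}(\mathrm{X}_{t})$ has $\mathrm{Gr}_{F}$-ranks $1,r-1,r-1,1$, the only $\mathfrak{sl}_{2}$-compatible limiting type satisfying these bounds has $\mathrm{rk}\,I^{0,0}=\mathrm{rk}\,I^{2,1}=\mathrm{rk}\,I^{1,2}=\mathrm{rk}\,I^{3,3}=1$ and $\mathrm{rk}\,I^{1,1}=\mathrm{rk}\,I^{2,2}=r-2$, everything else zero. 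A dimension count ($\dim\ker(N)=r=\dim H_{3}(Y_{\ua})(-3)$) then forces $\mathrm{im}(\Theta')=\ker(N)=H^{3}_{inv}(\mathrm{X}_{t})$, so $\Theta$, $\Theta'$, $B^{*}$ are all isomorphisms, and since each is a morphism of the ambient variations the asserted isomorphisms of $\QQ$-variations of mixed Hodge structure follow.

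The main obstacle is hidden in the cycle-map step: identifying the Clemens retraction on $\ker(\delta)$ with naive preimage rests on the explicit local description of $\cX'$ near $\mathrm{X}_{0}'$ --- the factorization $t=\prod f_{i}(\underline{X})$ with independent gradients and the reduced normal-crossings property --- which is \emph{not} automatic for a partial toric degeneration and, as flagged in \S\ref{MKsecIE}, fails for triangulations other than the one fixed in \S\ref{MKsecIB}. Granting that local geometry and the basis of \S\ref{MKsecID}, the Clemens--Schmid vanishing input and the limiting-Hodge-type rigidity count are essentially formal.
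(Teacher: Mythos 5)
Your proposal reproduces the paper's own argument essentially step for step: Clemens--Schmid plus the vanishing of $H^{5}(\mathrm{X}_{0}')$ via $H_{1}(\overline{Y}')=0$, the cycle map $\Theta'$ through $H^{3}(\overline{Y}',\overline{Y}'\backslash Y)$, injectivity via the Clemens retraction and the basis of \S\ref{MKsecID}, and the final rigidity count on the limiting Hodge type forcing $\mathrm{im}(\Theta')=\ker(N)$. It is correct and takes the same route as the paper, including the correctly flagged reliance on the semistability/local normal form of $\cX'$ for the chosen triangulation.
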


\subsection{Monodromy and asymptotics of periods}

\label{MKsecIG}We begin by addressing the nature of the limiting
periods (i.e., by Theorem~\ref{MKthm1}, periods on $Y_{\ua}$). Set\begin{align*}
\eta_{\ua} & := \frac{1}{(2\pi \ay)^3} \text{Res}_{Y_{\ua}} \left( \frac{dx/x\wedge dy/y \wedge du\wedge dv}{f_{\ua}(x,y)+uv}\right)
\\
& = \left. \frac{1}{(2\pi \ay)^3} \frac{dx}{x}\wedge \frac{dy}{y}\wedge \frac{du}{u}\right|_{Y_{\ua}} \in \Omega^3\left( Y_{\ua} \right) .
\end{align*}Write \begin{equation}\label{MKeqnIR}R\{ x,y\} :=\log(x)\frac{dy}{y} - 2\pi \ay \log(y) \delta_{T_x}
\end{equation}for the 1-current on $\GG_{m}^{2}$, where $\log(x)$ is the (discontinuous)
branch with argument in $(-\pi,\pi)$, and $T_{x}=x^{-1}(\RR_{-})$
(with $\RR_{-}$ oriented from $-\infty$ to $0$). For any invariant
3-cycle $\kappa$, 
\[
\lim_{t\to0}\int_{\kappa}\Omega_{\ua,t}=\int_{\kappa}\eta_{\ua}
\]
which for $\kappa=\cT$ is $1$ and for $\kappa=\cM(\vf)$ is (according
to \cite[\S 5.1]{DoranKerr})\begin{equation}\label{MKeqnInew*}\frac{1}{(2\pi\ay)^{2}}\int_{\Gamma_{\vf}}\frac{dx}{x}\wedge\frac{dy}{y}=\frac{1}{(2\pi\ay)^{2}}\int_{\vf}R\{x,y\}|_{E_{\ua}^{*}}=:\frac{1}{2\pi\ay}R_{\vf}(\ua).
\end{equation}These ``regulator periods'' were computed in \cite[\S 5.2]{DoranKerr}
for a specific choice%
\footnote{cf. p. 487 of \cite[\S 5.2]{DoranKerr}, 
 where these ``distinguished vanishing
cycles'' are denoted $\vf_{0}^{[i]}$. Also see \cite[\S 4.1-2]{BKV}
for a brief introduction to regulator currents in the context of Feynman
integrals.%
} of $\nu-2$ $\{\vf_{0}^{(i)}\}$ and with only those $\{a_{j}\}$
attached to vertices nonzero. If $\Delta$ has $\nu=r$ (no interior
integral points on edges), then these $\{\vf_{0}^{(i)}\}$ are ``enough''
(we need $r-2$) and no $a_{j}$ are set equal to $0$.

In that case, and with that choice of basis --- to both of which we
shall restrict for the moment --- we get an alternate proof of the
independence of the $r$ invariant cycles $\left\{ \cM_{\alpha}(\vf_{1}),\{\cM_{\alpha}(\vf_{0}^{(i)})\}_{i=1}^{r-2},\tau_{\alpha}\right\} $
in $H_{(lim)}^{3}(\mathrm{X}_{t})$, which is simpler than constructing
the full basis of $H^{3}(\mathrm{X}_{t})$. Namely, observe
(cf. \cite[\S 5.2]{DoranKerr})
that\begin{equation}\label{MKeqnIp21sharp}\begin{cases} R_{\vf_0^{(i)}}(\ua)  \sim  \log(z_i) \;\;\;\;(i=1,\ldots,r-2) \\ R_{\vf_1}(\ua) \sim  \tfrac{1}{2\pi \ay} Q\left(\log(z_1),\ldots,\log(z_{r-2})\right)\;\;\;\; \left(\tiny \begin{array}{c}Q\text{ quadratic,}\\\text{with }\QQ\text{-coeffs.}\end{array} \right) \end{cases}
\end{equation}are independent functions as $\ua$ varies; therefore, so are the
$\int_{\cM_{\alpha}(\vf)}\Omega_{\ua,t}$ and $\int_{\tau_{\alpha}}\Omega_{\ua,t}$,
and independence of the cycle classes follows.

From \eqref{MKeqnIp21sharp} we also obtain information on the asymptotics
of periods of $H^{3}(\mathrm{X}_{F})$ which will be key for defining
the mirror map in \S\ref{MKsecIID}. Writing $T_{j}$ for the monodromy
about $z_{j}=0$ (and $N_{j}=\log(T_{j})$), $\cT_{\alpha}$ is the
cycle whose ``Tube'' (for all $|z_{i}|\ll\epsilon$) is $\cap_{i=1}^{r-2}\{|z_{i}|=\epsilon\}$,
hence is invariant by all $T_{j}$. (In particular, we have $\lim_{t\to0}\int_{\cT_{\alpha}}\Omega_{F}=1$.)
Putting 
\[
\tilde{\Omega}_{F}:=\frac{\Omega_{F}}{\int_{\tau_{\alpha}}\Omega_{F}},
\]
we define normalized B-model periods by
\[
\Pi_{\gamma}^{\text{B}}(\uz):=\int_{\gamma}\tilde{\Omega}_{F}\;\;\;\;\;\;\;\;(\gamma\in H_{3}(\mathrm{X}_{F},\ZZ)).
\]
Obviously $\Pi_{\cT_{\alpha}}^{\text{B}}$ is identically $1$ whilst (using \eqref{MKeqnInew*}
and \eqref{MKeqnIp21sharp}) \begin{equation}\label{MKeqnIp22ss}\Pi_{\cM^{\text{B}}_{\alpha}(\vf_0^{(i)})}\sim \frac{1}{2\pi \ay}\log(z_i)\;\;\;\;\;\;\;\;\; (i>0).
\end{equation}

Now when we take all $\{a_{i}\}_{i>0}$, hence all $\{z_{i}\}_{i>0}$,
to zero, the equation for $\mathrm{X}$ becomes
\[
\bb u^{3}+\cc v^{2}+\aa uv+a_{0}=0.
\]
That is, we are left with an isotrivial family of elliptic curves 
$\cong:\cE_{z_{0}}$ over $\GG_{m}^{2}$ (as $\rho^{-1}E$ and $\rho^{-1}D$
have both collapsed to $\rho^{-1}\DD$), with $\rho^{-1}\DD$ consisting
of all 3-fold components of $\DD_{\hDelta}$ associated to points
of $\hDDelta_{\ZZ}$ dual to an edge or vertex of $\Delta\times(-1,-1)$.
Clearly $\cT_{\alpha}$ and $\cT_{\beta}$ remain in the cohomology
of this singular 3-fold, so $\cT_{\beta}$ is invariant under all
$\{T_{j}\}_{j>0}$. Taking $\varpi_{z_{0}}\in\Omega^{1}(\cE_{z_{0}})$
to be normalized so that $\int_{\alpha}\varpi_{z_{0}}=1$, the limiting
period 
\[
\Pi_{\cT_{\beta}}^{\text{B}}(z_{0},\underline{0})=\frac{1}{(2\pi\ay)^{2}}\int_{\cT_{\beta}}\frac{dx}{x}\wedge\frac{dy}{y}\wedge\varpi_{z_{0}}=\int_{\beta}\varpi_{z_{0}}
\]
is asymptotic to $\frac{5}{2\pi\ay}\log t$ (since $\cE_{z_{0}}$
limits to an $I_{5}$ at $t=0$); therefore\begin{equation}\label{MKeqnInewp16}\Pi_{\cT_{\beta}}^{\text{B}}\sim\frac{1}{2\pi \ay}\log(z_{0}).
\end{equation}

Remark that by \eqref{MKeqnIp22ss}, the $\{\cM_{\alpha}(\vf_{0}^{(i)})\}_{i\neq j}$
are invariant under $T_{j}$ for $j>0$; equivalently, the membrane
in $\GG_{m}^{2}\backslash D^{*}$ bounding on each $\vf_{0}^{(i)}\subset E^{*}$
($i\neq j$) behaves well in the $z_{j}\to0$ limit (under which $E$
and $D$ become nodal rational curves in the same linear system).
Symmetrically, there are $'\vf_{0}^{(i)}\subset D^{*}$ with the same
properties. So for $j>0$, 
\[
\ker(T_{j}-I)=\ker(N_{j})=\langle\cT_{\alpha},\cT_{\beta},\{\cM_{\alpha}(\vf_{0}^{(i)})\}_{i\neq j},\{\cM_{\beta}({}'\vf_{0}^{(i)})\}_{i\neq j}\rangle,
\]
while as previously remarked $\ker(N_{0})=\langle\cT_{\alpha},\{\cM_{\alpha}(\vf_{0}^{(i)})\}_{i=1}^{r-2},\cM_{\alpha}(\vf_{1})\rangle.$
Combining this with the fact that $\cM_{\beta}(\vf_{1})$ is the only
cycle in our basis pairing nontrivially with $\cT_{\alpha}$ (e.g.,
consider the above $z_{1},\ldots,z_{r-2}\to0$ limit), we can compute
a basis for $W_{\bullet}^{0}:=W(N_{0})_{\bullet}$. Writing $\{\psi_{I}\}_{i=1}^{r-3}$
for a basis of $\ker\{\cK_{E}\to H_{1}(E)\}$, so that $\cK_{E}=\langle\{\psi_{i}\},\vf_{0}^{(1)}\rangle$,
we have
\[
W_{0}^{0}=\langle\cT_{\alpha}\rangle,\;\;\;\;\; W_{2}^{0}=W_{0}^{0}+\langle\{\cM_{\alpha}(\psi_{i})\}_{i=1}^{r-3},\cT_{\beta}\rangle,
\]
\[
W_{3}^{0}=W_{2}^{0}+\langle\cM_{\alpha}(\vf_{0}^{(1)}),\cM_{\alpha}(\vf_{1})\rangle,
\]
\[
W_{4}^{0}=W_{3}^{0}+\langle\{\cM_{\beta}(\vf_{0}^{(i)})\}_{i=1}^{r-2}\rangle,\;\;\;\;\; W_{6}^{0}=W_{4}^{0}+\langle\cM_{\beta}(\vf_{1})\rangle.
\]
For the (Hodge-Tate) limit at $\uz\!=\!\underline{0}$, we have (for $W_{\bullet}\!=\!W(N_{0}\!+\!\cdots\!+\!N_{r-2})_{\bullet}$)
\[
W_{0}=\langle\cT_{\alpha}\rangle,\;\;\;\;\; W_{2}=W_{0}+\langle\cT_{\beta},\{\cM_{\alpha}(\vf_{0}^{(i)})\}_{i=1}^{r-2}\rangle,
\]
\[
W_{4}=W_{2}+\langle\{\cM_{\beta}({}'\vf_{0}^{(i)})\}_{i=1}^{r-2},\cM_{\alpha}(\vf_{1})\rangle,\;\;\;\;\; W_{6}=W_{4}+\langle\cM_{\beta}({}'\vf_{1})\rangle.
\]
The other $W(N_{j})_{\bullet}$'s are more difficult and will be computed
via the A-model in \S\ref{MKsecIIC}.

We will say more about the specialization to the sunset case, where
\[
f_{\ua}=a_{0}+a_{1}x+a_{2}y+a_{3}x^{-1}y+a_{4}x^{-1}+a_{5}y^{-1}+a_{6}xy^{-1},
\]
in the next section, but some preliminary remarks are in order. Writing
$\hat{\vf}_{0}^{(i)}$ for the vanishing cycle in $H_{1}(E_{\ua}^{*})$
for $a_{i}\to0$, we have (with indices modulo 6) $\vf_{0}^{(j)}=-\hat{\vf}_{0}^{(j)}+\hat{\vf}_{0}^{(j-1)}+\hat{\vf}_{0}^{(j+1)},$
which all map to the same cycle $\vf_{0}\in H_{1}(E_{\ua})$, and
\[
z_{j}=\frac{a_{j+1}a_{j-1}}{a_{j}a_{0}},\;\;\;\;\; j=1,\ldots,6.
\]
The apparent obstacle here is that although $r\!=\!r^{\circ}\!=\!\nu\!=\!6$,
the 4-dimensional cone spanned by the vectors $\ul^{(j)}$ of \cite{DoranKerr}
($\ell_{j}^{(j)}=-1$, $\ell_{j-1}^{(j)}=\ell_{j+1}^{(j)}=1$, $\ell_{\text{other}}^{(j)}=0$)
is not simplicial. Hence $z_{1}$ thru $z_{4}$ do \emph{not} suffice
to parametrize the resulting \emph{singular} local parameter 4-space,
and $\vf_{1},\{\vf_{0}^{(j)}\}_{j=1}^{4}$ span $\cK_{E}$ rationally
but \emph{not} integrally. (We need all 6 $\{z_{i}\}$ and all 6 $\{\vf_{0}^{(j)}\}$,
and the relations on the $\{z_{i}\}$ produce the singularity.) Writing
$R(m)\subseteq\ZZ^{6}$ for the set of ``relations vectors'' $(\ell_{1},\ldots,\ell_{6})$
with $\sum\ell_{j}=m\in\mathbb{N}$, $\sum\ell_{j}\um^{(j)}=\underline{0}$,
we have ($j=1,\ldots,6$) \begin{align*} \lim_{z_0 \to 0} 2\pi \ay\cdot \Pi_{\cM_{\alpha}(\vf_0^{(j)}}^{\text{B}} &= R_{\vf_0^{(i)}}(\ua)=\log(z_j)+H(\ua) \\ &= \log(z_j) +\sum_{m\geq 1} \frac{1}{m} \sum_{\ul\in R(m)} \frac{m!}{\ell_1 ! \cdots \ell_6 !}\cdot  \frac{a_1^{\ell_1}\cdots a_6^{\ell_6}}{(-a_0)^m} \end{align*}for
the limiting periods (cf. \cite[eqn (5.4)]{DoranKerr}).

Upon specializing to the ``Feynman locus'' where \begin{equation}\label{MKeqnIp25*}f_{\ua} = 1-s\phis,\;\;\;\;\; \phis:=(-\xi_1^2 x - \xi_2^2 y + \xi_3^2)(1-x^{-1}-y^{-1}),
\end{equation}a small miracle occurs. The resulting substitutions $\ua=(\xi_1,\xi_2,\xi_3,s)$
yield \begin{equation}\label{MKeqnIp25!}z_1 = -\frac{\xi_2^2s}{a_0} = z_4,\quad z_2 = -\frac{\xi_1^2s}{a_0} = z_5 , \quad z_3 = -\frac{\xi_3^2s}{a_0} = z_6,
\end{equation}where $a_{0}=1-s\sum \xi_{i}^{2}$, and $R_{0}^{(j)}:=R_{\vf_{0}^{(i)}}(\ua)=$\begin{equation}\label{MKeqnIp25!!}\log(z_j) + \sum_{m\geq 1}\frac{1}{m} \sum_{\ul\in R(\um)} \frac{(-1)^{\ell_3 + \ell_6} m!}{\ell_1 ! \cdots \ell_6 !} z_1^{\ell_2 + \ell_3} z_2^{\ell_1+\ell_6} z_3^{\ell_4 + \ell_5} ,
\end{equation}which will be considered as a function of $(z_{1},z_{2},z_{3})$.
In particular, we have\begin{equation}\label{MKeqnIp25**}R_0^{(1)}=R_0^{(4)},\quad R_0^{(2)}=R_0^{(5)},\quad R_0^{(3)}=R_0^{(6)};
\end{equation}in effect, the specialization has replaced a singular 4-fold local
parameter space by a smooth 3-dimensional slice. From the standpoint
of periods (of $R\{x,y\}$ on $E^{*}$, or $\eta$ on $Y$), the class\begin{equation}\label{MKeqnIp25!*}-\vf_0^{(1)} + \vf_0^{(4)} = \vf_0^{(2)} - \vf_0^{(5)} = -\vf_0^{(3)} + \vf_0^{(4)}
\end{equation}in $\cK_{E}$ is now ``trivial'', and the quotient $\overline{\cK_{E}}$
of $\cK_{E}$ by \eqref{MKeqnIp25!*} is integrally spanned by $\vf_{0}^{(1)},\vf_{0}^{(2)},\vf_{0}^{(3)}$.
Recalling that $\ker(N_{0})\subset H_{lim}^{3}(\mathrm{X})$ is an
extension 
\[
\QQ(0)\to\ker(N_{0})\to\cK_{E}(-2),
\]
the immediate consequence is that the $\QQ(-1)\subset\cK_{E}(-2)$
spanned by \eqref{MKeqnIp25!*} lifts to a $\QQ(-1)\subset\ker(N_{0})$.
It is the quotient $\overline{\ker(N_{0})}$ by this constant sub-\VMHS
which we will be interested in when comparing with the A-model.

It will be useful in the sequel to denote by $\cV_{\mathrm{B}}$ the
$\ZZ$-\VHS $H^{3}(\mathrm{X}_{F})$ considered over a product of punctured
disks with parameters $z_{0},\ldots,z_{r-2}$.

\section{A-model}\label{sec:Amodel}

We now turn to the (integral) variation of Hodge structure arising
from the quantum product on $H^{even}(\rXc)$, where $\rXc\subset\PP_{\hDDelta}$
is the Batyrev mirror of $\rX$. Its equivalence to the B-model \VHS
$H^{3}(\rX)$ allows us to compute all monodromies of the latter (about
the hyperplanes $\{z_{i}=0\}$) and relate its $z_{0}\to0$ \LMHS to
\emph{local} Gromov-Witten data for $\PP_{\DDelta}$. In order to
make use of the computations in \cite[\S 5]{DoranKerr}, we shall
work under the assumption that $\nu=r$ (so that $(\d\Delta)_{\ZZ}$
consists of vertices).

\subsection{Elliptic fibration and even cohomology}

As in the B-model case, triangulating $\d\hDelta$ produces a resolution
of singularities $\PP_{\hDDelta}\twoheadrightarrow\check{\PP}_{\hDDelta}$.
The desired triangulation is achieved by:
\begin{itemize}
\item inserting the $\tfrac{1}{2}$-planes $\HH_{\uw}$ ($\uw\in(\d\Delta)_{\ZZ}$)
as in Step 1 of \S\ref{MKsecIB}, which subdivides the 2-face $\Delta\times(-1,-1)$
and each of the facets 
		\begin{align*}
			\mathfrak{f}_{1}&=\text{hull}\{\Delta\times(-1,-1),\,(0,0,-1,1)\},\\
			\mathfrak{f}_{2}&=\text{hull}\{\Delta\times(-1,-1),\,(0,0,2,-1)\}
		\end{align*}
 into $r$ pieces; and
\item further subdividing the facet-pieces by inserting 2-planes through
the edges of $\Delta\times(-1,-1)$ and $(0,0,1,-1)$, $(0,0,0,-1)$,
resp. $(0,0,-1,0)$.
\end{itemize}
The first step guarantees a morphism $\PP_{\hDDelta}\overset{\mathcal{P}^{\circ}}{\twoheadrightarrow}\PP_{\DDelta}$,
with generic fiber $\PP_{W}$. Its restriction to an anticanonical
(Calabi-Yau) hypersurface $\mathrm{X}^{\circ}\overset{\imath}{\subset}\PP_{\hDDelta}$,
cut out by a generic Laurent polynomial (with Newton polytope $\hDDelta$),
produces a Weierstrass elliptic fibration $\rho^{\circ}:\,\rXc\twoheadrightarrow\PP_{\DDelta}$.
The discriminant locus of $\rho^{\circ}$ (over which the fiber is
$I_{1}$) is a higher-genus curve meeting $\DD_{\DDelta}$ properly;
in particular, $(\rho^{\circ})^{-1}$ of components of $\DD_{\DDelta}$
(and of their intersections) are smooth.

Let $D_{0}\cong\PP_{\DDelta}$ denote the (zero-)section of $\rho^{\circ}$
given by intersecting $\rXc$ with the component of $\DD_{\hDDelta}$
dual to $(0,0,-1,-1)\in(\d\hDelta)_{\ZZ}$. Writing $\DD_{\DDelta}=\cup_{i=1}^{r}C_{i}$
(with the counterclockwise ordering), the divisors $D_{i}:=(\rho^{\circ})^{-1}(C_{i})$
are the intersections with $\rXc$ of components dual to the $\{\uv_{i}\times(-1,-1)\}$
(where $\{\uv_{i}\}$ are the vertices of $\Delta$). \[ \includegraphics[scale=0.50]{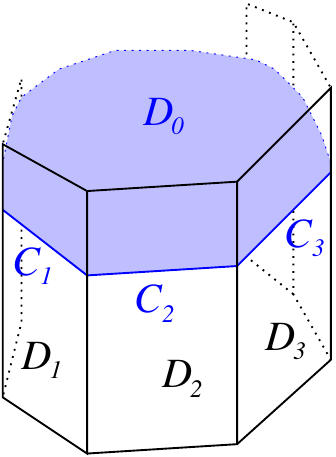} \](In
the sequel, $C_{i}$ will mean $D_{0}\cap D_{i}\subset\rXc$.) There
are five more components of $\DD_{\hDDelta}$: those dual to $(0,0,1,-1)$,
$(0,0,0,-1)$, and $(0,0,-1,0)$ do not meet $\rXc$; and we denote
by $D',D''$ the intersections with $\rXc$ of those dual to $(0,0,2,-1)$
resp. $(0,0,-1,1)$.

The divisors of the toric coordinates $\{X_{i}\}_{i=1}^{4}$ restricted
to $\rXc$ are then given by \begin{equation}\label{MKeqnII1}\begin{cases} (X_1)=\sum_{i=1}^r v_i^{(1)} D_i,\;\;\; (X_2)=\sum_{i=1}^r v_i^{(2)} D_i , \\ (X_3) = 2D' - D'' - \sum_{i=0}^r D_i ,\;\;\; (X_4) = D''-D'-\sum_{i=0}^r D_i \end{cases}
\end{equation}so that in $CH^{1}(\rXc)\cong H^{2}(\rXc,\ZZ)$ we have $D'\equiv2\sum_{i=0}^{r}D_{i}$,
$D''\equiv3\sum_{i=0}^{r}D_{i}$, and $D_{r-1},D_{r}\in\text{span}\langle\{D_{i}\}_{i=1}^{r-2}\rangle$.
Now $D'\cap D_{0}$ and $D''\cap D_{0}$ are empty (as the corresponding
faces of $\hDDelta$ meet in vertices), and so in $CH^{2}(\rXc)$
(hence $H^{4}(\rXc,\ZZ)$)%
\footnote{We shall use $\cdot$ and often nothing
as cup product on $H^{even}$.%
}

\begin{equation}\label{MKeqnII2}D_0 \cdot D_0 \equiv -\sum_{i=1}^r D_0 \cdot D_i = -\sum_{i=1}^r C_i \equiv -E^{\circ}
\end{equation}where $E^{\circ}$ is a general anticanonical (elliptic) curve in
$D_{0}\cong\PP_{\DDelta}$. Writing $d_{i}$ for $\ell(i^{\text{th}}\text{ edge of }\d\DDelta)$,
so that $r^{\circ}=\sum_{i=1}^{r}d_{i}$, we therefore have \begin{equation}\label{MKeqnII3}(D_0 \cdot C_i ) = (D_0 \cdot D_0 \cdot D_i ) = -(E^{\circ}\cdot D_i ) = -(E^{\circ}\cdot C_i )_{D_0} = -d_i
\end{equation}for $i=1,\ldots,r$.

From the first line of \eqref{MKeqnII1}, we also have $C_{r-1},C_{r}\in\text{span}\langle\{C_{i}\}_{i=1}^{r-2}\rangle$
so that $\{C_{i}\}_{i=1}^{r-2}$ span $H^{2}(D_{0})$, and $[(C_{i}\cdot C_{j})_{D_{0}}]_{i,j=1}^{r-2}=[(C_{i}\cdot D_{j})_{\rXc}]_{i,j=1}^{r-2}$
is nondegenerate. Since a general fiber $C_{0}$ of $\rho^{\circ}$
satisfies $(C_{0}\cdot D_{0})=1$ and $(C_{0}\cdot D_{i})=0$ ($i>0$),
$[(C_{i}\cdot D_{j})_{\rXc}]_{i,j=0}^{r-2}$ is in fact nondegenerate.
Using that $H^{1,1}(\rXc)=h^{2,1}(\rX)=r-1$, it follows that a basis
for 
\[
V=H^{even}(\rXc,\CC)=\oplus_{k=0}^{3}H^{k,k}(\rXc)
\]
is given by $\{\rXc;D_{0},\ldots,D_{r-2};C_{0},\ldots,C_{r-2};p\}$
where $p\in\rXc$ is a point. Write\begin{equation}\label{MKeqnII3.5}J_j = \sum_{k=0}^{r-2} \alpha_j^k D_k  \;\;\;\;\; (\alpha_j^k \in \QQ ;\; j=0,\ldots,r-2 )
\end{equation}for the basis of $H^{2}(\rXc,\QQ)$ Poincar\'e-dual to the $\{C_{j}\}_{j=0}^{r-2}\subset H^{4}(\rXc,\QQ)$. 

Clearly all the $\alpha_{i}^{0}=0$ for $i>0$, so using \eqref{MKeqnII3}
we find that\begin{equation}\label{MKeqnII4}\begin{cases} J_0 = D_0 + (\pi^{\circ})^{-1} E^{\circ} = D_0 + D_1 + \cdots + D_r = D_0 + \sum_{i=1}^{r-2} \alpha_0^i D_i \\ D_0 = J_0 - \sum_{i=1}^{r-2} d_i J_i \end{cases}
\end{equation}hence (by \eqref{MKeqnII2})\begin{equation}\label{MKeqnII5}J_0^2 = r^{\circ} C_0 + C_1 + \cdots + C_r = r^{\circ} C_0 + \sum_{i=1}^{r-2} \alpha_0^i C_i .
\end{equation}For the triple-products, we evidently have $J_{0}^{3}=r^{\circ}$
(dropping the class of the point``$p$''), and $J_{i}J_{j}J_{k}=0$ if $i,j,k>0$.
For $j>0$ we find (by \eqref{MKeqnII5}\begin{equation}\label{MKeqnII6}J_0^2 J_j = \sum_{i=1}^{r-2} \alpha_0^i (C_i \cdot J_j) = \alpha_0^j ,
\end{equation}while for $i,j>0$\begin{equation}\label{MKeqnII7}J_0 J_i J_j = \sum_{k=1}^{r-2} \alpha_j^k J_0 J_i D_k = \sum_k \alpha_j^k D_0 J_i D_k =\sum_k \alpha_j^k (J_i\cdot C_k)= \alpha_j^i .
\end{equation}In particular, $[\alpha_{j}^{i}]_{i,j=1}^{r-2}$ is symmetric, which
reflects the fact that it is the inverse of $[(C_{i}\cdot C_{j})_{D_{0}}]_{i,j=1}^{r-2}$
(which can be computed from the $\ell$-vectors of \cite[\S 5]{DoranKerr}).

From \eqref{MKeqnII6} and \eqref{MKeqnII7} we also have\begin{equation}\label{MKeqnII8}J_i J_j = \alpha_j^i C_0 \;\;\; \text{and}\;\;\; J_0J_j =\sum_{i=0}^{r-2} \alpha_i^j C_i .
\end{equation}Since $D_{0}\cdot J_{0}=-E^{\circ}+E^{\circ}=0$ by \eqref{MKeqnII2}
and \eqref{MKeqnII4}, using \eqref{MKeqnII3} and \eqref{MKeqnII8}
to evaluate $(0=)\, D_{0}\cdot J_{0}\cdot J_{j}$ yields the intriguing
relations \begin{equation}\label{MKeqnII9}\begin{cases} \alpha_0^j = \sum_{i=1}^{r-2} d_i \alpha_j^i \;\;\; (j=1,\ldots,r-2) \\ r^{\circ} = \sum_{i=1}^{r-2} d_i \alpha_0^i . \end{cases}
\end{equation}For the sequel we set $\tilde{\alpha}_{j}^{i}=J_{0}J_{i}J_{j}$, which
allows us to rewrite \eqref{MKeqnII9} as $\tilde{\alpha}_{j}^{0}=\sum_{i=1}^{r-2}d_{i}\tilde{\alpha}_{j}^{i}$
for $j=0,\ldots,r-2$.

\subsection{The quantum $\ZZ$-variation of Hodge structure}

\label{MKsecIIB}Following \cite[\S 8]{CoxKatz} and \cite[\S 5]{Iritani},
we now introduce a weight 3 \VHS on $V_{\cO}=V\otimes\cO((\Delta^{*})^{r-1})$,
where the $\Delta^{*}$ are punctured disks with coordinates $q_{j}=e^{2\pi\ay\tau_{j}}$,
$j=0,\ldots,r-2$. (Write $\kappa:\,\mathfrak{H}^{r-1}\to(\Delta^{*})^{r-1}$
for the obvious map sending $\underline{\tau}\mapsto\underline{q}$.)
The Hodge filtration is straightforward, given by
\[
F^{p}V_{\cO}:=\oplus_{a\geq p}H^{3-a,3-a}(\rXc)_{\cO},
\]
so that $1_{\rXc}=[\rXc](\otimes1)$ generates $F^{3}$. The polarization
is just $(\mathtt{A},\mathtt{B}):=\linebreak (-1)^{\frac{1}{2}\deg\mathtt{A}}\int_{\rXc}\mathtt{A}\cdot\mathtt{B}$.

Let $\tilde{N}_{\ukh}$ denote the genus-zero Gromov-Witten invariant
for the class $C_{\ukh}:=\sum_{\ell=0}^{r-2}k_{\ell}C_{\ell}\in H_{2}(\rXc,\ZZ)$,
for any $\ukh=(k_{0},\underline{k})\in\ZZ_{\geq0}^{r-1}$. Using the
\GW prepotential \begin{equation}\label{MKeqnII10}\Phi := \frac{(2\pi\ay)^3}{6} \int_{\rXc} \left(\sum_{j=0}^{r-2}\tau_j J_j \right)^3 + \sum_{\underline{k}\neq\underline{0}} \tilde{N}_{\ukh} \uq^{\ukh} ,
\end{equation}we define the quantum product ``$*$''  on $V_{\cO}$ to be cup product on the
last subsection's basis ($\otimes1$) except for
\[
J_{i}*J_{j}:=\frac{1}{(2\pi\ay)^{3}}\sum_{\ell=0}^{r-2}\Phi_{ij\ell}'''C_{\ell}=J_{i}\cdot J_{j}+\text{h.o.t.}(\uq).
\]
where $\text{h.o.t.}(\uq)$ denote higher order term in the $q$ expansion.
Here $\Phi_{ij\ell}'''=\d_{i}\d_{j}\d_{\ell}\Phi$, where $\d_{i}:=\frac{\d}{\d q_{i}}$.

The ($\CC$-)local system $\VV_{\CC}\subset V_{\cO}$ is then given
by the kernel of 
\[
\nabla:=\text{id}_{V}\otimes d+\sum_{i=0}^{r-2}(J_{i}*)\otimes d\tau_{i},
\]
with monodromy logarithms\begin{equation}\label{MKeqnII11}N_i =
  \log(T_i) = -2\pi\ay {Res}_{q_i = 0} (\nabla) = -J_i \cdot (\,)
\end{equation}about the coordinate axes. A basis of $\nabla$-flat
sections is given by
the map
\[
\sigma:\, B\to\Gamma(\mathfrak{H}^{r-1},\kappa^{*}\VV_{\CC})
\]
sending\begin{equation}\label{MKeqnII12}\begin{cases} p\mapsto p \\ C_i \mapsto C_i - \tau_i p \\ J_i \mapsto J_i - \frac{1}{(2\pi \ay)^3}\sum_j \Phi_{ij} '' C_j + \frac{1}{(2\pi \ay)^3} \Phi_i ' p \\ \rXc \mapsto \rXc - \sum_i \tau_i J_i -\frac{1}{(2\pi \ay)^3} \sum_i \left( \Phi_i ' - \sum_j \tau_j \Phi_{ij} '' \right) C_i \\ \mspace{250mu}+ \frac{1}{(2\pi \ay)^3} \left( 2\Phi - \sum_j \tau_j \Phi_j ' \right) p . \end{cases}
\end{equation}

To define Iritani's integral local system $\VV_{\ZZ}\subset\VV_{\CC}$,
we will need his ``square root of the Todd class''
\[
\hat{\Gamma}(\rXc):=\exp\left(-\frac{1}{24}{ch}_{2}(\rXc)-\frac{2\zeta(3)}{(2\pi\ay)^{3}}{ch}_{3}(\rXc)\right)\in K_{0}(\rXc).
\]
In general, for a toric variety $\PP_{\Sigma}$ defined by a simplicial
fan $\Sigma$,
\[
c_{i}(\PP_{\Sigma})=\sum_{\tau\in\Sigma^{(i)}}Z_{\tau}
\]
where $\Sigma^{(i)}$ denotes the $i$-dimensional cones of $\Sigma$
and $Z_{\tau}$ is the (codimension-$i$) intersection of the divisors
of $\PP_{\Sigma}$ dual to the generators of $\tau$. Applying this
to $\PP_{\hDelta}$ and pulling back to $\rXc$,
\begin{align*}
	\imath^{*}c(\PP_{\hDelta})&=1+\left(D'+D''+\sum_{i=0}^{r}D_{i}\right)\\
	&\quad +\left((11r^{\circ}+r)C_{0}+12\sum_{i=1}^{r}C_{i}\right)+6(r+r^{\circ})p
\end{align*}
while 
\[
\imath^{*}c(\cO(\rXc)^{-1})=1-6\sum_{i=0}^{r}D_{i}+36\left(r^{\circ}C_{0}+\sum_{i=1}^{r}C_{i}\right)-216r^{\circ}p.
\]
(Remark that $\sum_{i=1}^{r}C_{i}=\sum_{i=1}^{r-2}\alpha_{0}^{i}C_{i}$
by \eqref{MKeqnII5}.) This yields 
\[
c(\rXc)=\imath^{*}c(\PP_{\hDelta})\cdot c(\cO(\rXc))^{-1}=1+\left((11r^{\circ}+r)C_{0}+12\sum_{i=1}^{r}C_{i}\right)-60r^{\circ}p,
\]
hence \begin{equation}\label{MKeqnII13}\begin{cases} ch(\rXc) = 3 - \left( 12 \sum_{i=1}^r C_i + (11r^{\circ}+r)C_0 \right) - 30 r^{\circ} p \\ td(\rXc) = 1+\left( \sum_{i=1}^r C_i + \frac{1}{12} (11 r^{\circ} + r) C_0 \right)  \\  \hat{\Gamma}(\rXc) =1 + \left( \frac{1}{2} \sum_{i=1}^r C_i + \frac{1}{24} (11 r^{\circ} + r) C_0 \right) + \frac{60 \zeta(3)}{(2\pi\ay)^3} r^{\circ} p . \end{cases}
\end{equation}

The $\ZZ$-local system (or rather its $\kappa^{*}$-pullback) is
then defined by the image of \begin{equation}\label{MKeqnII14}\begin{array}{cccc}\gamma: & K_{0}^{num}(\rXc) & \to & \Gamma(\mathfrak{H}^{r-1},\kappa^{*}\VV_{\CC})\\ & \xi & \mapsto & \sigma(\hat{\Gamma}(\rXc)\cdot ch(\xi)) .\end{array}
\end{equation}The role of the $\hat{\Gamma}$-class is tied to the Mukai pairing
$\langle\;,\;\rangle:\, K_{0}^{num}(\rXc)\times K_{0}^{num}(\rXc)\to\ZZ$,
defined (on the level of vector bundles) by \begin{equation}\label{MKeqnII15}\langle \xi ,\xi ' \rangle := \int_{\rX} ch(\xi^{\vee}\otimes \xi ' )\cdot td (\rXc) .
\end{equation}Iritani's result that \begin{equation}\label{MKeqnII16}(\gamma (\xi),\gamma(\xi ') )=\langle \xi ,\xi '\rangle 
\end{equation}implies the integrality of $(\;,\;)$ on $\VV_{\ZZ}$, and the integrality
of monodromy follows from \begin{equation}\label{MKeqnII17}T_i (\gamma (\xi )) = \gamma (\cO (-J_i)\otimes \xi ) .
\end{equation}The ``period'' of the $(3,0)$-section $[\rXc]\otimes1$ against
the integral class $\gamma(\xi)$ is \begin{equation}\label{MKeqnII!!!8}\Pi^{\text{A}}_{\xi} (\uq) := \langle 1_{\rXc} , \gamma (\xi )\rangle = \,\text{coefficient of }[p]\text{ in }\gamma(\xi).
\end{equation}

To compute $\gamma$ or the Mukai pairing, we first find the Chern
characters of various skyscraper sheaves by resolving them by vector
bundles, e.g.
\[
\cO_{J_{i}}=\cO-\cO(-J_{i}),\;\;\;\cO_{C_{i}}=\cO-\cO(-D_{i})-\cO(-D_{0})+\cO(-(D_{0}+D_{i})),
\]
and using $ch(\cO(D))=e^{D}$ for any divisor $D$. Writing $\ell_{j}^{i}:=(C_{i}\cdot D_{j})$
($i,j=0,\ldots,r$),%
\footnote{note $\ell_{0}^{i}=d_{i}$%
} this gives\begin{align*}
& ch(\cO_p) = p ,\;\;\; ch(\cO_{\rXc})=\rXc,
\\
& ch(\cO_{C_0})=C_0 ,\;\;\; ch(\cO_{C_j})=C_j + \tfrac{1}{2}(d_j - \ell_j^j)p,
\\
& ch(\cO_{J_0}) = J_0 - \tfrac{1}{2}(r^{\circ}C_0 + \Sigma_{i=1}^r C_i ) + \tfrac{r^{\circ}}{6} p,\;\;\; ch(\cO_{J_j}) = J_j - \tfrac{1}{2}\alpha_j^j C_0,
\\
& ch(\cO_{D_0}) = D_0 + \tfrac{1}{2}\Sigma_{i=1}^r C_i + p , \;\;\; ch(\cO_{D_j}) = D_j + \tfrac{1}{2} C_0 .
\end{align*}In particular, we find that a Mukai-symplecitic basis $\{\xi_{k}\}_{k=1}^{2r}$
of $K_{0}^{num}(\rXc)_{\QQ}$ (hence, applying $\gamma$, a symplectic
basis of $\VV_{\QQ}$), with%
\footnote{in the matrix, $i$ and $j$ run from $0$ to $r-2$%
} \begin{equation}\label{MKeqnII!!!9}\langle \xi_k,\xi_{k'} \rangle = \tiny\left(\begin{array}{cc}\scalebox{2}0 & \begin{array}{cc} & 1\\\delta_{ij}\end{array}\\\begin{array}{cc} & -\delta_{ij}\\-1\end{array} & \scalebox{2}0\end{array}\right)\normalsize  ,
\end{equation}is given by \begin{equation}\label{MKeqnII!!!10}\begin{cases}
\xi_1  = \cO_{\rXc}
\\
\xi_2  = \cO_{J_0} + \frac{1}{4} \sum_{j=1}^{r-2} (\alpha_0^j - \alpha_j^j ) \cO_{C_j} 
\\
 \mspace{70mu} - \left( \frac{13 r^{\circ} + r}{12} + \frac{1}{2}\sum_{j=1}^{r-2} (\alpha_0^j -\alpha_j^j )(d_j - \ell_j^j)\right) \cO_p
\\
\xi_{2+j} = \cO_{J_j} + \frac{1}{4} (\alpha_j^j - \alpha_0^j) \cO_{C_0} - \alpha_0^j \cO_p \;\;\;\;\; (j=1,\ldots ,r-2 )
\\
\xi_{r+1} = -\cO_{C_0}
\\
\xi_{r+j+1} = -\cO_{C_j} + \frac{1}{2}(d_j - \ell_j^j ) \cO_p \;\;\;\;\;\;\; (j=1,\ldots, r-2)
\\
\xi_{2r} = \cO_p .
\end{cases}
\end{equation}However, the basis given by the skyscraper sheaves themselves is adequate
for purposes of analyzing monodromy. We shall also have use for the
partial basis \begin{equation}\label{MKeqnIIp10!+}\begin{cases}
\hat{\xi}_{D_0} =\cO_{D_0} + \frac{1}{2} \sum_{j=1}^{r-2} \alpha_0^j C_j + \left( -\frac{15r^{\circ}+ r}{24} + \frac{1}{4} \sum_{j=1}^{r-2} \alpha_0^j (\ell_j^j +d_j )\right) \cO_p
\\
-\hat{\xi}_{C_j} = \xi_{r+j+1} \;\;\;\;\;\; (j=0,\ldots, r=2)
\\
\hat{\xi}_p =\cO_p 
\end{cases}\hspace{-1.5em}
\end{equation}later on. It satisfies
\[
ch(\hat{\xi}_{D_{0}})\cdot\hat{\Gamma}=D_{0},\;\;\; ch(\hat{\xi}_{C_{j}})\cdot\hat{\Gamma}=-C_{j},\;\;\; ch(\hat{\xi}_{p})\cdot\hat{\Gamma}=p,
\]
which implies\begin{equation}\label{MKeqnIIp10sharp}\gamma(\hat{\xi}_{D_0}) = \sigma (D_0) ,\;\;\; \gamma (\hat{\xi}_{C_j}) = -\sigma (C_j) ,\;\;\; \gamma (\hat{\xi}_p) = \sigma (p) .
\end{equation}In particular, we have\begin{equation}\label{MKeqnIIp10sharp!}\Pi^{\text{A}}_{\hat{\xi}_{C_j}} = \tau_j \;\;\;\text{and} \;\;\; \Pi^{\text{A}}_{\hat{\xi}_p} \equiv 1.
\end{equation}

In the sequel the $\ZZ$-\VHS $(\VV_{\ZZ},V_{\cO},F^{\bullet})$ constructed
above will be denoted $\cV_{\text{A}}$.

\subsection{Monodromy types}

\label{MKsecIIC}We shall compute monodromy directly on the level
of $K_{0}^{num}(\rXc)_{\QQ}$, by applying $\cO(-J_{j})\otimes$ to
the basis
\[
\cO_{\rXc};\,\cO_{J_{0}},\ldots,\cO_{J_{r-2}};\,\cO_{C_{0}},\ldots,\cO_{C_{r-2}};\,\cO_{p}.
\]
Writing $i$ resp. $k$ for the rows resp. columns of the various
blocks, this gives
\[
T_{j}=\left(\begin{array}{cccc}
1 & 0 & 0 & 0\\
-\delta_{i}^{j} & \delta_{i}^{k} & 0 & 0\\
0 & -J_{i}J_{j}J_{k} & \delta_{i}^{k} & 0\\
0 & 0 & -\delta_{j}^{k} & 1
\end{array}\right),
\]
where we note that $J_{i}J_{j}J_{k}$ is
\[
\tilde{\alpha}_{i}^{k}=\left(\begin{array}{cc}
r^{\circ} & \alpha_{0}^{k}\\
\alpha_{0}^{i} & \alpha_{i}^{k}
\end{array}\right)\;\;\;\text{resp.}\;\;\;\left(\begin{array}{cc}
\alpha_{0}^{j} & \alpha_{k}^{j}\\
\alpha_{i}^{j} & 0
\end{array}\right)
\]
if $j=0$ resp. $1,\ldots,r-2$. So
\begin{align*}
	N_{j}&=(T_{j}-I)-\tfrac{1}{2}(T_{j}-I)^{2}+\tfrac{1}{3}(T_{j}-I)^{3}\\
	&=\left(\begin{array}{cccc}
0 & 0 & 0 & 0\\
-\delta_{i}^{j} & 0 & 0 & 0\\
-\frac{1}{2}J_{j}^{2}J_{i} & -J_{i}J_{j}J_{k} & 0 & 0\\
-\frac{1}{3}J_{j}^{3} & -\frac{1}{2}J_{j}^{2}J_{k} & -\delta_{j}^{k} & 0
\end{array}\right)
\end{align*}
and the ensuing monodromy weight filtrations $W(N_{j})_{\bullet}$
are rather different in these two cases,%
\footnote{We remind the reader that if $j>0$, $J_{j}^{2}J_{i}=\alpha_{j}^{j}\delta_{i}^{0}$
and $J_{j}^{3}=0$, while if $j=0$ then $J_{0}^{2}J_{i}=\alpha_{0}^{i}$
and $J_{0}^{3}=r^{\circ}$.%
} which we denote type ``I'' resp. ``II''.

For $W_{\bullet}=W(N_{0})_{\bullet}$ (type I), we determine the following
generators for the $Gr_{\ell}^{W}$:\begin{align*}
W_0 &= \langle \cO_p \rangle \\
W_2 &= W_0 + \langle \{ \Sigma_{k=1}^{r-2} (\alpha_0^i \alpha_{i+1}^k -\alpha_0^{i+1}\alpha_i^k )\cO_{C_k} \}_{i=1}^{r-3} ,\, r^{\circ}\cO_{C_0} + \Sigma_{k=1}^{r-2} \alpha_0^k \cO_{C_k} \rangle \\
W_3 &= W_2 + \langle \cO_{D_0} ,\, \Sigma_{k,i=1}^{r-2} d_i \alpha_k^i \cO_{C_k} \rangle \\
W_4 &= W_3 + \langle \{ \alpha_0^{i+1} \cO_{J_i} - \alpha_0^i \cO_{J_{i+1}} \}_{i=1}^{r-3} ,\, \cO_{J_0} \rangle \\
W_6 &= W_4 + \langle \cO_{\rXc} \rangle .
\end{align*}The $T_{0}$-invariants $\ker(T_{0}-I)=\ker(N_{0})$ are spanned by\begin{equation}\label{MKeqnIIp12*1}\cO_p ,\, \{\cO_{C_k}\}_{k=1}^{r-2} ,\,\text{and}\;\cO_{D_0} .
\end{equation}A key point here is that because the ``$\zeta(3)$'' in $\hat{\Gamma}(\rXc)$
only appears in $\gamma(\cO_{\rXc})$, it does not appear in any $T_{0}$-invariant
A-model periods.

For $W_{\bullet}=W(N_{j})_{\bullet}$ ($j>0$), the situation bifurcates
according to whether $\alpha_{j}^{j}\neq0$ (type IIa) or $\alpha_{j}^{j}=0$
(type IIb). If $\alpha_{j}^{j}\neq0$ then we have $W_{0}=\{0\}$,\begin{align*}
W_1 & = \langle \cO_p,\, \cO_{C_0}\rangle , \\
W_3 & = W_1 + \langle \Sigma_{k=0}^{r-2} \alpha_k^j \cO_{C_k}, \, \cO_{J_j},\, \{\cO_{C_i}\}_{i\neq j,0} ,\, \{ \alpha_{i+1}^j \cO_{J_i} - \alpha_i^j \cO_{J_{i+1}} \}_{i=1}^{r-3} \rangle , \\
W_5 & = W_3 + \langle \cO_{J_0},\, \cO_{\rXc}\rangle .
\end{align*}In particular, $N_{j}$ sends $\cO_{J_{0}}\mapsto-\sum\alpha_{k}^{j}\cO_{C_{k}}\mapsto\alpha_{j}^{j}\cO_{p}$
and $\cO_{\rXc}\mapsto-\cO_{J_{j}}\mapsto\alpha_{j}^{j}\cO_{C_{0}}$.
A basis for the $T_{j}$-invariants in this case (type IIa) is%
\footnote{$\{\cO_{C_{i}}\}_{i\neq j}$ includes $\cO_{C_{0}}$%
}\begin{equation}\label{MKeqnIIp12*2}\cO_p,\, \{ \cO_{C_i}\}_{i\neq j} ,\, \text{and}\; \{ \alpha_{i+1}^j \cO_{J_i} - \alpha_i^j \cO_{J_{i+1}} \}_{i=1}^{r-3} .
\end{equation}If $\alpha_{j}^{j}=0$, let $j'\neq j,0$ be such that $\alpha_{j'}^{j}\neq0$;
this exists because $\{\alpha_{\ell}^{k}\}_{k,\ell=1}^{r-2}$ is nondegenerate.
The type IIb weight filtration is then $W_{1}=\{0\}$,\begin{align*}
W_2 &= \langle \cO_p,\, \cO_{C_0},\,\Sigma_{k=1}^{r-2} \alpha_k^j \cO_{C_k},\, \cO_{J_j}\rangle \\
W_3 &= W_2 + \langle \{\cO_{C_i}\}_{i\neq 0,j,j'},\, \{ \alpha_i^j \cO_{J_{j'}} - \alpha_{j'}^j \cO_{J_i}\}_{i\neq 0,j,j'} \rangle \\
W_4 &= W_3 + \langle \cO_{\rXc}, \,\cO_{J_0},\,\cO_{J_{j'}},\,\cO_{C_j}\rangle ,
\end{align*}with $T_{j}$-invariants\begin{equation}\label{MKeqnIIp13*3}\cO_p,\, \{ \cO_{C_i}\}_{i\neq j},\, \cO_{J_j},\, \{\alpha_i^j \cO_{J_{j'}} - \alpha_{j'}^j \cO_{J_i} \}_{i\neq 0,j,j'} .
\end{equation}

The three types of limiting \MHS $\psi_{q_{j}}\cV_{\text{A}}$ (arising
along the hyperplanes $\{q_{j}=0\}$) can be displayed pictorially
by placing a bullet in the $(p,q)$ spot if $(\psi_{q_{j}}\cV_{\text{A}})^{p,q}\neq\{0\}$
(and indicating its rank). Arrows denote the action of $N_{j}$, with
ranks of these maps indicated:\[ \includegraphics[scale=0.55]{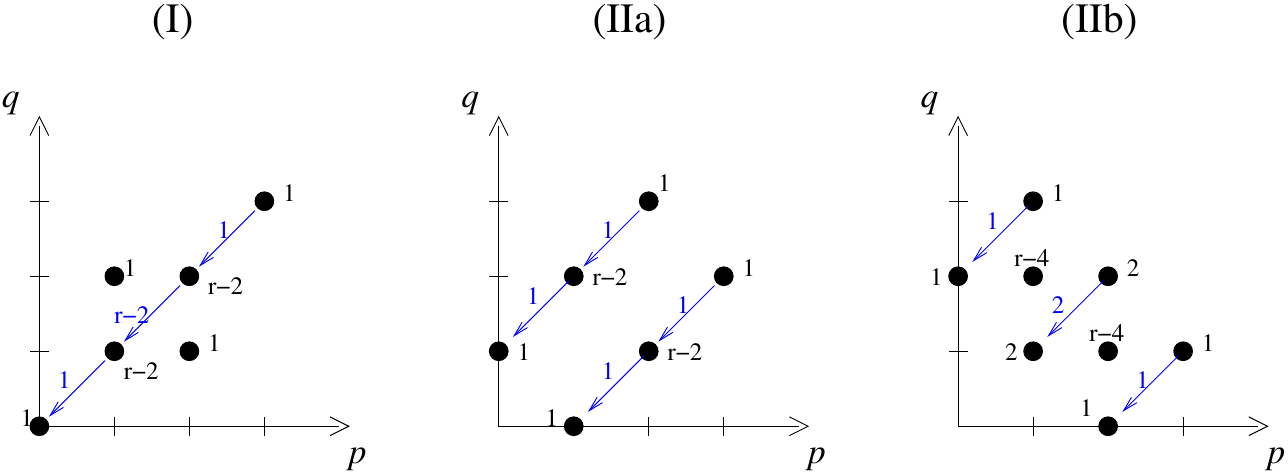} \]Note
that the space ofinvariants (i.e. $\ker(N)$) has rank $r$ for type
I but rank $2r-4$ for both types IIa and IIb. Bases forthese invariant
spaces may obviously be obtained by applying $\gamma$ to \eqref{MKeqnIIp12*1}--\eqref{MKeqnIIp13*3},
and changing basis where convenient. For type I, we find immediately
that \eqref{MKeqnIIp10sharp} is a basis for ($\kappa^{*}$ of) $\ker(N_{0})\subset\VV_{\QQ}$.
For both types II, one deduces that\begin{multline*} \sigma(p),\,\{\sigma(C_{i})\}_{i\neq j}\;(\text{incl. }\sigma(C_{0})),\;\text{and} \\ r-3\;\;\QQ\text{-linear combinations of the }\{\sigma(J_{i})\}_{i=1}^{r-2} \end{multline*}span
$\ker(N_{j})$.

\subsection{Mirror map}\label{sec:mirrormap}

\label{MKsecIID}Let $\square\subset\RR^{n}$ be a reflexive polytope,
$\mathfrak{F}=\sum_{i=0}^{m+n}b_{i}\underline{{\tt x}}^{\underline{{\tt v}}^{(i)}}$
a general $\square$-regular Laurent polynomial (with $\underline{{\tt v}}^{(0)}=\underline{0}$),
and assume none of the $\{\underline{{\tt v}}^{(i)}\}_{i=0}^{m+n}=\square\cap\ZZ^{n}$
lie on the relative interior of a \emph{facet} of $\square$. If ${\tt V}$
denotes the $\QQ$-vector space formally generated by the $\{\underline{{\tt v}}^{(i)}\}_{i=1}^{m+n}$,
let ${\tt R}:=\ker\{{\tt V}\to\QQ^{n}\}$ be the relations subspace,
with $\QQ$-basis $\{\underline{{\tt r}}^{(j)}=\sum_{i=1}^{m+n}{\tt r}_{i}^{(j)}\underline{{\tt v}}^{(i)}\}_{j=1}^{m}$,
and set (for $j=1,\ldots,m$)
\[
w_{j}:=b_{0}^{-\sum_{i}r_{i}^{(j)}}\prod_{i}b_{i}^{{\tt r}_{i}^{(j)}}=\prod_{i}\left(\tfrac{b_{i}}{b_{0}}\right)^{{\tt r}_{i}^{(j)}}.
\]

Write $\cX\subset\PP_{\square}$ for the zero locus of $\mathfrak{F}$
and $\cX^{\circ}\subset\PP_{\square^{\circ}}$ for a general anticanonical
hypersurface. We have the exact sequence
\[
0\to(\QQ^{n})^{\vee}\to{\tt V}^{\vee}\to{\tt R}^{\vee}\to0
\]
where ${\tt R}^{\vee}\cong H^{2}(\cX^{\circ},\QQ)$. A basis of ${\tt V}^{\vee}$
is given by the divisors $\mathcal{D}_{i}\subset\cX^{\circ}$ dual
to the $\underline{{\tt v}}^{(i)}$. Choose $\{\beta_{\ell}^{k}\}\in\QQ^{m(m+n)}$
such that $\sum_{k=1}^{m+n}\beta_{\ell}^{k}{\tt r}_{k}^{(j)}=\delta_{\ell}^{j}$
($\ell,j=1,\ldots,m$), and put
\[
\mathcal{J}_{\ell}:=\sum_{k=1}^{m+n}\beta_{\ell}^{k}[\mathcal{D}_{k}]\in H^{2}(\cX^{\circ}).
\]
This gives a basis dual to $\{\underline{{\tt r}}^{(j)}\}$, since
\[
\underline{{\tt r}}^{(j)}(\mathcal{J}_{\ell})=\sum_{i,k}\beta_{\ell}^{k}{\tt r}_{i}^{(j)}\underline{{\tt v}}^{(i)}(\mathcal{D}_{k})=\sum_{i,k}\beta_{\ell}^{k}{\tt r}_{i}^{(j)}\delta_{k}^{i}=\sum_{k}\beta_{\ell}^{k}{\tt r}_{k}^{(j)}=\delta_{\ell}^{j}.
\]

Now the mirror map sends the complex structure parameter $\underline{b}$
of $\cX$ to a K\"ahler parameter in $H^{2}(\cX^{\circ},\CC)$, of
the form $\tau(\underline{w})=$
\[
=\sum_{j=1}^{m}\tau_{j}(\uw)\mathcal{J}_{j}=\frac{1}{2\pi \ay}\sum_{i=1}^{m+n}\log\left(\tfrac{b_{i}}{b_{0}}\right)[\mathcal{D}_{i}]+\cO\left(\left\{ \tfrac{b_{i}}{b_{0}}\right\} \right),
\]
where $\tau_{j}(\underline{b})$ are (B-model) periods. We compute
\begin{align*}
	\underline{{\tt r}}^{(j)}\left(\sum_{i}\log\left(\tfrac{b_{i}}{b_{0}}\right)[\mathcal{D}_{i}]\right)&=\sum_{i,k}{\tt r}_{k}^{(j)}\log\left(\tfrac{b_{i}}{b_{0}}\right)\underline{{\tt v}}^{(k)}(\mathcal{D}_{i})\\
	&=\sum_{i}{\tt r}_{i}^{(j)}\log\left(\tfrac{b_{i}}{b_{0}}\right)=\log(w_{j}),
\end{align*}
which shows $\tau_{j}(\uw)\sim\frac{1}{2\pi\ay}\log(w_{j})$.

Applying this to our situation ($n=4$, $m=r-1$), with $\alpha,\beta,\gamma,\{a_{i}\}$
replacing the $\{b_{i}\}$, with $D',D'',D_{0},\ldots,D_{r}$ replacing
the $\{\mathcal{D}_{i}\}$, and with $z_{0},\ldots,\allowbreak z_{r-2}$ replacing
$w_{1},\ldots,w_{m}$, we recover \eqref{MKeqnI1/2} and \eqref{MKeqnII3.5},
and find that the coefficients $\{\tau_{j}(\uz)\}$ of the $\{J_{j}\}$
in $\tau(\uz)$ are asymptotic to $\frac{1}{2\pi\ay}\log(z_{j})$.
By \S\ref{MKsecIG} (especially \eqref{MKeqnIp22ss}--\eqref{MKeqnInewp16}),
the mirror map is therefore exactly\begin{equation}\label{MKeqnIIp15!!}\tau(\uz) = \sum_{j=0}^{r-2} \tau_j (\uz) J_j = \Pi_{\tau_{\beta}}^{\text{B}}(\uz) J_0 + \sum_{j=1}^{r-2} \Pi_{\cM_{\alpha}(\vf_0^{(j)})}^{\text{B}}(\uz) J_j .
\end{equation}Writing $\mathcal{Q}(z_{0},\ldots,z_{r-2}):=(q_{0}(\uz),\ldots,q_{r-2}(\uz))$,
we note that the B-model coordinate axes $z_{j}=0$ map to the A-model
axes $q_{j}=0$.

Now \cite[Thm. 5.9]{Iritani} provides an isomorphism $\Theta:\,\mathcal{Q}^{*}\cV_{\text{A}}\overset{\cong}{\to}\cV_{\text{B}}$
of $\ZZ$-\linebreak \VHS sending $1_{\rXc}\mapsto[\tilde{\Omega}]$. Since \eqref{MKeqnIIp10sharp!}
and \eqref{MKeqnIIp15!!} identify the periods 
\[
\Pi_{\cT_{\beta}}^{\text{B}}(\uz)\equiv\Pi_{\hat{\xi}_{C_{0}}}^{\text{A}}(\mathcal{Q}(\uz))\;\;\;\text{and}\;\;\;\Pi_{\cM_{\alpha}(\vf_{0}^{(j)})}^{\text{B}}(\uz)\equiv\Pi_{\hat{\xi}_{C_{j}}}^{\text{A}}(\mathcal{Q}(\uz))
\]
 modulo $\ZZ$, and obviously $\Pi_{\cT_{\alpha}}^{\text{B}}(\uz)=1=\Pi_{\hat{\xi}_{p}}^{\text{A}}(\mathcal{Q}(\uz))$,
we deduce that (up to changing $\cT_{\beta}$ and $\cM_{\alpha}(\vf_{0}^{(j)})$
by integer multiples of $\cT_{\alpha}$)
\[
\Theta(\sigma(p))=\cT_{\alpha},\;\;\;\Theta(-\sigma(C_{0}))=\cT_{\beta},\;\;\;\Theta(-\sigma(C_{j}))=\cM_{\alpha}(\vf_{0}^{(j)}).
\]
By considering $W(N_{0})_{\bullet}$ on $\ker(N_{0})$ on the A and
B sides (cf. \S\ref{MKsecIF} and \S\ref{MKsecIIC}), we find in addition
that (after modifying $\vf_{1}$ by $\ZZ\langle\{\vf_{0}^{(i)}\}\rangle$
and $\cM_{\alpha}(\vf_{1})$ by $\ZZ\langle\cT_{\alpha}\rangle$ if
necessary)\begin{equation}\label{MKeqnIIp16}\Theta (\sigma (D_0)) = \Theta (\gamma (\hat{\xi}_{D_0})) = \cM_{\alpha}(\vf_1) .
\end{equation}(More precisely, if we look at $W(N_{0})_{3}\cap\ker(T_{0}-I)$ in
$H^{3}(\rX,\ZZ)\,(\subset\cV_{\text{B}})$ resp. $\VV_{\ZZ}\,(\subset\cV_{\text{A}})$,
this is generated by $\cM_{\alpha}(\vf_{1})$ mod $\ZZ\langle\cT_{\alpha},\{\cM_{\alpha}(\vf_{0}^{(i)})\}\rangle$
resp. $\gamma(\hat{\xi}_{D_{0}})$ mod $\ZZ\langle\gamma(\hat{\xi}_{p}),\{\gamma(\hat{\xi}_{C_{i}})\}_{i=1}^{r-2}\rangle$.)
Heuristically, we obviously have some matching as well between the
$\{\cM_{\beta}(\vf_{0}^{(i)})\}$ and $\{\gamma(\cO_{J_{j}})\}$,
and between $\cM_{\beta}(\vf_{1})$ and $\gamma(\cO_{\rXc})$; but
we will not dissect this further, as \eqref{MKeqnIIp16} shall now
yield the local mirror symmetry identity we seek.

Recalling that $\hat{k}=(k_{0},\underline{k})$, write $\tilde{N}_{\underline{k}}$
for $\tilde{N}_{\ukh}$ when $k_{0}=0$; and referring to \S\ref{MKsecIG},
write $R_{1}:=R_{\vf_{1}}(\uz)$ resp. $R_{0}^{(i)}:=R_{\vf_{0}^{(i)}}(\uz)$,
where $\uz=(z_{1},\ldots,z_{r-2})$ omits $z_{0}$. Accordingly, we
shall change notation for $(z_{0},\ldots,z_{r-2})$ to $\hat{\uz}=(z_{0},\uz)$.
Define \emph{local} K\"ahler parameters $Q_{i}:=e^{R_{0}^{(i)}}$
(for $i=1,\ldots,r-2)$.
\begin{thm}\phantomsection\label{MKthm2}
On the universal cover of $(\Delta^{*})^{r-2}$ we
have \begin{equation}\label{MKeqnIIp17*!!} (2\pi\ay) R_1 = \frac{1}{2}\sum_{i,j=1}^{r-2}\alpha^i_j R_0^{(i)} R_0^{(j)} - \sum_{\underline{k} \neq \underline{0}} (\Sigma_{i=1}^{r-2} d_i k_i ) \tilde{N}_{\underline{k}} \underline{Q}^{\underline{k}} . \end{equation}\end{thm}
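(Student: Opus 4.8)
The plan is to extract \eqref{MKeqnIIp17*!!} from the A/B-model period matching furnished by Theorem~\ref{MKthm1} and Iritani's isomorphism $\Theta\colon\mathcal{Q}^{*}\cV_{\mathrm{A}}\overset{\cong}{\to}\cV_{\mathrm{B}}$ of integral variations of Hodge structure, by writing a single B-model period in two ways and letting $z_{0}\to0$. Recall that $\Theta(1_{\rXc})=[\tilde\Omega_{F}]$ and, by \eqref{MKeqnIIp16}, $\Theta(\sigma(D_{0}))=\Theta(\gamma(\hat\xi_{D_{0}}))=\cM_{\alpha}(\vf_{1})$; since $\Theta$ respects the cohomology--homology pairing, this gives
\[
\Pi^{\mathrm{B}}_{\cM_{\alpha}(\vf_{1})}(\uz)=\langle 1_{\rXc},\gamma(\hat\xi_{D_{0}})\rangle=\Pi^{\mathrm{A}}_{\hat\xi_{D_{0}}}(\mathcal{Q}(\uz))
\]
by \eqref{MKeqnII!!!8}. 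On the B-side $\cM_{\alpha}(\vf_{1})\in\ker(N_{0})$, so \eqref{MKeqnInew*} and $\lim_{z_{0}\to0}\int_{\cT_{\alpha}}\Omega_{F}=1$ yield $\lim_{z_{0}\to0}\Pi^{\mathrm{B}}_{\cM_{\alpha}(\vf_{1})}=\tfrac{1}{2\pi\ay}R_{1}$. Hence it suffices to prove that $(2\pi\ay)^{2}\lim_{z_{0}\to0}\Pi^{\mathrm{A}}_{\hat\xi_{D_{0}}}(\mathcal{Q}(\uz))$ equals the right-hand side of \eqref{MKeqnIIp17*!!}.

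First I would expand this A-model period. By \eqref{MKeqnII!!!8}, $\Pi^{\mathrm{A}}_{\hat\xi_{D_{0}}}$ is the coefficient of $[p]$ in $\gamma(\hat\xi_{D_{0}})=\sigma(D_{0})$ (using \eqref{MKeqnIIp10sharp}). Writing $D_{0}=J_{0}-\sum_{i=1}^{r-2}d_{i}J_{i}$ from \eqref{MKeqnII4}, using linearity of $\sigma$, and reading off the $[p]$-component of $\sigma(J_{j})$ from \eqref{MKeqnII12}, this coefficient equals $\tfrac{1}{(2\pi\ay)^{3}}\bigl(\Phi'_{0}-\sum_{i=1}^{r-2}d_{i}\Phi'_{i}\bigr)$, where the primes denote first derivatives of the \GW prepotential $\Phi$ of \eqref{MKeqnII10} in the flat coordinates $\tau_{0},\dots,\tau_{r-2}$.

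The computational core is then to take $z_{0}\to0$. Substituting \eqref{MKeqnII10} and the triple products $J_{0}^{3}=r^{\circ}$, $J_{0}^{2}J_{j}=\alpha_{0}^{j}$, $J_{0}J_{i}J_{j}=\alpha_{j}^{i}$, $J_{i}J_{j}J_{k}=0$ ($i,j,k>0$) from \eqref{MKeqnII6}--\eqref{MKeqnII7}, the classical part of $\Phi'_{0}-\sum_{i}d_{i}\Phi'_{i}$ breaks into a $\tau_{0}^{2}$ term with coefficient $3\bigl(r^{\circ}-\sum_{i}d_{i}\alpha_{0}^{i}\bigr)$, a $\tau_{0}\tau_{j}$ term with coefficient $6\bigl(\alpha_{0}^{j}-\sum_{i}d_{i}\alpha_{i}^{j}\bigr)$, and a term free of $\tau_{0}$, namely $\tfrac{(2\pi\ay)^{3}}{2}\sum_{i,j=1}^{r-2}\alpha_{j}^{i}\tau_{i}\tau_{j}$; the first two vanish identically by the Batyrev-type relations $r^{\circ}=\sum_{i}d_{i}\alpha_{0}^{i}$ and $\alpha_{0}^{j}=\sum_{i}d_{i}\alpha_{j}^{i}$ of \eqref{MKeqnII9} together with the symmetry $\alpha_{j}^{i}=\alpha_{i}^{j}$, while the instanton part contributes $2\pi\ay\sum_{\ukh\ne\underline{0}}\tilde N_{\ukh}\bigl(k_{0}-\sum_{i}d_{i}k_{i}\bigr)\uq^{\ukh}$. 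By the mirror map \eqref{MKeqnIIp15!!} and \S\ref{MKsecIG}, $2\pi\ay\,\tau_{i}(\uz)\to R_{0}^{(i)}$ and $q_{i}\to Q_{i}$ for $i>0$ whereas $q_{0}\to0$; hence every instanton term with $k_{0}>0$ dies, the surviving $k_{0}=0$ ones give $-2\pi\ay\sum_{\underline{k}\ne\underline{0}}\bigl(\sum_{i}d_{i}k_{i}\bigr)\tilde N_{\underline{k}}\underline{Q}^{\underline{k}}$, and the quadratic term tends to $\tfrac{2\pi\ay}{2}\sum_{i,j}\alpha_{j}^{i}R_{0}^{(i)}R_{0}^{(j)}$. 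Multiplying by $(2\pi\ay)^{2}/(2\pi\ay)^{3}=1/(2\pi\ay)$ and combining with the first paragraph produces exactly \eqref{MKeqnIIp17*!!}.

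I expect the main obstacle to be precisely the vanishing, in the previous step, of the $\tau_{0}$-dependent (hence $z_{0}$-divergent) terms: absent the intersection-theoretic identities \eqref{MKeqnII9} the $z_{0}\to0$ limit would not even exist, and it is this cancellation that makes $R_{1}$ a \emph{finite} local Gromov--Witten prepotential. The remaining issue is bookkeeping: \eqref{MKeqnIIp16} matches $\cM_{\alpha}(\vf_{1})$ with $\gamma(\hat\xi_{D_{0}})$ only after modifying $\vf_{1}$ by an element of $\ZZ\langle\{\vf_{0}^{(i)}\}\rangle$ (and $\cM_{\alpha}(\vf_{1})$ by a multiple of $\cT_{\alpha}$), so \eqref{MKeqnIIp17*!!} should be read for that normalization of $\vf_{1}$; this, together with the multivaluedness of the logarithmic periods $R_{0}^{(i)},R_{1}$, is why the statement is made on the universal cover of $(\Delta^{*})^{r-2}$.
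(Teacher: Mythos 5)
Your proof is correct and follows the paper's argument exactly: take periods of \eqref{MKeqnIIp16}, expand $\sigma(D_{0})=\sigma(J_{0})-\sum d_{i}\sigma(J_{i})$ via \eqref{MKeqnII4} and \eqref{MKeqnII12} to extract the $[p]$-coefficient $\tfrac{1}{(2\pi\ay)^{3}}\bigl(\Phi_{0}'-\sum d_{i}\Phi_{i}'\bigr)$, cancel the $\tau_{0}$-dependent classical terms using \eqref{MKeqnII9}, and pass to the $z_{0}\to0$ limit via \S\ref{MKsecIG}. The only blemish is cosmetic — your intermediate coefficients $3(\cdot)$ and $6(\cdot)$ for the $\tau_{0}^{2}$ and $\tau_{0}\tau_{j}$ pieces drop the $(2\pi\ay)^{3}/6$ prefactor of $\Phi$ that you do keep in the $\tau_{0}$-free term — but those pieces vanish by \eqref{MKeqnII9} anyway, so the argument is unaffected.
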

\begin{rem}
This is Conjecture 5.1 in \cite{DoranKerr}; also see \cite{CKYZ,Hosono}.\end{rem}
\begin{proof}
Taking periods of \eqref{MKeqnIIp16} on both sides (with respect
to $\tilde{\Omega}$ resp. $1_{\rXc}$) yields
\[
\Pi_{\cM_{\alpha}(\vf_{1})}^{\text{B}}(\hat{\uz})=\Pi_{\hat{\xi}_{D_{0}}}^{\text{A}}(\mathcal{Q}(\hat{\uz}))=\langle1_{\rXc},\sigma(D_{0})\rangle(\mathcal{Q}(\hat{\uz})).
\]
By \eqref{MKeqnII4}, 
\[
\sigma(D_{0})=\sigma(J_{0})-\sum_{i=1}^{r-2}d_{i}\sigma(J_{i})
\]
which by \eqref{MKeqnII12}
\[
=D_{0}-\frac{1}{(2\pi\ay)^{3}}\sum_{j}\left(\Phi_{0j}''-\Sigma_{i}d_{i}\Phi_{ij}''\right)C_{j}+\frac{1}{(2\pi\ay)^{3}}\left(\Phi_{0}'-\Sigma_{i}d_{i}\Phi_{i}'\right)p.
\]
Writing $\d_{D_{0}}:=\d_{0}-\sum d_{i}\d_{i}$, the A-model period
is then 
\[
\langle1_{\rXc},\sigma(D_{0})\rangle=\frac{1}{(2\pi\ay)^{3}}\d_{D_{0}}\Phi=\d_{D_{0}}\left(\Sigma_{j=0}^{r-2}\tau_{j}J_{j}\right)^{3}+\frac{1}{(2\pi\ay)^{3}}\d_{D_{0}}\sum_{\ukh\neq0}\tilde{N}_{\ukh}\tilde{\underline{Q}}^{\ukh}.
\]
For the first term, $\d_{D_{0}}$ of 
\[
\frac{r^{\circ}}{6}\tau_{0}^{3}+\frac{1}{2}\tau_{0}^{2}\sum_{j=1}^{r-2}\alpha_{0}^{j}\tau_{j}+\frac{1}{2}\tau_{0}\sum_{i,j=1}^{r-2}\alpha_{j}^{i}\tau_{i}\tau_{j}
\]
is
\[
\frac{1}{2}\left(r^{\circ}-\Sigma_{i}d_{i}\alpha_{0}^{i}\right)\tau_{0}^{2}+\tau_{0}\sum_{j=1}^{r-2}\left(\alpha_{0}^{j}-\Sigma_{i}d_{i}\alpha_{j}^{i}\right)\tau_{j}+\frac{1}{2}\sum_{i,j=1}^{r-2}\alpha_{j}^{i}\tau_{i}\tau_{j}=\frac{1}{2}\sum_{i,j}\alpha_{j}^{i}\tau_{i}\tau_{j}
\]
by \eqref{MKeqnII9}; for the second we have $\frac{1}{(2\pi\ay)^{2}}\sum_{\ukh\neq0}(k_{0}-\sum d_{i}k_{i})\tilde{N}_{\ukh}\underline{q}^{\ukh}.$
Pulling back by $\mathcal{Q}$ therefore gives (as multivalued functions
of $\hat{\underline{z}}$)\begin{equation}\label{MKeqnIIp18!*}\Pi^{\text{B}}_{\cM_{\alpha}(\vf_1)} =\frac{1}{2} \sum_{i,j} \alpha_j^i \Pi^{\text{B}}_{\cM_{\alpha}(\vf_0^{(i)})} \Pi^{\text{B}}_{\cM_{\alpha}(\vf_0^{(j)})} +\frac{1}{(2\pi \ay)^2} \sum_{\ukh\neq \underline{0}} (k_0 - \Sigma d_i k_i ) \tilde{N}_{\ukh} \underline{q}(\hat{\underline{z}})^{\ukh} ,
\end{equation}where $q_{j}(\hat{\underline{z}})=e^{2\pi\ay\prod_{\cM_{\alpha}(\vf_{0}^{(i)})}^{\text{B}}}$
and $q_{0}(\hat{\underline{z}})=e^{2\pi\ay\prod_{\cT_{\beta}}^{\text{B}}}\sim z_{0}$.

Now we pass to the limit $z_{0}\to0$, where \eqref{MKeqnIIp18!*}
essentially becomes an equality of extension classes of A- and B-model
\LMHS\unskip. (In particular, the limit on both sides is finite since these
are periods of $T_{0}$-invariant cycles; this is also clear from
the absence of $\tau_{0}=\Pi_{\cT_{\beta}}^{\text{B}}$ in any term.)
Since $\lim_{z_{0}\to0}q_{0}(\hat{\underline{z}})=0$, the $\sum_{\ukh}$
becomes a $\sum_{\underline{k}}$, while by \S\ref{MKsecIG}
\[
\lim_{z_{0}\to0}\Pi_{\cM_{\alpha}(\vf_{1})}^{\text{B}}(z_{0},\underline{z})=\frac{1}{2\pi\ay}R_{1}(\underline{z})\,,\;\;\;\lim_{z_{0}\to0}\Pi_{\cM_{\alpha}(\vf_{0}^{(i)})}^{\text{B}}(z_{0},\underline{z})=\frac{1}{2\pi\ay}R_{0}^{(i)}(\underline{z})\,,
\]
hence $\lim_{z_{0}\to0}q_{i}(z_{0},\underline{z})=Q_{i}(\underline{z})$.
So $(2\pi\ay)^2 \cdot \eqref{MKeqnIIp18!*}|_{z_0=0} $ indeed yields
our main result \eqref{MKeqnIIp17*!!}.
\end{proof}
The \GW invariants $\tilde{N}_{\underline{k}}$ ``counting''%
\footnote{These are rational and possibly negative numbers, so only ``count''
anything in the sense of excess intersection number.%
} genus-0 curves of class\linebreak $\sum_{i=1}^{r-2}k_{i}[C_{i}]$ on $\rXc$,
may also be interpreted as local \GW invariants of $D_{0}\cong\PP_{\DDelta}$,
or equivalently as (usual) \GW invariants of the 3-fold $\PP(\cO\oplus K_{\PP_{\DDelta}})$.
With this interpretation, the right-hand-side of \eqref{MKeqnIIp17*!!} (perhaps
replacing $R_{0}^{(i)}$ by $(2\pi\ay)\tau_{i}$) is the local \GW
prepotential $\Phi_{loc}$ of $\PP_{\DDelta}$.

\subsection{The sunset case}

\label{MKsecIIE}Specializing to the diagram \[\includegraphics[scale=0.50]{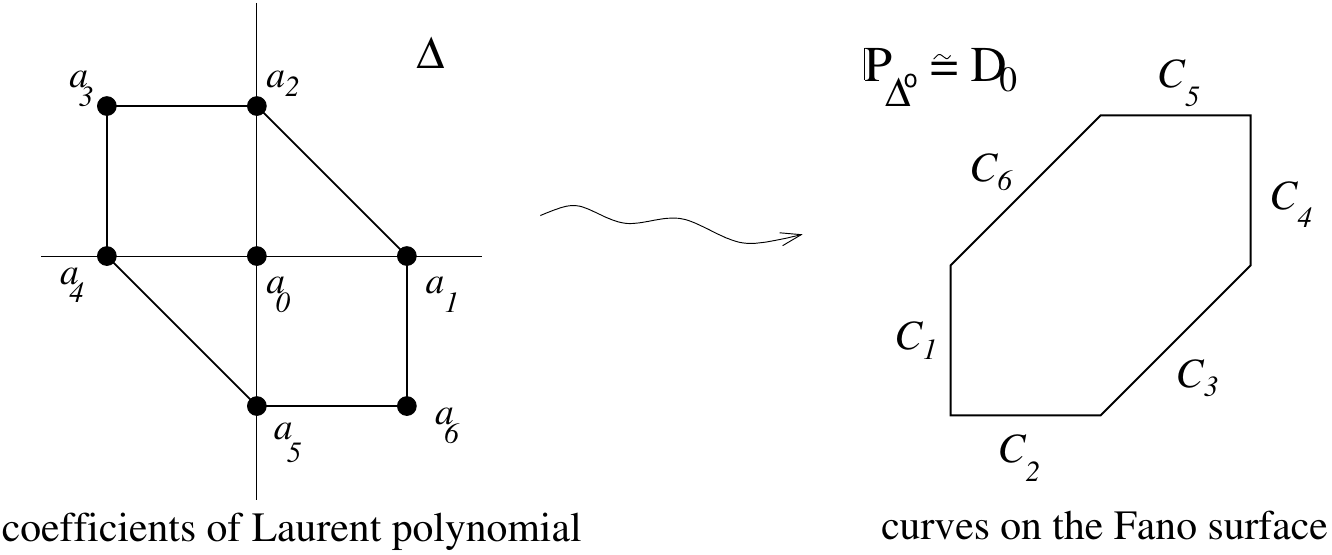}\]we
have $r=r^{\circ}=\nu=6$, $d_{i}=1$, and
\[
\ell_{j}^{i}= \begin{cases}
-1, & i=j\\
1, & i\underset{(6)}{\equiv}j\pm1\\
0, & \text{otherwise}.
\end{cases}
\]
From this we deduce that
\begin{align*}
	J_{1}&=-D_{1}+D_{3}+D_{4}\,(=D_{6})\,,\quad J_{2}=D_{3}+D_{4}\,,\\
	J_{3}&=D_{1}+D_{2}\,,\quad  J_{4}=D_{1}+D_{2}-D_{4}\,(=D_{5})\,,
\end{align*}
and (using \eqref{MKeqnII9}) that \[ \tilde{\alpha}_j^i = \small \left( \begin{array}{c|cccc} 6 & 1 & 2 & 2 & 1 \\ \hline 1 & -1 & 0 & 1 & 1 \\ 2 & 0 & 0 & 1 & 1 \\ 2 & 1 & 1 & 0 & 0 \\ 1 & 1 & 1 & 0 & -1 \end{array} \right) \normalsize . \]In
\eqref{MKeqnII13} and \eqref{MKeqnIIp10!+} we have for example\begin{align*}
\hat{\Gamma}(\rXc)&=1+\frac{1}{2}\sum C_{i}+3C_{0}+\frac{360\zeta(3)}{(2\pi\ay)^{3}}p\,,
\\
\hat{\xi}_{D_{0}}&=\cO_{D_{0}}+\tfrac{1}{2}\cO_{C_{1}}+\cO_{C_{2}}+\cO_{C_{3}}+\tfrac{1}{2}\cO_{C_{4}}-4\cO_{p}\,,
\\
-\hat{\xi}_{C_{j}}&=\cO_{C_{j}}-\cO_{p}=\cO_{C_{j}}(-1)\,;
\end{align*}notice that $\hat{\xi}_{D_{0}}$ is not quite integral. One easily
reads off the $N_{j}$ from $\tilde{\alpha}_{j}^{i}$: \[
\begin{tabular}{c|c|c}
& type & invariants \\
\hline
$N_0$ & I & $\cO_p,\, \{\cO_{C_k}\}_{k\neq 0} ,\, \cO_{D_0} $ \\
$N_1$ & IIa & $\cO_p ,\, \{\cO_{C_k}\}_{k\neq 1},\, \cO_{J_1}+\cO_{J_3},\,\cO_{J_1}+\cO_{J_4},\, \cO_{J_2}$ \\
$N_2$ & IIb & $\cO_p,\, \{ \cO_{C_k}\}_{k\neq 2},\, \cO_{J_1},\,\cO_{J_2},\,\cO_{J_3} - \cO_{J_4}$ \\
$N_3$ & IIb & $\cO_p,\, \{ \cO_{C_k}\}_{k\neq 3},\, \cO_{J_1}-\cO_{J_2},\,\cO_{J_3},\,\cO_{J_4}$ \\
$N_4$ & IIa & $\cO_p,\, \{ \cO_{C_k}\}_{k\neq 4},\, \cO_{J_1}+\cO_{J_4},\,\cO_{J_2}+\cO_{J_4},\,\cO_{J_3}$ \\
\end{tabular}
\]where $k$ runs from $0$ to $4$.

This is in some sense incomplete, as the nonsimplicial nature of the
Mori cone $\RR_{\geq0}\langle C_{1},\ldots,C_{6}\rangle\subset H_{2}(\PP_{\DDelta},\RR)$
(and the dual ``K\"ahler'' cone in $H^{2}(\PP_{\DDelta},\RR)$)
forces us to use all $6$ $\{z_{i}\}$ to parametrize the singular
4-\linebreak dimensional domain of the B-model VHS, as described in \S\ref{MKsecIF}.
But this will not matter as we presently restrict to the Feynman locus,
where $z_{i}=z_{i+3}$ ($i=1,2,3$) and $R_{0}^{(1)}=R_{0}^{(4)}$
(cf. \eqref{MKeqnIp25!}--\eqref{MKeqnIp25*}), so that the mirror
map zends $\uz\mapsto R_{0}^{(1)}(J_{1}+J_{4})+R_{0}^{(2)}J_{2}+R_{0}^{(3)}J_{3}$.
This specialization therefore replaces K\"ahler by the 3-dimensional
simplicial ``slice'' $\RR_{\geq0}\langle J_{1}+J_{4},\, J_{2},\, J_{3}\rangle=\RR_{\geq0}\langle D_{2}+D_{3},\, D_{3}+D_{4},\, D_{1}+D_{2}\rangle,$
and Mori by the 3-dimensional simplicial quotient $\RR_{\geq0}\langle\overline{C_{1}},\,\overline{C_{2}},\,\overline{C_{3}}\rangle$
in $\overline{H_{2}}:=H_{2}(\PP_{\DDelta})/\langle C_{1}-C_{4}\rangle$.
(Note that $\overline{C_{1}}\equiv\overline{C_{4}}$ $\implies$ $\overline{C_{2}}=\overline{C_{5}}$
and $\overline{C_{3}}=\overline{C_{6}}$; and that working modulo
this equivalence, $\gamma(\hat{\xi}_{D_{0}})$ becomes integral.)
It also replaces $N_{1}$ and $N_{4}$ in the table by their sum $N_{1}+N_{4}$,
which we compute to be (like $N_{2}$ and $N_{3}$) of type IIb, with
invariants
\[
\cO_{p},\,\cO_{C_{0}},\,\cO_{C_{1}}-\cO_{C_{4}},\,\cO_{C_{2}},\,\cO_{C_{3}},\,\cO_{J_{1}},\,\cO_{J_{2}}-\cO_{J_{3}},\,\cO_{J_{4}}.
\]

We shall also have to define local \GW invariants for classes $\ell_{1}\overline{C_{1}}+\ell_{2}\overline{C_{2}}+\ell_{3}\overline{C_{3}}\in\overline{H_{2}}$,
writing\begin{equation}\label{MKeqnIIp21a}N_{\ul}:= \sum_{k_1 + k_4 =\ell_1} \tilde{N}_{k_1,\ell_2,\ell_3,k_4} \in \QQ .
\end{equation}Now the statement of Theorem \ref{MKthm2} for the sunset reads $(2\pi\ay)R_{1}=$
\[
=\left(R_{0}^{(1)}+R_{0}^{(2)}\right)\left(R_{0}^{(3)}+R_{0}^{(4)}\right)-\frac{1}{2}\left(R_{0}^{(1)}\right)^{2}-\frac{1}{2}\left(R_{0}^{(4)}\right)^{2}-\sum_{\underline{k}\neq\underline{0}}|\underline{k}|\tilde{N}_{\underline{k}}\underline{Q}^{\underline{k}},
\]
where $|\underline{k}|:=\sum_{i=1}^{4}k_{i}$. The Feynman specialization
gives $R_{0}^{(1)}=R_{0}^{(4)}$ and $Q_{1}=Q_{4}$, and so writing
$\underline{Q}^{\ul}=Q_{1}^{\ell_{1}}Q_{2}^{\ell_{2}}Q^{\ell_{3}}$
and $|\ul|=\sum_{i=1}^{3}\ell_{i}$, we have the
\begin{cor} \phantomsection\label{MKcor2}
On the Feynman locus $(\cong(\Delta^{*})^{3})$ parametrizing
the general-mass sunset family, we
have\begin{equation}\label{MKeqnIIp21b}(2\pi\ay)R_1 =
  R_0^{(1)}R_0^{(2)} + R_0^{(2)}R_0^{(3)} + R_0^{(1)}R_0^{(3)} -
  \sum_{\ul \in \IN^3\backslash \underline{0}} |\ul| N_{\ul} \underline{Q}^{\ul} .
\end{equation}
A computation of the local Gromov-Witten invariant is given in section~\ref{sec:localGW}.
\end{cor}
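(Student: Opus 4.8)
The plan is to derive Corollary \ref{MKcor2} as the Feynman-locus specialization of Theorem \ref{MKthm2}, using the explicit combinatorics of the sunset reflexive polytope assembled in \S\ref{MKsecIIE}. Since Theorem \ref{MKthm2} already supplies the master identity \eqref{MKeqnIIp17*!!} over the (four-dimensional) B-model moduli for the sunset, the work consists entirely of (i) inserting the sunset data, (ii) restricting to the three-dimensional slice cut out by the Feynman relations, and (iii) two elementary bookkeeping simplifications.

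First I would plug the sunset data $r=r^\circ=\nu=6$, $d_i=1$ ($i=1,\dots,6$), and the symmetric matrix $[\alpha_j^i]_{i,j=1}^{4}$ (the lower-right block of the displayed $\tilde\alpha_j^i$ in \S\ref{MKsecIIE}) into \eqref{MKeqnIIp17*!!}. Since $\sum_{i=1}^{4}d_i k_i=|\underline{k}|$ and a short computation gives $\tfrac12\sum_{i,j=1}^{4}\alpha_j^i R_0^{(i)}R_0^{(j)}=\bigl(R_0^{(1)}+R_0^{(2)}\bigr)\bigl(R_0^{(3)}+R_0^{(4)}\bigr)-\tfrac12\bigl(R_0^{(1)}\bigr)^2-\tfrac12\bigl(R_0^{(4)}\bigr)^2$, this yields
\[
(2\pi\ay)R_1=\bigl(R_0^{(1)}+R_0^{(2)}\bigr)\bigl(R_0^{(3)}+R_0^{(4)}\bigr)-\tfrac12\bigl(R_0^{(1)}\bigr)^2-\tfrac12\bigl(R_0^{(4)}\bigr)^2-\sum_{\underline{k}\neq\underline{0}}|\underline{k}|\,\tilde N_{\underline{k}}\,\underline{Q}^{\underline{k}},
\]
the identity stated just before the corollary. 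Next I would impose the Feynman relations $z_i=z_{i+3}$ ($i=1,2,3$), which by \eqref{MKeqnIp25!!} and \eqref{MKeqnIp25**} force $R_0^{(1)}=R_0^{(4)}$, $R_0^{(2)}=R_0^{(5)}$, $R_0^{(3)}=R_0^{(6)}$ and hence $Q_1=Q_4$ (etc.); this is exactly the passage from the singular four-parameter B-model to the smooth three-dimensional Feynman slice $\cong(\Delta^*)^3$ parametrized by the masses, as discussed in \S\ref{MKsecIIE}.

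With $a:=R_0^{(1)}=R_0^{(4)}$, $b:=R_0^{(2)}$, $c:=R_0^{(3)}$, the quadratic part collapses: $(a+b)(a+c)-\tfrac12 a^2-\tfrac12 a^2=ab+ac+bc$, i.e. $R_0^{(1)}R_0^{(2)}+R_0^{(2)}R_0^{(3)}+R_0^{(1)}R_0^{(3)}$. For the instanton sum, setting $Q_1=Q_4$ replaces $\underline{Q}^{\underline{k}}=Q_1^{k_1}Q_2^{k_2}Q_3^{k_3}Q_4^{k_4}$ by $Q_1^{k_1+k_4}Q_2^{k_2}Q_3^{k_3}$, and since $|\underline{k}|=(k_1+k_4)+k_2+k_3$ depends on $\underline{k}$ only through $\ul:=(k_1+k_4,\,k_2,\,k_3)$, I would regroup the sum by fixing $\ul$; the inner (finite) sum over $k_1+k_4=\ell_1$, $k_1,k_4\geq0$, is by definition \eqref{MKeqnIIp21a} equal to $N_{\ul}$, and $|\underline{k}|=|\ul|$, so the sum becomes $\sum_{\ul\in\IN^3\backslash\underline{0}}|\ul|\,N_{\ul}\,\underline{Q}^{\ul}$. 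Assembling the two pieces gives \eqref{MKeqnIIp21b}.

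The computation itself is routine; the only point I would present with care is the legitimacy of the Feynman specialization, i.e. that restricting the master identity \eqref{MKeqnIIp17*!!} --- an equality of multivalued analytic functions on the universal cover of $(\Delta^*)^{4}$ --- to the slice $z_i=z_{i+3}$ produces a valid identity of B-model periods on $(\Delta^*)^3$, and that the resulting quotient description of Mori cone and Gromov--Witten data matches the reindexing above. This is supplied by \S\ref{MKsecIF}--\S\ref{MKsecIIE}: on the Feynman locus the class \eqref{MKeqnIp25!*} in $\cK_E$ becomes trivial, $\overline{\cK_E}$ is integrally spanned by $\vf_0^{(1)},\vf_0^{(2)},\vf_0^{(3)}$, the mirror map sends $\uz\mapsto R_0^{(1)}(J_1+J_4)+R_0^{(2)}J_2+R_0^{(3)}J_3$, and the genus-$0$ classes on $\PP_{\DDelta}$ descend to $\overline{H_2}=H_2(\PP_{\DDelta})/\langle C_1-C_4\rangle$ --- which is precisely the grouping producing the local invariants $N_{\ul}$ of \eqref{MKeqnIIp21a}. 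Once this compatibility is in hand, the corollary is immediate.
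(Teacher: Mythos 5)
Your proposal is correct and follows essentially the same route as the paper: specialize Theorem \ref{MKthm2} with the sunset data ($d_i=1$, the matrix $\alpha_j^i$) to get the identity in $R_0^{(1)},\dots,R_0^{(4)}$, then impose the Feynman relations $R_0^{(1)}=R_0^{(4)}$, $Q_1=Q_4$ so the quadratic part collapses to $R_0^{(1)}R_0^{(2)}+R_0^{(2)}R_0^{(3)}+R_0^{(1)}R_0^{(3)}$ and the instanton sum regroups into the $N_{\ul}$ of \eqref{MKeqnIIp21a}. Your added care about the legitimacy of restricting to the three-dimensional Feynman slice (triviality of \eqref{MKeqnIp25!*}, passage to $\overline{H_2}$) is exactly the point the paper handles in \S\ref{MKsecIF}--\S\ref{MKsecIIE}.
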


\section{The multiparameter sunset integral}\label{sec:multi}

In this section we use regulators (see \S\ref{MKsecIG}) to derive
the inhomogeneous Picard-Fuchs equation (Prop. \ref{MKpropIII1}) for the sunset
integral, and also to relate it to the elliptic dilogarithm (Remark
\ref{MKremIII2}). This analysis complements the derivation of
  the Picard-Fuchs equation given in section~\ref{sec:PFderivation}
  and the evaluation of the sunset integral in section~\ref{sec:elliptic-dilogarithm}. Using Corollary \ref{MKcor2}, we are able to derive
an expression for the integral in terms of the local \GW numbers, and
to compute these numbers (Prop. \ref{MKpropIII2}ff).

\subsection{Degeneration of the Yukawa coupling}

Let $\cB$ denote the symplectic basis for the B-model $\QQ$-local
system given by applying $\Theta\circ\gamma$ to \eqref{MKeqnII!!!10}.
According to $\S\S$\ref{MKsecIIB},\ref{MKsecIID} (esp. \eqref{MKeqnII12})
we find 
\begin{align}\label{MKeqnIII1}{}^{t}[\Omega]_{\cB} =  \Bigg(&1,\tau_{0},\ldots,\tau_{r-2},\frac{\Phi_{0}'}{(2\pi \ay)^{3}}+\mathcal{O}(\underline{\tau}),\ldots,\frac{\Phi_{r-2}'}{(2\pi\ay)^{3}}+\mathcal{O}(\underline{\tau}),\\ &\frac{1}{(2\pi\ay)^{3}}\left\{ 2\Phi-\sum_{\ell=0}^{r-2}\tau_{\ell}\Phi_{\ell}'\right\} +\mathcal{O}(\underline{\tau})\Bigg)\notag
\end{align}There are two ways to define Yukawa coupling: while (with $\delta_z:=z\partial_z$)\begin{equation}\label{MKeqnIII2}\tilde{Y}_{ijk}:=\int_{\br{X}}\tilde{\Omega}\wedge\nabla^3_{\delta_{z_i}\delta_{z_j}\delta_{z_k}}\tilde{\Omega},
\end{equation} makes sense ``globally'' (in $z_{0,},\ldots,z_{k})$, we consider
instead (referring to~\eqref{MKeqnIIp15!!} for $\tau(z)$) 
\begin{equation}\label{MKeqnIII3}Y_{ijk} := \int_{\br{X}}\Omega \wedge \nabla^3_{\d_{\tau_i}\d_{\tau_j}\d_{\tau_k}}\Omega ,
\end{equation}
which is defined ``locally'' about the large complex structure limit
(in $q_{0},\ldots,\allowbreak q_{k}$). Since ${[Q]}_{\cB}$ is given by \eqref{MKeqnII!!!9},
\eqref{MKeqnIII3} is easily computed to be\begin{equation}\label{MKeqnIII4}= {}^t[\Omega]_{\cB}[Q]_{\cB}[\Omega]_{\cB} = \tfrac{1}{(2\pi\ay)^3} \Phi^{(3)}_{ijk}.
\end{equation}Motivated by the fact that the unique combination of \emph{first}
derivatives of $\Phi$ remaining finite in the $q_{0}\to0$ ($z_{0}\to0$)
limit is $\Phi_{0}'-\sum_{i=1}^{r-2}d_{i}\Phi_{i}'$ (see the proof
of Theorem \ref{MKthm2}), we look at \begin{align*} Y_{jk}^{\text{loc}} & := \lim_{q_{0}\to0}\left(Y_{0jk}-\sum_{i=1}^{r-2}d_{i}Y_{ijk}\right) \\&=\alpha_{k}^{j}-\sum_{\underline{\kappa}\neq0}\tilde{N}_{\underline{\kappa}}\left(\sum_{i=1}^{r-2}d_{i}\kappa_{i}\right)\kappa_{j}\kappa_{k}\underline{Q}^{\underline{\kappa}}\\&=\tfrac{1}{(2\pi\ay)^{2}}\Phi_{\text{loc},jk}''\;=\;\tfrac{1}{2\pi\ay}\partial_{R_{0}^{(j)}R_{0}^{(k)}}^{2}R_{1}. \end{align*}To
relate these to a Yukawa coupling on the elliptic curve family $\{E_{\underline{a}}\}$,
write (cf. \eqref{MKeqnI0},\eqref{MKeqnI0!})\begin{equation}\label{MKeqnIII5}\omega_{\underline{a}} := \frac{1}{2\pi\ay} Res_{E_{\underline{a}}}\left( \frac{\tfrac{dx}{x}\wedge \tfrac{dy}{y}}{f_{\underline{a}}(x,y)}\right) \in \Omega^1(E_{\underline{a}}) ,
\end{equation}and $\pi_{0}=\int_{\varphi_{0}}\omega$, $\pi_{1}=\int_{\varphi_{1}}\omega$.
Now pass to the ``diagonal slice'' subfamily of \cite[\S5.4]{DoranKerr},
specializing $f_{\underline{a}}$ to $1-s\phis$ where $\phis(x,y)$
is a specific tempered $\Delta$-regular Laurent polynomial; by \cite[\S5.4]{DoranKerr} we have $z_{i}(s)/s^{d_{i}}$ a root of $1$ ($\forall i$)
and $R_{0}^{(1)}/d_{1}\equiv\cdots\equiv R_{0}^{(r-2)}/d_{r-2}\equiv:R_{0}$
mod $\QQ(1)$. Moreover, one has $\delta_{s}R_{i}=\pi_{i}$ ($i=1,2$),
and an easy computation reveals that
\[
2\pi\ay\sum_{j,k}d_{j}d_{k}Y_{jk}^{\text{loc}}|_{\Delta}=\partial_{R_{0}}^{2}R_{1}=\frac{\cY^{E}}{\pi_{0}^{3}},
\]
where 
\[
\cY^{E}(s):=\int_{E_{s}}\omega_{s}\wedge\nabla_{\delta_{s}}\omega_{s}=\pi_{0}\delta_{s}\pi_{1}-\pi_{1}\delta_{s}\pi_{0}
\]
is the Yukawa coupling for $\{E_{s}\}:=\{E_{\{\underline{a}(s)}\}$.
\begin{rem}
In general, if $\br{X}$ is replaced by a family of elliptically-fibered
Calabi-Yau $(n+1)$-folds, and $E$ by a family of $(n-1)$-dimensional Calabi-Yaus
$W$ with rank $n$ Picard-Fuchs equation along $\Delta$, a heuristic
Hodge-theoretic argument shows that a ($z_{0}\to0$) limit of Yukawa
couplings for $\br{X}$ yields $\cY^{W}/\pi_{0}^{n+1}$ along $\Delta$.
\end{rem}
For the rest of this section, we specialize to the sunset case. However,
to treat the three-mass situation, we shall need to consider ``arbitrary
slices'' of the Feynman locus, given by (the vanishing of)
\begin{align*}
	f_{\underline{a}(s;\underline{\xi})}(x,y)&:=f_s^{\su}(x,y):=1-s\phis(x,y),\\
	\phis(x,y)&:=(1-x^{-1}-y^{-1})(\xi_{3}^{2}-\xi_{2}^{2}x-\xi_{1}^{2}y).
\end{align*}
(Note that $\phis$ is no longer tempered.) 
We write $\cE^{\ux}\overset{\varepsilon}{\to}\PP_{s}^{1}$
for the family with fibers 
\[
E_{s}^{\su}=\overline{\{f_s^{\su}(x,y)=0\}}\subset\PP_{\Delta},
\]
and $\omega_{s}:=\omega_{\ua}(s;\ux)$ (cf. \eqref{MKeqnIII5}) for
the section of $\varepsilon_{*}\omega_{\cE/\PP^{1}}\cong\cO(1)$ with
a simple zero at $s=\infty$. Note that this family is semistable.

The Yukawa coupling (with $\delta_s:=s\partial_s$) \begin{equation}\label{MKeqnIII5.5}Y_\su(s):=
  2\pi\ay \int_{E_s^{\su}} \omega_s \wedge \nabla_{\delta_s} \omega_s
  \in \CC(\PP^1)^* \cong \CC(s)^* 
\end{equation}can be determined up to scale by the properties:
\begin{itemize}
\item $Y_{\su}$ has a double zero at $\infty$;
\item $Y_{\su}(0)\in\CC^{*}$;
\item at other singular fibers, $Y_{\su}(s)$ has a simple pole;
\item $Y_{\su}(s)$ has a zero of order $m-1$ at branch points of order $m$
for the $J$-invariant $J(s)$;
and
\item $Y_{\su}(0)=6$, by \eqref{MKeqnIII20} below.
\end{itemize}
This yields the function \begin{equation}\label{MKeqnIII6}Y_{\su}(s)=\frac{2 \mu_1 \mu_2 \mu_3 \mu_4 s^2 - 4(\xi_1^2 + \xi_2^2 + \xi_3^2 )s + 6}{\prod_{i=1}^4 (1-\mu_i^2 s)} ,
\end{equation}where $\mu_{1}=-\xi_{1}+\xi_{2}+\xi_{3}$, $\mu_{2}=\xi_{1}-\xi_{2}+\xi_{3}$,
$\mu_{3}=\xi_{1}+\xi_{2}-\xi_{3}$, $\mu_{4}=\xi_{1}+\linebreak \xi_{2}+\xi_{3}$.
This of course reproduces the expression for the Yukawa coupling
in~\eqref{e:Yu} derived in section~\ref{sec:PFderivation}.

We shall use this below to compute the local \GW invariants $N_{\ul}$,
for simplicity of notation suppressing most ``$\su$'' subscripts
in what follows.

\subsection{Inhomogeneous equation for the sunset integral}\label{sec:sunsetBmodel}

Continuing an analysis of the $1$-parameter family $\cE_\su\overset{\varepsilon}{\to}\PP^{1}$,
we write as usual $\{\{\vf_{0}^{(i)}\}_{i=1}^{6},\vf_{1}\}\subset\cK_{E}$
(cf. $\S$\ref{MKsecID}), and recall that on the Feynman locus, $\{\{\vf_{0}^{(i)}\}_{i=1}^{3},\vf_{1}\}$
furnish a basis for $\bar{\cK}_{E}$ (cf. $\S$\ref{MKsecIF}). For
the holomorphic period (about $s=0$), the usual residue computation
yields\begin{align} \label{MKeqnIII7}  \pi_{0}  &=
    \int_{\vf_{0}}\omega=\int_{\vf_{0}^{(i)}}\omega\;\;(i=1,2,3)\\\notag  &=
    \sum_{m\geq0}s^{m}\left(\sum_{|\underline{b}|=m}\ux^{\underline{b}}{m
        \choose \underline{b}}^{2}\right)  =:
    \sum_{m\geq0}s^{m}\beta_{m}, \end{align}
where ${m
        \choose \underline{b}}= {m!
        \over b_1!b_2!b_3!}$ and the coefficients $\beta_m$ are
      generalized Ap\'ery numbers.
Writing
$R=\tfrac{1}{2\pi\ay}R\{x,y\}=\tfrac{1}{2\pi\ay}\log(x)\tfrac{dy}{y}-\log(y)\delta_{T_{x}}$
for the regulator current on $E_{s}^{*}$, \eqref{MKeqnIp25!!} gives
for $i=1,2,3$ \begin{equation}\label{MKeqnIII8}R_0^{(i)} = \int_{\vf_0^{(i)}} R = \log \left( \tfrac{-\xi_i^2 s}{1-s\sum\xi_i^2}\right) + \cH(s),
\end{equation}where $\cH$ is holomorphic (about $s=0$) and vanishes at $s=0$.
Write $\cL_{i}:=2\log(\xi_{i})$.

Interpreted as a 1-current on $\cE\backslash E_{0}$, $R$ has coboundary
\begin{equation}\label{MKeqnIII9}d[R] = \tfrac{1}{2\pi\ay} \tfrac{dx}{x} \wedge \tfrac{dy}{y} - (2\pi \ay)\delta_{T_x\cap T_y} - \sum_{i=1}^3 \log\left(\tfrac{m_i^2}{m_{i-1}^2}\right)\delta_{q_i \times \PP^1\setminus \{ 0\}} ,
\end{equation}where ${\tt q}_{1},{\tt q}_{2},{\tt q}_{3},{\tt p}_{1},{\tt p}_{2},{\tt p}_{3}$
constitute the base locus of $\{E_{s}\}$:\[\includegraphics[scale=0.6]{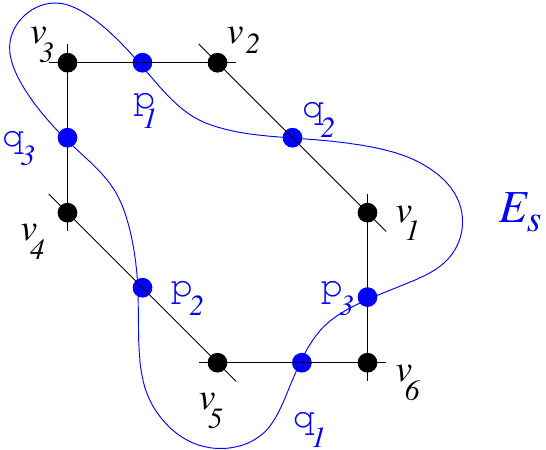}\]So
locally over any small disk $U\subset\PP^{1}$ avoiding the discriminant
locus of $\varepsilon$, writing $\cP^{ij}\to U$ for the $3$-chain
with boundary ${\tt q}_{j}\times U-{\tt q}_{i}\times U$ (and fibers
$P^{ij}$), we may construct the $1$-current \begin{equation}\label{MKeqnIII10}\hat{R} := R - \{ \cL_1 \delta_{\cP^{23}} + \cL_2 \delta_{\cP^{12}} + \cL_3 \delta_{\cP^{31}} \} - (2\pi\ay) \delta_{\partial^{-1}(T_x\cap T_y)} ,
\end{equation}which has $d[\hat{R}]=(2\pi\ay)^{-1}dx/x\wedge dy/y$. Notice that
its restriction to fibers $E_{s}$ is closed.%
\footnote{The resulting family of classes in $H^{1}(E_{s},\CC)$ are lifts of
regulator classes in $H^{1}(E_{s},\CC/\ZZ(2))$ for an element of
$CH^{2}(E_{s},2)$ precisely when the ratios $\xi_{i}/\xi_{i-1}$
are roots of unity, but we will not need this.%
}

For $\xi_{i}$ all $1$ (equal masses) and $s\notin[0,\tfrac{1}{9}]$,
we have $T_{x}\cap T_{y}\cap E_{s}=\emptyset$; moving the $\xi_{i}$
in a small neighborhood of $\underline{1}$, the ``bad set'' $[0,\tfrac{1}{9}]$
thickens slightly. Taking $U$ in the complement, we may ignore the
last term of \eqref{MKeqnIII10} for purposes of integrating over
$\vf_{0}^{(i)}$. Recall from \cite{DoranKerr} that if $\vf_{0}^{i}$
are the cycles that (at $s=0$) get pinched to $v_{i}$, \[\includegraphics[scale=0.6]{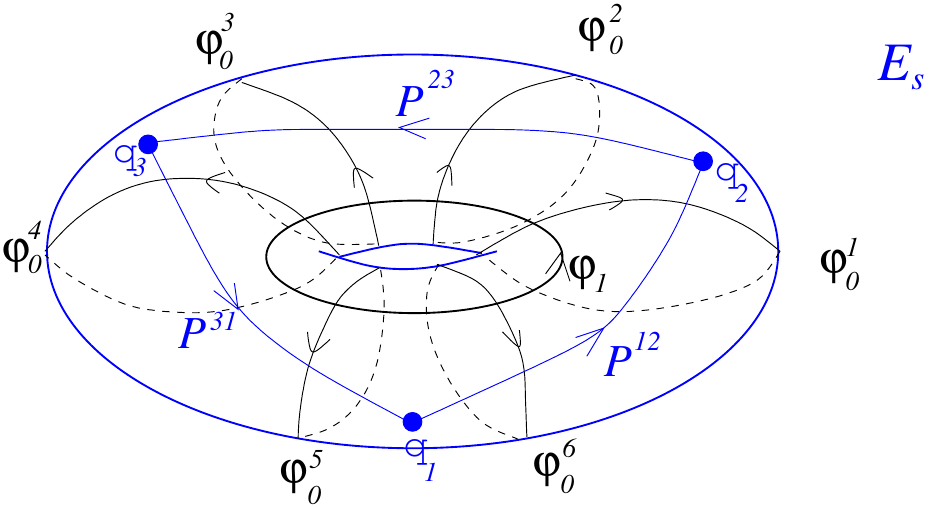}\]then
\[
\vf_{0}^{(i)}=-\vf_{0}^{i}+\vf_{0}^{i+1}+\vf_{0}^{i-1}.
\]
Together with \eqref{MKeqnIII10}, the resulting intersection numbers%
\footnote{Here we are pairing $H^{1}(E^{*})\cong H_{1}(E,\{{\tt p}_{i},{\tt q}_{i}\}_{i=1}^{3})$
and $H_{1}(E^{*})$.%
} $\vf_{0}^{(1)}\cdot P^{23}=\vf_{0}^{(2)}\cdot P^{12}=\vf_{0}^{(3)}\cdot P^{31}=1$
(all others zero) yield \begin{equation}\label{MKeqnIII11}\int_{\vf_0^{(i)}}\hat{R} = R_0^{(i)} - \cL_i =: \hat{R}_0 ,
\end{equation}which according to \eqref{MKeqnIII8} (or the closedness of $\hat{R}|_{E_{s}}$)
is independent of $i$. As suggested by the picture, we can also choose
the $P^{ij}$ to avoid $\vf_{1}$, and so \begin{equation}\label{MKeqnIII11.5}\hat{R}_1 := \int_{\vf_1}\hat{R} = R_1 .
\end{equation}

Next consider the interior product of $d[\hat{R}]$ with a lift of
$s\tfrac{d}{ds}$: working over $U$,\begin{align*} 2\pi\ay\cdot d[\hat{R}]\lrcorner \widetilde{s\tfrac{d}{ds}} & = \tfrac{dx}{x} \wedge \tfrac{dy}{y} \lrcorner \widetilde{s\tfrac{d}{ds}} \\ & = - Res_{\cE}\left( \tfrac{dx}{x} \wedge \tfrac{dy}{y} \wedge \text{dlog} (s^{-1}-\phis) \right) \lrcorner \widetilde{s\tfrac{d}{ds}} \end{align*} restricts
on fibers to\begin{equation*} -Res_{E_s} \left(\tfrac{dx}{x} \wedge\tfrac{dy}{y} \wedge \tfrac{-ds/s^2}{s^{-1}-\phis} \lrcorner \widetilde{s\tfrac{d}{ds}} \right) \\ = Res_{E_s} \left( \frac{\tfrac{dx}{x} \wedge \tfrac{dy}{y}}{1-s\phis} \right) = (2\pi\ay)\omega_s . \end{equation*}It
follows that \begin{equation}\label{MKeqnIII12}\nabla_{\delta_s} [\hat{R}|_{E_s}] = [\omega_s] ,
\end{equation}which along with \eqref{MKeqnIII11} implies that \begin{equation}\label{MKeqnIII13}R_0^{(i)} - \cL_i \, (=\hat{R}_0 ) = \log(-s) + \sum_{m>0} \tfrac{s^m}{m} \beta_m
\end{equation}up to an additive constant. This constant is obviously zero by \eqref{MKeqnIII8}.

Now recall that the Feynman integral is given by $\cIs(s):=-s
V_\su(s)$ \begin{equation}\label{MKeqnIII14}V_\su(s) = \int_{T_x\cap T_y} \frac{\tfrac{dx}{x} \wedge \tfrac{dy}{y}}{1-s\phis} =: \int_{T_x \cap T_y} \hat{\omega}_s .
\end{equation}Writing $\imath^{s}:\, E_{s}\hookrightarrow\PP_{\Delta}$, and $R=\tfrac{1}{2\pi\ay}R\{x,y\}$
as above, we note that $d[\hat{\omega}_{s}]=(2\pi\ay)^{2}\imath_{*}^{s}\omega_{s}$
as a current, and that (on $\PP_{\Delta}$) 
\[
d[\tfrac{1}{2\pi\ay}R]=\tfrac{1}{(2\pi\ay)^{2}}\tfrac{dx}{x}\wedge\tfrac{dy}{y}-\delta_{T_{x}\cap T_{y}}+\{\text{residue terms supported on }\mathbb{D}_{\Delta}\}.
\]
Using integration by parts (for currents), we get that \eqref{MKeqnIII14}
becomes \begin{equation}\label{MKeqnIII14.5}\tfrac{1}{2\pi\ay} \int_{\PP_{\Delta}} R\wedge d[\hat{\omega}_s] = 2\pi\ay\int_{\PP_{\Delta}} R \wedge \imath_*^s \omega_s = 2\pi \ay \int_{E_s} R|_{E_s} \wedge \omega_s .
\end{equation}(Note that \eqref{MKeqnIII14.5} is \emph{not} a truncated higher
normal function in the sense of \cite{DoranKerr}, and neither is
$\langle\hat{R}|_{E_{s}},\omega_{s}\rangle$ in \eqref{MKeqnIII15}
below.) Since $\partial^{-1}(T_{x}\cap T_{y})$ in \eqref{MKeqnIII10}
avoids $\varepsilon^{-1}(U)$ (and $s\in U$), we conclude that%
\footnote{Of course $\langle\hat{R},\omega\rangle$ means $\int_{E_{s}}\hat{R}\wedge\omega$;
we write it this way to emphasize that fact that two cohomology classes
are being paired.%
} \begin{equation}\label{MKeqnIII15}V_\su(s)=\langle \hat{R}|_{E_s} , \tilde{\omega}_s\rangle + \sum_{i=1}^3 \cL \tilde{\pi}^{(i)}_1 (s;\ux ) ,
\end{equation}where $\tilde{\omega}_{s}:=(2\pi\ay)\omega_{s}$, and \begin{equation}\label{MKeqnIII15.5}\tilde{\pi}_1^{(1)} := \int_{\tt{q}_2}^{\tt{q}_3} \tilde{\omega}_s , \tilde{\pi}_1^{(2)} := \int_{\tt{q}_1}^{\tt{q}_2} \tilde{\omega}_s , \tilde{\pi}_1^{(3)} := \int_{\tt{q}_3}^{\tt{q}_1} \tilde{\omega}_s .
\end{equation}Note that the $\{{\tt q}_{i}\}$ depend on $\ux$, and that $\sum_{j=1}^{3}\tilde{\pi}_{1}^{(j)}=\tilde{\pi}_{1}(=2\pi\ay\pi_{1})$.

Let $\theta=\delta_{s}^{2}+q_1(s)\delta_{s}+q_0(s)$ be the Picard-Fuchs
operator associated to $\{\omega_{s}\}$, so that $\nabla_{\delta_{s}}^{2}+f(s)\nabla_{\delta_{s}}+g(s)$
kills $[\omega_{s}]$. Using \eqref{MKeqnIII12} and \eqref{MKeqnIII5.5},
we find $\delta_{s}\langle\hat{R},\omega\rangle=\langle\hat{R},\nabla_{\delta_{s}}\omega\rangle$
and $\delta_{s}^{2}\langle\hat{R},\omega\rangle=(2\pi \ay)^{-1}Y_{\su}(s)+\langle\hat{R},\nabla_{\delta_{s}}^{2}\omega\rangle$,
which leads at once to the inhomogeneous Picard-Fuchs equation:
\begin{prop} \phantomsection\label{MKpropIII1}
We have \begin{equation}\label{MKeqnIII16}
\theta \left(V_\su(s)\right)=\theta\left(-{1\over s} \cIs(s)\right) = Y_{\su}(s) + \sum_{j=1}^3 \log(\xi_j^2) \nu_j(s) ,
\end{equation} where \begin{equation}\label{MKeqnIII17}\nu_i (s) :=\theta \left( \tilde{\pi}^{(i)}_1 (s;\ux ) \right)
\end{equation}satisfy $\sum_{i=1}^{3}\nu_{i}=0$. \end{prop}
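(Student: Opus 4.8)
The plan is to run the entire argument inside the Gauss--Manin local system $\mathcal{H}^{1}$ over a disc in $\PP^{1}_{s}$, whose fibre at $s$ is $H^{1}(E_{s}^{\su},\CC)$ with its flat, alternating intersection pairing $\langle\alpha,\beta\rangle=\int_{E_{s}^{\su}}\alpha\wedge\beta$. Three ingredients from the preceding discussion are taken as given: the splitting $V_\su(s)=\langle\hat{R}|_{E_{s}},\tilde{\omega}_{s}\rangle+\sum_{i=1}^{3}\cL_{i}\,\tilde{\pi}_{1}^{(i)}(s;\ux)$ of~\eqref{MKeqnIII15}; the identity $\nabla_{\delta_{s}}[\hat{R}|_{E_{s}}]=[\omega_{s}]$ of~\eqref{MKeqnIII12} together with the fibrewise closedness of $\hat{R}|_{E_{s}}$ arranged by the construction~\eqref{MKeqnIII10}, which jointly say that $[\hat{R}|_{E_{s}}]$ is a well-defined (non-flat) section of $\mathcal{H}^{1}$; and the Picard--Fuchs statement that $\theta^{\nabla}:=\nabla_{\delta_{s}}^{2}+q_{1}(s)\nabla_{\delta_{s}}+q_{0}(s)$, the incarnation of $\theta$ on $\mathcal{H}^{1}$, annihilates $[\omega_{s}]$.

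First I would differentiate the regulator period. Writing $a:=[\hat{R}|_{E_{s}}]$ and $w:=[\tilde{\omega}_{s}]=2\pi\ay\,[\omega_{s}]$, flatness of $\langle\cdot,\cdot\rangle$ gives $\delta_{s}\langle a,w\rangle=\langle\nabla_{\delta_{s}}a,w\rangle+\langle a,\nabla_{\delta_{s}}w\rangle$; by~\eqref{MKeqnIII12} the first term is $\tfrac{1}{2\pi\ay}\langle w,w\rangle=0$ since the pairing is alternating, so $\delta_{s}\langle a,w\rangle=\langle a,\nabla_{\delta_{s}}w\rangle$. Differentiating once more and invoking~\eqref{MKeqnIII12} again, $\delta_{s}^{2}\langle a,w\rangle=\tfrac{1}{2\pi\ay}\langle w,\nabla_{\delta_{s}}w\rangle+\langle a,\nabla_{\delta_{s}}^{2}w\rangle$, and $\tfrac{1}{2\pi\ay}\langle w,\nabla_{\delta_{s}}w\rangle=2\pi\ay\int_{E_{s}^{\su}}\omega_{s}\wedge\nabla_{\delta_{s}}\omega_{s}=Y_{\su}(s)$ by~\eqref{MKeqnIII5.5}--\eqref{MKeqnIII6}. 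Therefore $\theta\langle a,w\rangle=\delta_{s}^{2}\langle a,w\rangle+q_{1}\delta_{s}\langle a,w\rangle+q_{0}\langle a,w\rangle=Y_{\su}(s)+\langle a,\theta^{\nabla}w\rangle=Y_{\su}(s)$, the last bracket vanishing because $\theta^{\nabla}$ kills $[\omega_{s}]$.

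It remains to treat the elementary log terms and to prove $\sum_{i}\nu_{i}=0$. By linearity $\theta\!\left(\sum_{i}\cL_{i}\tilde{\pi}_{1}^{(i)}\right)=\sum_{i}\cL_{i}\,\theta(\tilde{\pi}_{1}^{(i)})=\sum_{i}\cL_{i}\,\nu_{i}(s)$ by the definition~\eqref{MKeqnIII17}; combined with the previous paragraph, with $\cL_{i}=\log(\xi_{i}^{2})$, and with $\cIs(s)=-s\,V_\su(s)$, this gives $\theta\!\left(-\tfrac{1}{s}\cIs(s)\right)=\theta(V_\su(s))=Y_{\su}(s)+\sum_{i}\log(\xi_{i}^{2})\,\nu_{i}(s)$, the asserted inhomogeneous Picard--Fuchs equation. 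For $\sum_{i}\nu_{i}=0$, note that the three arcs $\tilde{\pi}_{1}^{(1)},\tilde{\pi}_{1}^{(2)},\tilde{\pi}_{1}^{(3)}$ are integrals of $\tilde\omega_s$ over paths joining the ($s$-independent) base points, and concatenate to a closed $1$-cycle, so $\sum_{i}\tilde{\pi}_{1}^{(i)}=\tilde{\pi}_{1}=2\pi\ay\,\pi_{1}$ is a genuine period of $\omega_{s}$; since $\pi_{1}$ solves $\theta y=0$ we obtain $\sum_{i}\nu_{i}=\theta\!\left(\sum_{i}\tilde{\pi}_{1}^{(i)}\right)=2\pi\ay\,\theta(\pi_{1})=0$.

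The step I expect to require the most care is the cohomological interpretation underlying~\eqref{MKeqnIII12}: making precise that $\hat{R}|_{E_{s}}$ --- a priori a $1$-current with singularities along the base locus and along the real torus $T_{x}\cap T_{y}$ --- descends to a well-defined de Rham class in $H^{1}(E_{s}^{\su},\CC)$ on which Gauss--Manin acts, with $\nabla_{\delta_{s}}$ of it equal to $[\omega_{s}]$. This is exactly what legitimizes the Leibniz manipulations of the second paragraph, and it is precisely why the residue $3$-chains $\cP^{ij}$ and $\partial^{-1}(T_{x}\cap T_{y})$ were subtracted off in~\eqref{MKeqnIII10}: so that $d[\hat{R}]$ retains only its absolutely continuous part on $\cE_{\su}\setminus E_{0}$, forcing $\hat{R}|_{E_{s}}$ to be fibrewise closed, while the interior product with a lift of $\delta_{s}$ produces~\eqref{MKeqnIII12}. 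Granting this, everything else --- the differentiation, the alternating property of the pairing, and the vanishing $\langle a,\theta^{\nabla}w\rangle=0$ --- is formal.
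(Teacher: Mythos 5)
Your proposal is correct and follows essentially the same route as the paper: split $V_\su$ as in \eqref{MKeqnIII15}, differentiate $\langle\hat R|_{E_s},\tilde\omega_s\rangle$ twice using $\nabla_{\delta_s}[\hat R|_{E_s}]=[\omega_s]$ and the flat alternating pairing to produce the Yukawa term via \eqref{MKeqnIII5.5}, kill the remaining bracket with the Picard--Fuchs operator, and get $\sum_i\nu_i=0$ from $\sum_i\tilde\pi_1^{(i)}=\tilde\pi_1$ being a genuine period. Your closing remarks on the role of the correction chains in \eqref{MKeqnIII10} match the paper's justification of \eqref{MKeqnIII12}, so there is nothing to add.
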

\begin{rem}
(i) The functions in \eqref{MKeqnIII17} belong to $\bar{\QQ}(s)^{*}$,
since the partial elliptic integrals in \eqref{MKeqnIII15.5} are
the normal functions associated to globally well-defined algebraic
$0$-cycles $[{\tt q}_{j+1}]-[{\tt q}_{j}]$ on the family $\{E_{s}\}$,
and the section $\{\tilde{\omega}_{s}\}$ of the relative canonical
bundle is defined over $\bar{\QQ}$.

(ii) The right-hand-side of \eqref{MKeqnIII16} only depends on $s$ and the mass
ratios, since this is true for $\nu_{j}$ and $Y_{\su}$; and
we have $\sum_{j=1}^{3}\cL_{j}\nu_{j}=\log(m_{2}^{2}/m_{3}^{2})\nu_{1}+\log(m_{1}^{2}/m_{3}^{2})\nu_{2}$. 
\end{rem}

\begin{rem}The coefficients $q_1(s)$ and $q_0(s)$ are respectively
  given in~\eqref{eq:q1} and~\eqref{eq:q0}. An explicit expression for the $\nu_i(s)$ in some
  coordinate system is given in 
  section~\eqref{e:Y3mass}. In particular $\prod_{i=1}^4 (s\mu_i^2-1)  (s^2
  \prod_{i=1}^4\mu_i-2s(\xi_1^2+\xi_2^2+\xi_3^2)+3) \nu_i(s)= 12s\,c_i(s) $ with $c_1(s)$ given in~\eqref{e:c1} and $c_2(s)$ given\linebreak in~\eqref{e:c2}.
\end{rem}

\begin{rem}
One can also relate \eqref{MKeqnIII14.5} directly
to the elliptic dilogarithm. Noting that up to coboundary $-(2\pi\ay)R|_{E_{s}}\equiv\log(y)\tfrac{dx}{x}-(2\pi\ay)\log(x)\delta_{T_{y}},$
we get \begin{equation}\label{MKeqnIII17.5}\cIs(s)=-s\int_{T_{y}(\cap E_{s})}\log(x)\tilde{\omega}_{s}.
\end{equation} Recalling that $\partial T_{y}=(y)$, this connects at
once to the expression for the sunset integral in \eqref{eq:4.30} hence~\eqref{e:IsunsetR}.
\label{MKremIII2}\end{rem}

\subsection{On the local Gromov-Witten numbers}\label{sec:localGW}

Turning to the numbers $N_{\ul}=N_{\ell_{1},\ell_{2},\ell_{3}}$ (cf.
\eqref{MKeqnIIp21a}), note first that symmetries of $\PP_{\Delta}$
immediately imply that for any $\sigma\in S_{3}$,
\[
N_{\ul}=N_{\sigma(\ul)}.
\]
We also know that 
\[
N_{100}+N_{010}+N_{001}=6,
\]
as this is the number of ``anticanonical-degree-one'' rational curves
on $\PP_{\Delta}$ (the six toric boundary components).

The symmetries also force the prepotential $\Phi_{loc}=(2\pi\ay)R_{1}$
to be symmetric in the $\tau_{i}=(2\pi\ay)R_{0}^{(i)}$ ($i=1,2,3$).
Indeed, this is already recorded in \eqref{MKeqnIIp21b}, which combined
with \eqref{MKeqnIII11} and \eqref{MKeqnIII11.5} becomes \begin{equation}\label{MKeqnIII18}(2\pi\ay) \hat{R}_1 = 3\hat{R}_0^2 + 2\left(\sum \cL_i \right)\hat{R}_0 - \sum_{\ell >0} \ell N_{\ell}\hat{Q}^{\ell} ,
\end{equation}where $\hat{Q}=e^{\hat{R}_{0}}$, and \begin{equation}\label{MKeqnIII19}N_{\ell}:=\sum_{|\ul |=\ell} N_{\ul} \ux^{2\ul} .
\end{equation}But since $\nabla_{\delta_{s}}[\hat{R}]=[\omega]$, we have immediately
$\delta_{s}\hat{R}_{1}=\pi_{1}$ and $\delta_{s}\hat{R}_{0}=\pi_{0}$,
so that \begin{equation}\label{MKeqnIII20}(2\pi\ay){\partial^2
    \hat{R}_1\over\partial \hat{R}_0^2}  =
  (2\pi\ay){\partial\over \partial \hat{R}_0} \frac{\pi_1}{\pi_0} = \frac{Y_{\su}}{\pi_0^3} .
\end{equation}Putting together the expressions of the Yukawa
coupling~\eqref{MKeqnIII6}, the period $\pi_0$ and the coefficients
$\beta_m$ in~\eqref{MKeqnIII7},  for $\hat R_0$ in~\eqref{MKeqnIII13},
the expansion of $\hat R_1$ in~\eqref{MKeqnIII18} and \eqref{MKeqnIII20} now yields the
\begin{prop}\phantomsection\label{MKpropIII2}
In a neighborhood of $s=0$ ($\hat{Q}=0$), we
have
\[
6-\sum_{\ell>0}\ell^{3}N_{\ell}\hat{Q}^{\ell}=\frac{6-4(\xi_{1}^{2}+\xi_{2}^{2}+\xi_{3}^{2})s+2\mu_{1}\mu_{2}\mu_{3}\mu_{4}s^{2}}{\left(1+\sum_{m>0}\beta_{m}s^{m}\right)^3\prod_{i=1}^{4}(1-\mu_{i}^{2}s)},
\]
where $\hat{Q}=-s\exp\left\{ \sum_{m>0}{\beta_{m} s^{m}\over
m}\right\}$.
\end{prop}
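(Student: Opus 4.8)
The plan is to obtain Proposition~\ref{MKpropIII2} by differentiating the local prepotential identity \eqref{MKeqnIII18} twice with respect to $\hat R_0$ and then inserting the closed forms already available for the Yukawa coupling, the holomorphic period, and the mirror coordinate. First I would assemble the inputs. Corollary~\ref{MKcor2} reads $(2\pi\ay)R_1=R_0^{(1)}R_0^{(2)}+R_0^{(2)}R_0^{(3)}+R_0^{(1)}R_0^{(3)}-\sum_{\ul\neq\underline 0}|\ul|\,N_{\ul}\,\underline Q^{\ul}$; substituting $R_0^{(i)}=\hat R_0+\cL_i$ from \eqref{MKeqnIII11}, $R_1=\hat R_1$ from \eqref{MKeqnIII11.5}, and $Q_i=\xi_i^2\hat Q$, and regrouping the sum by total degree via $\underline Q^{\ul}=\hat Q^{|\ul|}\ux^{2\ul}$ and the definition \eqref{MKeqnIII19} of $N_\ell$, one recovers (up to an irrelevant additive constant) exactly \eqref{MKeqnIII18}, namely $(2\pi\ay)\hat R_1=3\hat R_0^2+2(\sum_i\cL_i)\hat R_0-\sum_{\ell>0}\ell\,N_\ell\,\hat Q^\ell$. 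Separately, the regulator relation $\nabla_{\delta_s}[\hat R|_{E_s}]=[\omega_s]$ of \eqref{MKeqnIII12} gives $\delta_s\hat R_0=\pi_0$ and $\delta_s\hat R_1=\pi_1$, hence $\partial_{\hat R_0}=\pi_0^{-1}\delta_s$ and therefore \eqref{MKeqnIII20}: $(2\pi\ay)\,\partial^2_{\hat R_0}\hat R_1=Y_{\su}/\pi_0^3$.

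The heart of the proof is one line of differentiation. Since $\hat Q=e^{\hat R_0}$ --- this is \eqref{MKeqnIII13} exponentiated, $\hat Q=-s\exp\{\sum_{m>0}\beta_m s^m/m\}$ --- we have $\partial_{\hat R_0}\hat Q^\ell=\ell\,\hat Q^\ell$, so $\partial^2_{\hat R_0}(\ell N_\ell\hat Q^\ell)=\ell^3 N_\ell\hat Q^\ell$ (the $\ell$, $N_\ell$ being constants relative to $\hat R_0$); also $\partial^2_{\hat R_0}(3\hat R_0^2)=6$, and $\partial^2_{\hat R_0}$ annihilates the linear and constant terms of \eqref{MKeqnIII18}. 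Applying $\partial^2_{\hat R_0}$ to \eqref{MKeqnIII18} and using \eqref{MKeqnIII20} on the left therefore yields
\[
\frac{Y_{\su}(s)}{\pi_0(s)^3}=6-\sum_{\ell>0}\ell^3\,N_\ell\,\hat Q^\ell .
\]
It remains only to insert the explicit $Y_{\su}(s)=\bigl(6-4(\xi_1^2+\xi_2^2+\xi_3^2)s+2\mu_1\mu_2\mu_3\mu_4 s^2\bigr)/\prod_{i=1}^4(1-\mu_i^2 s)$ from \eqref{MKeqnIII6}, $\pi_0(s)=1+\sum_{m>0}\beta_m s^m$ with the $\beta_m$ the generalized Ap\'ery numbers of \eqref{MKeqnIII7}, and $\hat Q=-s\exp\{\sum_{m>0}\beta_m s^m/m\}$ from \eqref{MKeqnIII13}, which is the stated identity.

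I do not expect a genuine obstacle: the analytic content (the equality of A- and B-model variations underlying Corollary~\ref{MKcor2}, and the regulator manipulations producing \eqref{MKeqnIII12}, \eqref{MKeqnIII18}, \eqref{MKeqnIII20}) is already established, so this proposition is essentially assembly. The two points requiring care are (i) checking that the degree-regrouping $\sum_{\ul\neq\underline 0}|\ul|N_{\ul}\underline Q^{\ul}=\sum_{\ell>0}\ell N_\ell\hat Q^\ell$ is consistent with the definitions \eqref{MKeqnIIp21a} and \eqref{MKeqnIII19}; and (ii) justifying the term-by-term differentiation in $\hat R_0$ --- since $\hat Q=-s\,e^{O(s)}$ is a local coordinate vanishing to first order at $s=0$, the series $\sum_{\ell>0}\ell N_\ell\hat Q^\ell$ is a convergent power series in $s$ with vanishing constant term, so the whole computation takes place in the local ring at $s=0$, where in particular the $\log s$ and $\log^2 s$ contributions to $\hat R_0^2$ and $\hat R_1$ cancel in \eqref{MKeqnIII18}.
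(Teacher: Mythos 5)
Your proposal is correct and follows essentially the same route as the paper: apply $\partial^2_{\hat R_0}$ to \eqref{MKeqnIII18}, identify the left-hand side with $Y_{\su}/\pi_0^3$ via \eqref{MKeqnIII20}, and then substitute the explicit expressions \eqref{MKeqnIII6}, \eqref{MKeqnIII7} and \eqref{MKeqnIII13} for $Y_{\su}$, $\pi_0$ and $\hat Q=e^{\hat R_0}$. The paper compresses exactly this assembly into one sentence, so nothing further is needed.
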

We may use Proposition \ref{MKpropIII2} to recover the $N_{\ul}$,
as well as the local ``instanton numbers'' $n_{\ul}$ defined by
the Aspinwall-Morrison formula~\cite{Aspinwall:1991ce,Voisin}
\[
N_{\ell_{1},\ell_{2},\ell_{3}}=\sum_{d|\ell_{1},\ell_{2},\ell_{3}}\frac{1}{d^{3}}n_{\frac{\ell_{1}}{d},\frac{\ell_{2}}{d},\frac{\ell_{3}}{d}}.
\]
As far as we computed, the latter are integers:\[%
\begin{tabular}{|c||c|c|c|c|c|c|c|c|c|}
\hline 
$\ul$ & $(100)$ & $\overset{k>0}{(k00)}$ & $(110)$ & $(210)$ & $(111)$ & $(310)$ & $(220)$ & $(211)$ & $(221)$\tabularnewline
\hline 
\hline 
$N_{\ul}$ & $2$ & $2/k^{3}$ & $-2$ & $0$ & $6$ & $0$ & $-1/4$ & $-4$ & $10$\tabularnewline
\hline 
$n_{\ul}$ & $2$ & $0$ & $-2$ & $0$ & $6$ & $0$ & 0 & $-4$ & $10$\tabularnewline
\hline 
\end{tabular}\]\[\small%
\begin{tabular}{|c||c|c|c|c|c|c|c|c|c|}
\hline 
$\ul$ & $(410)$ & $(320)$ & $(311)$ & $(510)$ & $(420)$ & $(411)$ & $(330)$ & $(321)$ & $(222)$\tabularnewline
\hline 
\hline 
$N_{\ul}$ & $0$ & $0$ & $0$ & $0$ & $0$ & $0$ & $-2/27$ & $-1$ & $-189/4$\tabularnewline
\hline 
$n_{\ul}$ & $0$ & $0$ & $0$ & $0$ & $0$ & $0$ & $0$ & $-1$ & $-48$\tabularnewline
\hline 
\end{tabular}\vspace{1em}\]
Finally, we note that the \GW invariants appear directly in the
Feynman integral, as follows. Write $-s^{-1}\cIs=-s^{-1}\tilde{\cI}_\su+\sum_{i=1}^{3}\cL_{i}\tilde{\pi}_{1}^{(i)},$
and apply $\partial_{\hat{R}_{0}}$ to \eqref{MKeqnIII18} to have
\[
(2\pi\ay)\frac{\pi_{1}}{\pi_{0}}=6\hat{R}_{0}+2\sum_{i=1}^{3}\cL_{i}-\sum_{\ell>0}\ell^{2}N_{\ell}\hat{Q}^{\ell}.
\]
The contribution $\tilde{\cI}_\su$ to the Feynman integral read 
\begin{align}
                                   -s^{-1}  \tilde{\cI}_\su(s) &=
                                                        2\pi\ay\langle
                                                        \hat{R},\omega\rangle
                                                        =
                                                        2\pi\ay(\pi_1\hat{R}_0
                                                        - \pi_0
                                                        \hat{R}_1 )
                                                        \\  &= \pi_0
                                                              \left(
                                                              2\pi\ay\frac{\pi_1}{\pi_0}\hat{R}_0
                                                              -
                                                              2\pi\ay\hat{R}_1
                                                              \right)                \,,                                           
                                                              \nonumber
                                     \end{align}
which using $\pi_1/\pi_0= \delta_s \hat R_1/\delta_s \hat R_0=\partial \hat R_1/\partial R_0$ leads to the
expression as a Legendre transform of $\hat R_1$
\begin{equation}
  \label{eq:Legendre}
\tilde{\cI}_\su(s)= -s\, 2\pi\ay\pi_0\left(
                                                             \frac{\partial\hat
                                                               R_1}{\partial\hat
                                                               R_0}\hat{R}_0
                                                              -
                                                              \hat{R}_1
                                                              \right)       \,.
\end{equation}
This expression  has the expansion
\begin{equation}\label{e:curious}
                             \tilde{\cI}_\su(s)= -s\,\pi_0 \left\{ 3\hat{R}_0^2 + \sum_{\ell>0} \ell(1-\ell\hat{R}_0 )N_{\ell}\hat{Q}^{\ell} \right\} , \end{equation}
The occurrence of the \GW numbers in this Feynman integral seems to be novel.



\subsection{The local Gromov-Witten numbers in the\\ all equal masses case}
\label{sec:grom-witt-invar-1}

In this subsection we compute the local \GW invariants for the all equal
masses case. The family of elliptic curves
$\cEs:=\{xyz-s (x+y+z)(xy+xz+yz)=0|(x,y,z)\in\mathbb P^2\}$ defines a
pencil of elliptic curves in $\mathbb P^2$ corresponding to a modular
family of elliptic curves $f : \cE_\su \to X_1(6) = \{\tau \in \mathbb C|
\Imm(\tau) > 0\}/\Gamma_1(6)$ (see~\cite{Bloch:2013tra}).

\subsubsection{The local \GW numbers}
\hspace{-1ex}In this case Proposition~\ref{MKpropIII2} applied to the case
$\xi_1=\xi_2=\xi_3=1$ implies that
\begin{equation}\label{e:Nk}
  6-\sum_{\ell\geq1} \ell^3 N_\ell \hat Q^\ell=  {6\over (9s-1)(s-1)\, \pi_0^3}\,,
\end{equation}
where the holomorphic period (about $s=0$) of~\eqref{MKeqnIII7} reads
\begin{equation}
\pi_0= \sum_{\ell\geq0} s^\ell \, \sum_{p_1+p_2+p_3=\ell} \left(\ell!\over p_1!p_2!p_3!\right)^2\,.
\end{equation}
and $\hat Q= \exp(\hat R_0)$ where $\hat  R_0$ in~\eqref{MKeqnIII11}
satisfies $sdR_0/ds=\pi_0$  and reads
\begin{equation}
  R_0=\ay\pi+\log s+  \sum_{\ell>0} {s^\ell\over\ell}  \sum_{p_1+p_2+p_3=\ell} \left(\ell!\over p_1!p_2!p_3!\right)^2\,.
\end{equation}
Taking for $s$ the  Hauptmodul used in~\cite{Bloch:2013tra}
\begin{equation}
  s_\su(q)^{-1}= 9+72\, {\eta(q^2)\over\eta(q^3)}\,\left(\eta(q^6)\over\eta(q)\right)^5  
\end{equation}
 we have 
\begin{equation}
  \pi_0(q) = \frac14\, {\theta_2^3(q)\over   \theta_2(q^3)}
\end{equation}
and
\begin{equation}
 \hat R_0(q)= \ay\pi+ \log q   -\sum_{n\geq1} (-1)^{n-1} \left(-3\over n\right)\,n\,\Li_1(q^n)\,,
\end{equation}
where $\left(-3\over n\right)=0,1,-1$ for $n\equiv 0,1,2\mod 3$.

From~\eqref{e:Nk} we compute the local \GW numbers  $N_\ell$ 
\begin{align}
	N_\ell/6 &=1,-\frac{7}{8},\frac{28}{27},-\frac{135}{64},\frac{626}{125},-\frac{751}
   {54},\frac{14407}{343},-\frac{69767}{512},\frac{339013}{729},-\frac{827191}{
   500},\\
   \notag &\quad\ \frac{8096474}{1331}, -\frac{367837}{16},\frac{195328680}{2197},-\frac{1
   37447647}{392},\frac{4746482528}{3375},\\
	\notag	&\quad -\frac{23447146631}{4096},\frac{11596
   2310342}{4913},-\frac{574107546859}{5832},\frac{2844914597656}{6859},\\
\notag &\quad -\frac{
   1410921149451}{800},\frac{10003681368433}{1323},\dots
\end{align}
or introducing $n_\ell$  the virtual number of degree $\ell$
rational curves using the Aspinwall-Morrison multiple cover formula~\cite{Aspinwall:1991ce,Voisin}

\begin{equation}
   N_\ell = \sum_{d|\ell} {1\over d^3} n_{\ell\over d}
\end{equation}
we have
\begin{align}
	n_k/6 &=1,-1,1,-2,5,-14,42,-136,465,-1655,6083,-22988,\\
	\notag &\quad\ 88907,-350637,1406365, -5724384,
   23603157,-98440995,\\
	\notag &\quad\ 414771045,-1763651230,7561361577,-32661478080,\\
	\notag &\quad\ 14204649044
   1,-621629198960,2736004885450,\\
	\notag &\quad\ -12105740577346,53824690388016,\dots
\end{align}

\subsubsection{Comparing the two expansions}
\label{sec:comparing-expansions}

We will show how to relate the $q$ and $Q$  expansions using  a
$\Gamma_1(6)$ modular transformation.
In the all equal masses case the sunset integral was  given by~\cite{Bloch:2013tra}
\begin{equation}
  \cIs(s) \equiv  {\varpi_r\over \pi} E_\su(q) \mod \textrm{period}  \,,
\end{equation}
with $E_\su(q)$ given in~\eqref{e:Esunset1mass}.
The expression is modulo periods of the elliptic curve, and $\varpi_r$
is the real period on the real axis $s>(\xi_1+\xi_2+\xi_3)^{-2}$ given
in~\eqref{e:wrq}.

The all equal masses case the sunset integral is equal to
$\tilde{\mathcal I}_\su$ in~\eqref{e:curious}
\begin{equation}
  \cIs(s) \equiv -s\left(\pi_0 \hat R_1- \pi_1 \hat R_0\right)\mod \textrm{period}  
\end{equation}
where $\pi_0$ is the holomorphic period around $s=0$  and $\pi_1$
is the other non-holomorphic period in~\eqref{MKeqnIII15.5}, and $\hat
R_1$
is such that $sd\hat R_1/ds=\pi_1$
of~\eqref{MKeqnIII18}.
The modular transformations $\tau\to-1/(6\tau)$ maps  the periods as
\begin{equation}
\begin{aligned}
	\varpi_r(-1/(6\tau))&=-6s_\su(\tau) (2\ay\pi\tau)\, \pi_0(\tau);\\
	\pi_1(-1/6\tau)&={3\tau-1\over 6}\,   s_\su(\tau)^{-1} \, \varpi_r(\tau)\,.
\end{aligned}
\end{equation}
The same modular transformation applied to the sunset integral leads
to the relation between the elliptic dilogarithm $E_\su(q)$ and the
regulator period
\begin{equation}\label{e:LegendreE}
36 \ay\tau\, E_\su (-1/(6\tau))=\pi^2+3\ay\pi\log(-q)+3\left( \hat R_1(\tau) - {\partial
  \hat R_1\over \partial \hat R_0} \, \hat R_0\right)\,.
\end{equation}
This shows that  $E_\su(q)$ is the Legendre transform of $\hat R_1(q)$ as
expected from the general different masses case in~\eqref{eq:Legendre}.
Using the $q$-expansion given above and using that $\partial
\hat R_1/\partial \hat R_0=\log(-q)$ we have
\begin{align}
	\hat   R_1(q)- {\partial \hat R_1\over\partial \hat R_0}\, \hat R_0 &= -\frac12\,
  \log(-q)^2\\
	\notag &\quad +\sum_{n\geq1} \, \left( \sum_{d|n} (-1)^d d^2 \left(-3\over d\right)\right) \Li_2(q^n)  \,.
\end{align}


%
\part{Appendices}
\appendix
\section{Theta functions}
 \label{sec:theta-functions}

In this appendix we recall standard results on Jacobi theta functions that
are used in the text. We use the notation $q=e^{2\pi\ay \tau}$ with
$\tau$ the period ratio chosen to lie in the upper-half-plane, and $x\in\IC^\times/q^\ZZ$
\begin{equation}\label{e:ThetaI}
\theta_1(x)
 :={q^{1\over 8}\over \ay} \left(x^{1\over2}-x^{-{1\over2}}\right)\,\prod_{n=1}^{\infty} (1-q^n)(1-q^nx)(1-q^n/x)\,,
\end{equation}
and
\begin{equation}\label{e:ThetaII}
\theta_2(x) 
:=q^{1\over 8}\left(x^{1\over2}+x^{-{1\over2}}\right)\,\prod_{n=1}^{\infty} (1-q^n)(1+q^nx)(1+q^n/x)\,,
\end{equation}
and
\begin{equation}\label{e:ThetaIII}
\theta_3(x)
:=\prod_{n=1}^{\infty} (1-q^n)(1+q^{n-1/2}x)(1+q^{n-1/2}/x)\,,
\end{equation}
and finally
\begin{equation}\label{e:ThetaIV}
\theta_4(x)
:=\prod_{n=1}^{\infty} (1-q^n)(1-q^{n-1/2}x)(1-q^{n-1/2}/x)\,.
\end{equation}
We will use the shorthand notation $\theta_a:=\theta_a(1)$
for $a=2,3,4$, or $\theta_\alpha(q)$ when needed.
A particular case of the  Jacobi identity is 
\begin{equation}\label{e:Jacobi}
\theta_3^2(v) \theta_3^2(u) +\theta_1^2(v) \theta_1^2(u)=\theta_2^2(v) \theta_2^2(u) +\theta_4^2(v) \theta_4^2(u)\,.
\end{equation}
 Applying this identity for $v=\exp(\ay\pi (a+b\tau))$ with $a,b\in\{0,1\}$ one
obtains the following quadratic relations satisfied by the theta functions 
\begin{equation}
  \label{eq:thetaRelations}
  \begin{pmatrix}
0& \theta_2^2& -\theta_3^2&\theta_4^2\cr
\theta_2^2&0&\theta_4^2& -\theta_3^2\cr
\theta_3^2&\theta_4^2&0&-\theta_2^2\cr
\theta_4^2&\theta_3^2&-\theta_2^2&0
  \end{pmatrix}
  \begin{pmatrix}
    \theta_1^2(u)\cr \theta_2^2(u)\cr \theta_3^2(u) \cr \theta_4^2(u) 
  \end{pmatrix}
=
\begin{pmatrix}
  0\cr0 \cr 0\cr 0
\end{pmatrix}\,.
\end{equation}

\section{The coefficients of the Picard-Fuchs equation}\label{sec:coeff}

In this appendix we give the explicit expressions for the coefficients
of the homogeneous polynomials used when deriving the sunset Picard-Fuchs equation.

\subsection{The coefficients $C_x$, $C_y$ and $C_z$}

The coefficients $C_x$, $C_y$ and $C_z$ are homogeneous polynomials of
degree 4 in $(x,y,z)$ of the form
\begin{align}
  C_x&= x y^2z C_x^{1,2,1}+ x^2 z^2 C_x^{2,0,2} +x^2 yz
         C_x^{2,1,1}+x^3z C_x^{3,0,1}\,,\nn\\[1.5ex]
  C_y&= x y z^2 C_y^{1,1,2}+ x y^2 z C_y^{1,2,1} +x^2 z^2
         C_y^{2,0,2}+x^2yz C_y^{2,1,1}\,,\\[1.5ex]
\nn  C_z&= x z^3 C_z^{1,0,3} + x y z^2 C_z^{1,1,2} +x y^2z
         C_z^{1,2,1}+x^2z^2 C_z^{2,0,2}+x^2yz C_z^{2,1,1}\,.
\end{align}

Their detailed expressions are given by  for $C_x$
\begin{align*}
&\quad\ 6 \prod_{i=1}^4(s\mu_i^2-1)  C_x \\
&= s x z \left(m_1^2 x (9 x+20 y)+3 m_2^2 y (6 x+y)+2 m_3^2 x (10 y-3
   z)\right)\\
&\quad +s^4 x z \left(m_1^4-2 m_1^2
   \left(m_2^2+m_3^2\right)+\left(m_2^2-m_3^2\right)^2\right)\\
&\qquad\times
   \left(m_1^4 x (x+y)+m_1^2 \left(m_2^2 \left(5 x^2+8 x y+3
    y^2\right)-m_3^2 x (5 x+2 (y+z))\right)\right.\\
&\qquad\quad +\left.\left(m_2^2-m_3^2\right)
   \left(3 m_2^2 y (x+y)-m_3^2 x (y-2 z)\right)\right)\\
&\quad-s^2 x z
   \left(m_1^4 x (17 x+18 y)\right.\\
&\qquad\qquad +m_1^2 \left(m_2^2 \left(13 x^2+46 x y+3
   y^2\right) +3 m_3^2 x (-3 x+4 y+2 z)\right)\\
&\qquad\qquad +\left.3 m_2^4 y (4 x+y)+m_2^2
   m_3^2 \left(10 x y-14 x z+9 y^2\right)+2 m_3^4 x (9 y-5
    z)\right)\\
&\quad +s^3 x z
   \left(m_1^6 x (7 x+4 y)\right.\\
	&\qquad\qquad +m_1^4 \left(m_2^2 \left(18 x^2+22 x y-3
   y^2\right)-2 m_3^2 x (5 x+2 y-7 z)\right)\\
&\qquad\qquad -m_1^2 \left(m_2^4
   \left(x^2-24 x y-30 y^2\right)\right.\\
   &\qquad\qquad\qquad\ +2 m_2^2 m_3^2 \left(-7 x^2-22 x y+2 x z+3
   y^2\right)\\
&\qquad\qquad\qquad\  +\left.m_3^4 x (13 x+4 y+28 z)\right)\\
&\qquad\qquad -\left(m_2^2-m_3^2\right)
   \left(m_2^4 y (2 x+3 y)+m_2^2 m_3^2 \left(2 x (y+5 z)+9
    y^2\right)\right.\\
	&\hspace{10em}+\left.\left.2   m_3^4 x (2 y-z)\right)\right) -7 x^2 y z\,,
\end{align*}
for $C_y$
\begin{align*}
&\quad\ 3 \prod_{i=1}^4(s\mu_i^2-1) C_y\\
 &=-2 s x y z \left(m_1^2 (3 x+2
   y)+3 m_2^2 y+m_3^2 (2 y+3 z)\right)\\
&\quad - 2 s^4
   x z \left(m_1^4-2 m_1^2
   \left(m_2^2+m_3^2\right)+\left(m_2^2-m_3^2\right)^2\right)\\
&\qquad\times
   \left(m_1^4 y (x+y)-m_1^2 \left(m_2^2 y (x+y)+m_3^2 \left(5 x
   y+6 x z+2 y^2+5 y z\right)\right)\right.\\
&\qquad\quad +\left.m_3^2 y \left(m_3^2-m_2^2\right)
   (y+z)\right)\\
&\quad +2 s^2 x z \left(5 m_1^4 x y+m_1^2 \left(m_2^2 y (7
   x+y)-3 m_3^2 (7 x y+6 x z+7 y z)\right)\right.\\
	&\hspace{5em}+\left.y \left(3 m_2^4 y+m_2^2
   m_3^2 (y+7 z)+5 m_3^4 z\right)\right)\\
&\quad -2 s^3 x z \left(m_1^6 y (x-2
   y)+m_1^4 \left(m_2^2 y (y-6 x)+m_3^2 (2 x (y-6 z)+y (2 y-19
   z))\right)\right.\\
	&\hspace{5em}+ m_1^2 \left(5 m_2^4 x y-2 m_2^2 m_3^2 (x (5 y+6
   z)+5 y (y+z))\right.\\
	&\hspace{7.5em}\left.+m_3^4 (2 y (y+z)-x (19 y+12 z))\right)\\
	&\hspace{5em} +\left.y \left(m_2^6 y+5
   m_2^4 m_3^2 z+m_2^2 m_3^4 (y-6 z)+m_3^6 (z-2
   y)\right)\right)+2 x y^2 z\,,
\end{align*}
and for $C_z$
\begin{align*}
 &\quad\ 6 \prod_{i=1}^4(s\mu_i^2-1) C_z\\
&=s x z \left(-2 m_1^2 (6 x y+9 x
                                        z-y z)-3 m_2^2 y (y-2 z)+m_3^2
                                        z (2 y+3 
   z)\right)\\
&\quad +s^4 x z \left(-\left(m_1^4-2 m_1^2
   \left(m_2^2+m_3^2\right)+\left(m_2^2-m_3^2\right)^2\right)\right)\\
&\qquad \times \left(m_1^4 z (2 x-y)\right.\\
&\qquad\quad +m_1^2 \left(m_2^2 \left(12 x y+10 x z+3
   y^2+10 y z\right)+m_3^2 z (2 x+2 y+5
   z)\right)\\
&\qquad\quad +\left.\left(m_2^2-m_3^2\right) (y+z) \left(3 m_2^2 y+m_3^2
   z\right)\right)\\
&\quad +s^2 x z \left(m_1^4 x (24 y+34 z)\right.\\
&\qquad\qquad +m_1^2 \left(m_2^2
   \left(12 x y+26 x z+3 y^2+32 y z\right)+3 m_3^2 \left(8 x y+6 x z+7
   z^2\right)\right)\\
&\qquad\qquad +\left.3 m_2^4 y (y-4 z)+m_2^2 m_3^2 \left(9 y^2-4 y z-7
   z^2\right)-5 m_3^4 z^2\right)\\
&\quad-s^3 x z \left(2 m_1^6 (6 x y+7 x z+y
   z)\right.\\
	&\qquad\qquad +m_1^4 \left(m_2^2 \left(48 x y+36 x z-3 y^2+26 y
    z\right)\right.\\
&\qquad\qquad\qquad\quad +\left.m_3^2
   (4 x (z-6 y)+z (19 z-2 y))\right)\\
&\qquad\qquad -2 m_1^2 \left(m_2^4 (x (6 y+z)-15 y
   (y+z))\right.\\
&\qquad\qquad\qquad\quad - m_2^2 m_3^2 \left(24 x y+26 x z-3 y^2+2 y z+5
   z^2\right)\\
&\qquad\qquad\qquad\quad \left. +m_3^4 (x (z-6 y)+z (y+z))\right)\\
&\qquad\qquad -\left(m_2^2-m_3^2\right)
   \left(m_2^4 y (3 y+10 z)+m_2^2 m_3^2 \left(9 y^2+4 y z+5
   z^2\right)\right.\\
	&\hspace{10em} \left.\left. +m_3^4 z (2 y-z)\right)\right)-x y z^2\,.
\end{align*}

\subsection{The coefficients $\tilde C_x$, $\tilde C_y$ and $\tilde
  C_z$}

 \enlargethispage{1em}
The coefficients $\tilde C_x$, $\tilde C_y$ and $\tilde
  C_z$ are homogeneous polynomials of degree one in $(x,y,z)$ with the
  detailed expressions given below.

Setting $N=3(s^2\prod_{i=1}^4
\mu_i-2sM_2+3)\prod_{i=1}^4(\mu_i^2s-1) $ we have for $\tilde C_x$
\begin{align*}
2N\tilde
	C_x&=
-s x \left(55 m_1^2+43 m_2^2+49 m_3^2\right)\\
&\quad +2 s^2 \left(21
   m_1^4 x+m_1^2 \left(m_2^2 (52 x+6 y)+6 m_3^2 (6
   x+z)\right)+m_2^4 x\right.\\
&\qquad\qquad +\left.2 m_2^2 m_3^2 (35 x-3 (y+z))+9 m_3^4
   x\right)\\
&\quad +s^5 \left(-\left(m_1^4-2 m_1^2
   \left(m_2^2+m_3^2\right)+\left(m_2^2-m_3^2\right)^2\right)\right)\\
&\qquad \times\left(3 m_1^6 x+m_1^4 \left(m_2^2 (5 x+12 y)+3 m_3^2
   (x+4 z)\right)\right.\\
&\qquad\qquad -\left.m_1^2 \left(m_2^4 (7 x+12 y)+22 m_2^2 m_3^2
   x+3 m_3^4 (x+4 z)\right)\right.\\
&\qquad\qquad -\left.\left(m_2^2-m_3^2\right) \left(m_2^4
   x+2 m_2^2 m_3^2 (x-6 y+6 z)-3 m_3^4 x\right)\right)
\end{align*}
\begin{align*}
& -2 s^3 \left(3
   m_1^6 x+m_1^4 \left(m_2^2 (23 x+18 y)+m_3^2 (13 x+18
   z)\right)\right.\\
&\qquad\quad +m_1^2 \left(7 m_2^4 (5 x-6 y)+70 m_2^2 m_3^2
   x+m_3^4 (23 x-42 z)\right)-21 m_2^6 x\cr
&\qquad\quad +\left.3 m_2^4 m_3^2 (17 x+14
   y-6 z)+m_2^2 m_3^4 (49 x-18 y+42 z)-15 m_3^6 x\right)\\
&+s^4
   \left(m_1^8 x-4 m_1^6 \left(m_2^2 (4 x-9 y)-9 m_3^2
   z\right)+2 m_1^4 \left(m_2^4 (11 x-28 y)\right.\right.\\
&\qquad+\left.2 m_2^2 m_3^2 (5
   x-19 (y+z))-m_3^4 (5 x+28 z)\right)+4 m_1^2 \left(m_2^6 (4 x+5
   y)\right.\\
&\qquad + \left.m_2^4 m_3^2 (32 x+19 z)+m_2^2 m_3^4 (20 x+19
   y)+m_3^6 (8 x+5 z)\right)\\
&\qquad -\left(m_2^2-m_3^2\right) \left(23
   m_2^6 x+m_2^4 m_3^2 (11 x+20 y+36 z)\right.\\
&\qquad -\left.\left.m_2^2 m_3^4 (11
   x+36 y+20 z)-23 m_3^6 x\right)\right)+21 x,
\end{align*}
for $\tilde C_y$
\begin{align*}
	N\tilde C_y&=
7 s y \left(m_1^2+m_2^2+m_3^2\right)\\
&\quad +s^2 \left(-6 m_1^4 y+2
   m_1^2 \left(m_2^2 (3 x-7 y)-3 m_3^2 (x+2 y+z)\right)\right.\\
&\qquad\quad\ -2\left.
   \left(m_2^4 y+m_2^2 m_3^2 (7 y-3 z)+3 m_3^4
   y\right)\right)\\
&\quad +s^5 \left(m_1^4-2 m_1^2
   \left(m_2^2+m_3^2\right)+\left(m_2^2-m_3^2\right)^2\right)\\
&\qquad\times
   \left(3 m_1^6 y+m_1^4 \left(m_2^2 (6 x-y)-3 m_3^2 (2 x+y-2
   z)\right)\right.\\
&\qquad\quad\ -m_1^2 \left(m_2^4 (6 x+y)-2 m_2^2 m_3^2 y+3
   m_3^4 (-2 x+y+2 z)\right)\\
&\qquad\quad\ -\left.\left(m_2^2-m_3^2\right)
   \left(m_2^4 y+2 m_2^2 m_3^2 (y+3 z)+3 m_3^4
   y\right)\right)\\
&\quad +s^3 \left(6 m_1^6 y+6 m_1^4 \left(7 m_2^2
   x-m_3^2 (7 x+y-3 z)\right)\right.\\
&\qquad\quad -2 m_1^2 \left(m_2^4 (9 x-4 y)-28
   m_2^2 m_3^2 y+3 m_3^4 (-3 x+y+7 z)\right)\\
&\qquad\quad -\left.6 m_2^6 y+2
   m_2^4 m_3^2 (4 y-9 z)+42 m_2^2 m_3^4 z+6 m_3^6
   y\right)\\
&\quad +s^4 \left(-7 m_1^8 y+2 m_1^6 \left(m_2^2 (5 x+7
   y)+m_3^2 (-5 x+10 y-9 z)\right)\right.\\
&\qquad\quad-2 m_1^4 \left(m_2^4 (14 x+5
   y)+m_2^2 m_3^2 (7 y-19 z)\right.\\
&\qquad\qquad\qquad \left.+m_3^4 (-14 x+13 y-14 z)\right)\\
&\qquad\quad +2
   m_1^2 \left(m_2^6 (9 x-y)-m_2^4 m_3^2 (19 x+10 y+19
   z)\right.\\
&\qquad\qquad\qquad \left.+m_2^2 m_3^4 (19 x-7 y)+ m_3^6 (-9 x+10 y-5
   z)\right)\\
&\qquad\quad +\left(m_2^2-m_3^2\right) \left(5 m_2^6 y+3 m_2^4
   m_3^2 (y+6 z)-m_2^2 m_3^4 (7 y+10 z)\right.\\
	&\hspace{9em}\left.\left.+7 m_3^6
    y\right)\right)-3 y ,
\end{align*}
and for $\tilde C_z$
\begin{align*}
2N\tilde C_z&=
-s z \left(m_1^2+13 m_2^2+7 m_3^2\right)\\
&\quad -2 s^2 \left(9 m_1^4 z+2
   m_1^2 \left(m_2^2 (3 x+3 y-5 z)-3 m_3^2 (x+4 z)\right)\right.\\
&\qquad\qquad  \left.-7
   m_2^4 z+2 m_2^2 m_3^2 (4 z-3 y)-3 m_3^4 z\right)
\end{align*}
\begin{align*}
	&+s^5
   \left(m_1^4-2 m_1^2
   \left(m_2^2+m_3^2\right)+\left(m_2^2-m_3^2\right)^2\right)\\
&\quad \times
   \left(3 m_1^6 z+m_1^4 \left(m_2^2 (-12 x+12 y+z)+3 m_3^2 (4
   x+z)\right)\right.\\
	&\qquad  +m_1^2 \left(m_2^4 (12 x-12 y-11 z)+10 m_2^2 m_3^2
   z-3 m_3^4 (4 x+z)\right)\\
&\qquad \left. +\left(m_2^2-m_3^2\right) \left(7
   m_2^4 z+2 m_2^2 m_3^2 (6 y+z)+3 m_3^4 z\right)\right)\\
&+2 s^3
   \left(15 m_1^6 z+m_1^4 \left(m_2^2 (-42 x+18 y-19 z)+m_3^2
   (42 x-23 z)\right)\right.\\
&\qquad\quad +m_1^2 \left(3 m_2^4 (6 x-14 y-5 z)+14 m_2^2
   m_3^2 z-m_3^4 (18 x+13 z)\right)\\
&\qquad\quad \left. +3 m_2^6 z+m_2^4 m_3^2
   (42 y+z) +m_2^2 m_3^4 (7 z-18 y)-3 m_3^6 z\right)\\
&+s^4 \left(-17
   m_1^8 z-4 m_1^6 \left(m_2^2 (5 x+9 y-9 z)-m_3^2 (5 x+2
   z)\right)\right.\\
&\qquad \left.+m_1^4 \left(m_2^4 (56 x+56 y-38 z)+4 m_2^2 m_3^2
   (19 y+14 z)+2 m_3^4 (13 z-28 x)\right)\right.\\
&\qquad -4 m_1^2 \left(m_2^6 (9 x+5
   y-9 z)+m_2^4 m_3^2 (10 z-19 x)\right.\\
&\qquad\qquad\quad \left. +m_2^2 m_3^4 (19 x+19 y+z) +3 m_3^6 (2 z-3 x)\right)\\
	&\qquad -\left(m_2^2-m_3^2\right) \left(17
   m_2^6 z-m_2^4 m_3^2 (20 y+23 z)\right.\\
	&\hspace{8em} \left.\left. +3 m_2^2 m_3^4 (12 y+5
   z)+ 7 m_3^6 z\right)\right)+3 z  \,.
\end{align*}



\end{document}